\newcommand{\TS}{\mathcal{T}}
\newcommand{\id}{\mathit{id}}
\newcommand{\vect}[1]{\mathbf{#1}}
\newtheorem{theorem}{Theorem}[section]
\newtheorem{proposition}[theorem]{Proposition}
\newtheorem{lemma}[theorem]{Lemma}
\newtheorem{example}[theorem]{Example}
\newcommand{\nats}{\mathbb{N}}
\newcommand{\infi}{\mbox{\it inf\/}}
\title{On (Omega-)Regular Model Checking}
\author{Axel Legay \\ Carnegie Mellon University \\ Computer Science
  Department \\ Pittsbugh, USA \and Pierre
  Wolper\\ Universi\'e de Li\`ege \\ Institut Montefiore, B28 \\ 4000
  Li\`ege, Belgium}
\keywords{(Omega-)Regular Model Checking,
          Transducers, 
	  Extrapolation,
	  Infinite-State System.
         }
\begin{document}

\begin{bottomstuff}
Authors' e-mail~: {\tt \{legay,pw\}@montefiore.ulg.ac.be}\\
Authors' website~:
	{\tt http://www.montefiore.ulg.ac.be/}$\sim$%
	{\tt \{legay,pw\}/}\\
Axel Legay is supported by a B.A.E.F. grant.\\ 
The present article is an improved version of \cite{BLW03},
\cite{BLW04a}, and \cite{Leg08}.
\medskip
\end{bottomstuff}

\begin{abstract}
Checking infinite-state systems is frequently done by encoding
infinite sets of states as regular languages. Computing such a regular
representation of, say, the set of reachable states of a system
requires acceleration techniques that can finitely compute the effect
of an unbounded number of transitions. Among the acceleration
techniques that have been proposed, one finds both specific and
generic techniques. Specific techniques exploit the particular type of
system being analyzed, e.g. a system manipulating queues or integers,
whereas generic techniques only assume that the transition relation is
represented by a finite-state transducer, which has to be iterated.
In this paper, we investigate the possibility of using generic
techniques in cases where only specific techniques have been exploited
so far. Finding that existing generic techniques are often not
applicable in cases easily handled by specific techniques, we have
developed a new approach to iterating transducers. This new approach
builds on earlier work, but exploits a number of new conceptual and
algorithmic ideas, often induced with the help of experiments, that
give it a broad scope, as well as good performances.
\end{abstract}

\maketitle

\section{Introduction}

At the heart of all the techniques that have been proposed for
exploring infinite state spaces, is a symbolic representation that can
finitely represent infinite sets of states. In early work on the
subject, this representation was domain specific, for example linear
constraints for sets of real vectors. For several years now, the idea
that a generic finite-automaton based representation could be used in
many settings has gained ground, starting with systems manipulating
queues and integers~\cite{WB95,BGWW97,WB98,WB00}, then moving to
parametric systems~\cite{KMMPS97}, and, finally, reaching systems
using real variables~\cite{BRW98,BJW01,BJW05,BW02}.

For exploring an infinite state space, one does not only need a finite
representation of infinite sets, but also techniques for finitely
computing the effect of an unbounded number of transitions.  Such
techniques can be domain specific or generic. Domain specific
techniques exploit the specific properties and representations of the
domain being considered and were, for instance, obtained for queues
in~\cite{BG96,BH97}, for integers and reals
in~\cite{Boi98,BW02,BHJ03,BH06,FL02,BFL04,BFLS05}, for pushdown system
in~\cite{FWW97,BEM97}, and for lossy channels in~\cite{AJ96}. Generic
techniques consider finite-automata representations and provide
algorithms that operate directly on this representation, mostly
disregarding the domain for which it is used.

Generic techniques appeared first in the context of the verification
of systems whose states can be encoded by {\em finite words}, such as
parametric systems. The idea used there is that a configuration being
a finite word, a transition relation is a relation on finite words, or
equivalently a language of pairs of finite words. If this language is
regular, it can be represented by a finite state automaton, more
specifically a finite-state {\em transducer}, and the problem then
becomes the one of iterating such a transducer. Finite state
transducers are quite powerful (the transition relation of a Turing
machine can be modeled by a finite-state transducer), the flip side of
the coin being that the iteration of such a transducer is neither
always computable, nor regular. Nevertheless, there are a number of
practically relevant cases in which the iteration of finite-state
transducers can be computed and remains finite-state. Identifying such
cases and developing (partial) algorithms for iterating finite-state
transducers has been the topic, referred to as ``Regular Model
Checking'', of a series of recent
papers~\cite{KMMPS97,BJNT00,BLW03,BLW04a,JN00,BHV04,Tou01,DLS02,AJNd03}.

The question that initiated the work presented in this paper is,
whether the generic techniques for iterating transducers could be
fruitfully applied in cases in which domain specific techniques had
been exclusively used so far. In particular, one of our goals was to
iterate finite-state transducers representing arithmetic relations
(see~\cite{BW02} for a survey). Beyond mere curiosity, the motivation
was to be able to iterate relations that are not in the form required
by the domain specific results, for instance disjunctive
relations. Initial results were very disappointing: the transducer for
an arithmetic relation as simple as $(x,x+1)$ could not be iterated by
existing generic techniques. However, looking for the roots of this
impossibility through a mix of experiments and theoretical work, and
taking a pragmatic approach to solving the problems discovered, we
were able to develop an approach to iterating transducers that easily
handles arithmetic relations, as well as many other
cases. Interestingly, it is by using a tool for manipulating automata
(LASH~\cite{LASH}), looking at examples beyond the reach of manual
simulation, and testing various algorithms that the right intuitions,
later to be validated by theoretical arguments, were developed.

%%{\footnote{The present work started in
%%    $2002$. At this time only the generic techniques of
%%    \cite{BJNT00,JN00,KMMPS97,Tou01,DLS02,AJNd03} were
%%    available.}}. 
The general approach that has been taken is similar to the one
of~\cite{Tou01} in the sense that, starting with a transducer $T$, we
compute powers $T^i$ of $T$ and attempt to generalize the sequence of
transducers obtained in order to capture its infinite union. This is
done by comparing successive powers of $T$ and attempting to
characterize the difference between powers of $T$ as a set of states
and transitions that are added. If this set of added states, or {\em
  increment}, is always the same, it can be inserted into a loop in
order to capture all powers of $T$. However, for arithmetic
transducers comparing $T^i$ with $T^{i+1}$ did not yield an increment
that could be repeated, though comparing $T^{2^i}$ with $T^{2^{i+1}}$
did. So, a first idea we used is not to always compare $T^i$ and
$T^{i+1}$, but to extract a sequence of samples from the sequence of
powers of the transducer, and work with this sequence of
samples. Given the binary encoding used for representing arithmetic
relations, sampling at powers of $2$ works well in this case, but the
sampling approach is general and different sample sequences can be
used in other cases. Now, if we only consider sample powers $T^{i_k}$
of the transducers and compute $\bigcup_k T^{i_k}$, this is not
necessarily equivalent to computing $\bigcup_i T^{i}$. Fortunately,
this problem is easily solved by considering the reflexive transducer,
i.e., $T_0 = T\cup T_{Id}$ where $T_{Id}$ is the identity transducer,
in which case working with an infinite subsequence of samples is
sufficient.

Once the automata in the sequence being considered are constructed and
compared, and that an increment corresponding to the difference
between successive elements has been identified, the next step is to
allow this increment to be repeated an arbitrary number of times by
incorporating it into a loop. There are some technical issues about
how to do this, but no major difficulty. Once the resulting
``extrapolated'' transducer has been obtained, one still needs to
check that the applied extrapolation is safe (contains all elements of
the sequence) and is precise (contains no more).  An easy to check
sufficient condition for the extrapolation to be safe is that it
remains unchanged when being composed with itself. Checking
preciseness is more delicate, but we have developed a procedure that
embodies a sufficient criterion for doing so. The idea is to check
that any behavior of the transducer with a given number $k$ of copies
of the increment, can be obtained by composing transducers with less
than $k$ copies of the increment. This is done by augmenting the
transducers to be checked with counters and proving that one can
restrict theses counters to a finite range, hence allowing
finite-state techniques to be used.

Taking advantage of the fact that our extrapolation technique works on
automata, not just on transducers, we consider computing reachable
states both by computing the closure of the transducer representing
the transition relation, and by repeatedly applying the transducer to
a set of initial states. The first approach yields a more general
object and is essential if one wishes to extend the method to the
verification of temporal properties
(\cite{BJNT00,PS00,AJNdS04,BLW04b}), but the second is often less
demanding from a computational point of view and can handle cases that
are out of reach for the first. Preciseness is not always possible to
check when working with state sets rather than transducers, but this
just amounts to saying that what is computed is possibly an
overapproximation of the set of reachable states, a situation which is
known to be pragmatically unproblematic.

Going further, the problem of using Regular Model Checking technique
for systems whose states are represented by infinite (omega) words has
been addressed. This makes the representation of sets of reals
possible as described in~\cite{BJW01,BHJ03}. To avoid the hard to
implement algorithms needed for some operations on infinite-word
automata, only omega-regular sets that can be defined by weak
deterministic B\"uchi automata~\cite{MSS86} are considered. This is of
course restrictive, but as is shown in~\cite{BJW01,BJW05}, it is
sufficient to handle sets of reals defined in the first-order theory
of linear constraints. Moreover using such a representation leads to
algorithms that are very similar to the ones used in the finite word
case, and allows us to work with reduced deterministic automata as a
normal form. Due to these advantages and properties, one can show that
the technique developed for the finite word case can directly be
adapted to weak deterministic B\"uchi automata up to algorithmic
modifications.

Our technique has been implemented in a tool called T(0)RMC (Tool for
(Omega-)Regular Model Checking), which has been tested on several
classes of infinite-state systems. It is worth mentioning that the
ability of T(O)RMC to extrapolate a sequence of automata has other
applications than solving the ($\omega$-)Regular Reachability
Problems. As an example, the tool has been used in a semi-algorithm to
compute the convex hull of a set of integer vectors
\cite{CLW07,CLW08}. T(O)RMC was also used to compute a symbolic
representation of the simulation relation between the states of
several classes of infinite-state systems with the aim of verifying
temporal properties\,\cite{BLW04b}.\\
\newline
{\bf Structure of the paper.}  The paper is structured as follows. In
Section \ref{section-background}, we recall the elementary definitions
on automata theory that will be used throughout the rest of the
paper. Section \ref{counter-section} introduces {\em counter-word
  automata}, a class of counter automata that will be used by our
preciseness technique. Section \ref{paper-model} presents the
($\omega$-)Regular Model Checking framework as well as the problems we
want to solve. Sections \ref{section-solving}, \ref{section-sampling},
\ref{section-increment}, \ref{section-extrapolation}, and
\ref{section-precision} describe our main results. Implementation and
experiments are discussed in Section
\ref{section-implementation}. Finally, Sections
\ref{section-comparison} and \ref{section-conclu} contain a comparison
with other works on the same topic and several directions for future
research, respectively.

\section{Background on Automataa Theory}
\label{section-background}
In this section, we introduce several notations, concepts, and
definitions that will be used throughout the rest of this paper. The
set of natural numbers is denoted by $\nats$, and $\nats_0$ is used
for $\nats\setminus {\lbrace}0{\rbrace}$.

\subsection{Relations}

Consider a set $S$, a set $S_1\,\subseteq\,S$, and two
binary{\footnote{The term ``binary'' will be dropped in the rest of
    the paper.}} relations $R_1, R_2\,\subseteq\,S\times S$. The
identity relation on $S$, denoted $R_\id^S$ (or $R_{id}$ when $S$ is
clear from the context) is the set ${\lbrace}(s,s)|s\in
S{\rbrace}$. The {\em image} of $S_1$ by $R_1$, denoted $R_1(S_1)$, is
the set ${\lbrace}s'\in S_1\mid (\exists s\in S_1)( (s,s')\in
R_1){\rbrace}$. The {\em composition} of $R_1$ with $R_2$, denoted
$R_2\circ R_1$, is the set ${\lbrace}(s,s')\mid (\exists
s'')((s,s'')\in R_1\wedge (s'',s')\in R_2){\rbrace}$. The $i$th {\em
  power} of $R_1$ ($i\in \nats_0$), denoted $R_1^i$, is the relation
obtained by composing $R_1$ with itself $i$ times. The {\em
  zero-power} of $R_1$, denoted $R_1^0$, corresponds to the identity
relation. The {\em transitive closure} of $R_1$, denoted $R_1^+$, is
given by $\bigcup_{i=1}^{i=+\infty}R_1^i$, its {\em reflexive
  transitive closure}, denoted $R^*$, is given by $R_1^+\cup
R_{\id}^S$. The {\em domain} of $R_1$, denoted $\it{Dom}(R_1)$, is
given by ${\lbrace}s\in S\mid (\exists s'\in S)((s,s')\in
R_1){\rbrace}$.

\subsection{Words and Languages}
\label{word-language}

An {\em alphabet} is a (nonempty) finite set of distinct symbols.  A
{\em finite word\/} of length $n$ over an alphabet $\Sigma$ is a
mapping $w:{\lbrace}0,{\dots},n-1{\rbrace}{\rightarrow}\Sigma$. An
{\em infinite word \/}, also called $\omega-$word, over $\Sigma$ is a
mapping $w:{\nats}{\rightarrow}\Sigma$. We denote by the term {\em
  word} either a finite word or an infinite word, depending on the
context. The {\em length} of the finite word $w$ is denoted by
$|w|$. A finite word $w$ of length $n$ is often represented by
$w=w(0){\cdots}w(n-1)$. An infinite word $w$ is often represented by
$w(0)w(1){\cdots}$ . The sets of finite and infinite words over
$\Sigma$ are denoted by $\Sigma^*$ and by $\Sigma^{\omega}$,
respectively. We define $\Sigma^{\infty}=\Sigma^*\cup
\Sigma^{\omega}$. A {\em finite-word (respectively infinite-word)
  language} over $\Sigma$ is a (possibly infinite) set of finite
(respectively, infinite) words over $\Sigma$. Consider $L_1$ and
$L_2$, two finite-word (resp. infinite-word) languages. The {\em
  union} of $L_1$ and $L_2$, denoted $L_1\cup L_2$, is the language
that contains all the words that belong either to $L_1$ or to
$L_2$. The {\em intersection} of $L_1$ and $L_2$, denoted $L_1\cap
L_2$, is the language that contains all the words that belong to both
$L_1$ and $L_2$. The {\em complement} of $L_1$, denoted
$\overline{L_1}$ is the language that contains all the words over
$\Sigma$ that do not belong to $L_1$.\\
\newline
We alos introduce {\em synchronous product} and {\em projection},
which are two operations needed to define relations between languages.

\begin{definition}
\label{definition-cartesian-product}
Consider $L_1$ and $L_2$ two languages over $\Sigma$.
\begin{itemize}
\item
If $L_1$ and $L_2$ are finite-word languages, the synchronous product
$L_1\bar{{\times}} L_2$ of $L_1$ and $L_2$ is defined as follows
\begin{center}
$L_1\bar{{\times}}
L_2={\lbrace}(w(0),w(0)'){\dots}(w(n),w(n)')\mid$\\$
w=w(0)w(1){\dots}w(n)\in L_1\,{\wedge}\,w'=w(0)'w(1)'{\dots}w(n)'\in
L_2{\rbrace}$.
\end{center} 
\item
If $L_1$ and $L_2$ are $\omega$-languages, the synchronous product
$L_1\bar{{\times}} L_2$ of $L_1$ and $L_2$ is defined as follows
\begin{center}
$L_1\bar{{\times}}
L_2={\lbrace}(w(0),w(0)')(w(1),w(1)'){\cdots}\mid$\\$
w=w(0)w(1){\dots}\in L_1\,{\wedge}\,w'=w(0)'w(1)'{\cdots}\in
L_2{\rbrace}$.
\end{center} 
\end{itemize}
The language $L_1\bar{{\times}} L_2$ is defined over the alphabet
$\Sigma^2$.
\end{definition}

\noindent
Definition \ref{definition-cartesian-product} directly generalizes to
synchronous products of more than two languages. Given two finite
(respectively, infinite) words $w_1, w_2$ (with $|w_1|=|w_2|$ if the
words are finite) and two languages $L_1$ and $L_2$ with
$L_1={\lbrace}w_1{\rbrace}$ and $L_2={\lbrace}w_2{\rbrace}$, we use
$w_1{\bar{\times}}w_2$ to denote the {\em unique} word in
$L_1{\bar{\times}}L_2$.

\begin{definition}
\label{def-projection}
Suppose $L$ a language over the alphabet $\Sigma^n$ and a natural
$1\,{\leq}\,i\,{\leq}n$. The projection of $L$ on all its components
except component $i$, denoted $\Pi_{\not= i}(L)$, is the language $L'$
such that
\begin{center}
$\Pi_{\not=i }(L)={\lbrace}w_1\bar{\times}\dots\bar{\times}w_{i-1}\bar{\times}w_{i+1}\bar{\times}\dots
\bar{\times} w_n \mid$\\$ (\exists
w_i)(w_1\bar{\times}\dots\bar{\times}w_{i-1}\bar{\times}w_i\bar{\times}w_{i+1}\bar{\times}\dots
\bar{\times} w_n\in L){\rbrace}$.
\end{center}
\end{definition}

\subsection{Automata}

\begin{definition}
\label{finite-word-automaton}
An automaton over $\Sigma$ is a tuple $A=(Q,\Sigma,Q_0,\triangle,F)$,
where
\begin{itemize}
\item
$Q$ is a finite set of {\em states\/},
\item
$\Sigma$ is a {\em finite} alphabet,
\item
$Q_0\,\subseteq\,Q$ is the set of {\em initial states\/},
\item
$\triangle\,\subseteq\, Q\times \Sigma \times Q$ is a finite {\em transition
relation\/}, and
\item
$F\,\subseteq\,Q$ is the set of accepting states (the states in
$Q\setminus F$ are the {\em nonaccepting} states).
\end{itemize}
\end{definition}

Let $A=(Q,\Sigma,Q_0,\triangle,F)$ be an automaton. If
$(q_1,a,q_2)\in \triangle$, then we say that there is a {\em
  transition} from $q_1$ (the {\em origin}) to $q_2$ (the {\em
  destination}) labeled by $a$. We sometimes abuse the notations, and
write $q_2\in \triangle(q_1,a)$ instead of $(q_1,a,q_2)\in
\triangle$. Two transitions $(q_1,a,q_2), (q_3,b,q_4)\in \triangle$
are {\em consecutive} if $q_2=q_3$. Given two states $q,q'\in Q$ and a
finite word $w\in \Sigma^*$, we write $(q,w,q')\in \triangle^*$ if
there exist states $q_0,\dots,q_{k-1}$ and $w_0,\dots,w_{k-2}\in
\Sigma$ such that $q_0=q$, $q_{k-1}=q'$, $w=w_0w_1\cdots w_{k-2}$, and
$(q_i,w_i,q_{i+1})\in \triangle$ for all $0\,{\leq}\,i<k-1$. Given two
states $q, q'\in Q$, we say that the state $q'$ is {\em reachable}
from $q$ in $A$ if $(q,a,q')\in \triangle^*$. The automaton $A$ is
{\em complete} if for each state $q\in Q$ and symbol $a\in \Sigma$,
there exists at least one state $q'\in Q$ such that $(q,a,q')\in
\triangle$. An automaton can easily be completed by adding an
extra nonaccepting state.\\
\newline
A {\em finite run} of $A$ on a finite word
$w:{\lbrace}0,{\dots},n-1{\rbrace}{\rightarrow}\Sigma$ is a labeling
$\rho:{\lbrace}0,{\dots},n{\rbrace}{\rightarrow}Q$ such that
$\rho(0)\in Q_0$, and
$(\forall{0\,{\leq}\,i\,{\leq}\,n-1})((\rho(i),w(i),\rho(i+1))\in
\triangle)$. A finite run $\rho$ is {\em accepting} for $w$ if
$\rho(n)\in F$. An {\em infinite run} of $A$ on an infinite word
$w:{\nats}{\rightarrow}\Sigma$ is a labeling
$\rho:\nats{\rightarrow}Q$ such that $\rho(0)\in Q_0$, and
$(\forall{0\,{\leq}\,i})((\rho(i),w(i),\rho(i+1))\in \triangle)$. An
infinite run $\rho$ is {\em accepting} for $w$ if $\infi(\rho)\cap F
\neq \emptyset$, where $\infi(\rho)$ is the set of states that are
visited infinitely often by $\rho$.\\
\newline
We distinguish between {\em finite-word automata} that are automata
accepting finite words, and {\em B\"uchi automata} that are automata
accepting infinite words. A finite-word automaton accepts a finite
word $w$ if there exists an accepting finite run for $w$ in this
automaton. A B\"uchi automaton accepts an infinite word $w$ if there
exists an accepting infinite run for $w$ in this automaton. The set of
words accepted by $A$ is the {\em language accepted by $A$}, and is
denoted $L(A)$. Any language that can be represented by a finite-word
(respectively, B\"uchi) automaton is said to be {\em regular}
(respectively, {\em $\omega$-regular}).\\
\newline
The automaton $A$ may behave nondeterministicaly on an input word,
since it may have many initial states and the transition relation may
specify many possible transitions for each state and symbol. If
$|Q_0|=1$ and for all state $q_1\in Q$ and symbol $a\in \Sigma$ there
is at most one state $q_2\in Q$ such that $(q_1,a,q_2)\in \triangle$,
then $A$ is {\em deterministic}. In order to emphasize this property,
a deterministic automaton is denoted as a tuple $(Q, \Sigma,
q_0,\delta,F)$, where $q_0$ is the unique initial state and
$\delta:Q\times \Sigma~{\rightarrow}~Q$ is a partial function deduced
from the transition relation by setting $\delta(q_1,a)=q_2$ if
$(q_1,a,q_2)\in \triangle$. Operations on languages directly translate
to operations on automata, and so do the notations.\\
\newline
One can decide weither the language accepted by a finite-word or a
B\"uchi automaton is empty or not. It is also known that finite-word
automata are closed under determinization, complementation, union,
projection, and intersection\,\cite{Hop71}. Moreover, finite-word
automata admit a minimal form, which is unique up to
isomorphism\,\cite{Hop71}.\\
\newline
Though the union, intersection, synchronous product, and projection of
B\"uchi automata can be computed efficiently, the complementation
operation requires intricate algorithms that not only are worst-case
exponential, but are also hard to implement and optimize (see
\cite{Var07a} for a survey). The core problem is that there are
B\"uchi automata that do not admit a deterministic/minimal form. To
working with infinite-word automata that do own the same properties as
finite-word automata, we will restrict ourselves to {\em weak\/}
automata~\cite{MSS86} defined hereafter.

\begin{definition}
For a B\"uchi automaton $A=(\Sigma,Q,q_0,\delta,F)$ to be weak, there
has to be partition of its state set $Q$ into disjoint subsets $Q_1,
\ldots, Q_m$ such that for each of the $Q_i$, either $Q_i \subseteq
F$, or $Q_i \cap F = \emptyset$, and there is a partial order $\leq$
on the sets $Q_1, \ldots, Q_m$ such that for every $q \in Q_i$ and
$q'\in Q_j$ for which, for some $a \in \Sigma$, $q'\in \delta(q,a)$
($q'=\delta(q,a)$ in the deterministic case), $Q_j \leq Q_i$.
\end{definition}

\noindent
A weak automaton is thus a B\"uchi automaton such that each of the
strongly connected components of its graph contains either only
accepting or only non-accepting states.\\
\newline
Not all $\omega$-regular languages can be accepted by deterministic
weak B\"uchi automata, nor even by nondeterministic weak
automata. However, there are algorithmic advantages to working with
weak automata : deterministic weak automata can be complemented simply
by inverting their accepting and non-accepting states; and there
exists a simple determinization procedure for weak
automata~\cite{Saf92}, which produces B\"uchi automata that are
deterministic, but generally not weak. Nevertheless, if the
represented language can be accepted by a deterministic weak
automaton, the result of the determinization procedure will be {\em
  inherently weak\/} according to the definition below~\cite{BJW01}
and thus easily transformed into a weak automaton.

\begin{definition}
A B\"uchi automaton is {\em inherently weak\/} if none of the
reachable strongly connected components of its transition graph
contain both accepting (visiting at least one accepting state) and
non-accepting (not visiting any accepting state) cycles.
\end{definition}

This gives us a pragmatic way of staying within the realm of
deterministic weak B\"uchi automata. We start with sets represented by
such automata. This is preserved by union, intersection, synchronous
product, and complementation operations. If a projection is needed,
the result is determinized by the known simple procedure. Then, either
the result is inherently weak and we can proceed, or it is not and we
are forced to use the classical algorithms for B\"uchi automata. The
latter cases might never occur, for instance if we are working with
automata representing sets of reals definable in the first-order
theory of linear constraints~\cite{BJW01}.\\
\newline
A final advantage of weak deterministic B\"uchi automata is that they
admit a minimal form, which is unique up to isomorphism~\cite{Loe01}.

\subsection{Relations on Automata States}

We will also use the following definitions.

\begin{definition}
\label{equivalence-relations}
Given two automata $A_1=(Q_1,\Sigma_1,Q_{01},\triangle_1,F_1)$ and
$A_2=(Q_2,$\\$\Sigma_2,Q_{02},\triangle_2,F_2)$, we define
\begin{itemize}
\item
the forward equivalence relation $E_f,\subseteq\,Q_1\times Q_2$,
which is an equivalence relation on states of $A_1$ and $A_2$ with
$(q_1,q_2)\,\in\,E_f$ iff $L_{q_1}^{F_1}(A_1)=L_{q_2}^{F_2}(A_2)$;
\item
the backward equivalence relation $E_b\,\subseteq\,Q_1\times Q_2$,
which is an equivalence relation on states of $A$ with
$(q_1,q_2)\,\in\,E_b$ iff
$L^{q_1}_{Q_{01}}(A_1)=L^{q_2}_{Q_{02}}(A_2)$.
\end{itemize}
\end{definition}

\begin{definition}
\label{iso-def}
Given two automata $A_1=(Q_1,\Sigma,Q_{01},\triangle_1,F_1)$ and
$A_2=(Q_2,\Sigma,$\\$Q_{02},\triangle_2,F_2)$, a relation
$R\,\subseteq\,Q_1\times Q_2$ is an isomorphism between $A_1$ and
$A_2$ if and only if
\begin{itemize}
\item
$R$ is a bijection,
\item
for each $a\in (\Sigma\cup {\lbrace}\epsilon{\rbrace})$ and
$q_1,q_2\in Q_1$, $(q_1,a,q_2)\in
\triangle_1\,\Leftrightarrow\,(R(q_1),a,R(q_2))\in \triangle_2$,
\item
for each $(q,q')\in R$, $q\in Q_{01}\,\Leftrightarrow\, q'\in Q_{02}$,
\item
for each $(q,q')\in R$, $q\in F_1\,\Leftrightarrow\, q'\in F_2$.
\end{itemize} 
\end{definition}

\subsection{Transducers}

In this paper, we will consider relations that are defined over sets
of words. We use the following definitions taken from
\cite{Nil01}. For a finite-word (respectively, infinite-word) language
$L$ over $\Sigma^n$, we denote by ${\lfloor}L{\rfloor}$ the
finite-word (respectively, infinite-word) relation over $\Sigma^n$
consisting of the set of tuples $(w_1,w_2,{\dots},w_n)$ such that
$w_1\bar{{\times}}w_2\bar{{\times}}\dots\bar{{\times}}w_n$ is in
$L$. The arity of such a relation is $n$. Note that for $n=1$, we have
that $L={\lfloor}L{\rfloor}$. The relation $R_{id}$ is the {\em
  identity relation}, i.e.,
$R_{id}={\lbrace}(w_1,w_2,{\dots},w_n)|w_1=w_2={\dots}=w_n{\rbrace}$. A
relation $R$ defined over $\Sigma^n$ is \\{\em ($\omega$-)regular} if
there exists a ($\omega$-)regular language $L$ over $\Sigma^n$ such
that ${\lfloor}L{\rfloor}=R$.\\
\newline
We now introduce transducers that are automata for
representing ($\omega$-)regular relations over $\Sigma^2$.

\begin{definition}
\label{definition-transdu}
A transducer over $\Sigma^2$ is an automaton $T$ over $\Sigma^2$ given
by $(Q,\Sigma^2,$\\$ Q_0,\triangle, F)$, where
\begin{itemize}
\item
$Q$ is the {\em finite} set of {\em states\/},
\item
$\Sigma^2$ is the {\em finite} alphabet,
\item
$Q_0\,\subseteq\,Q$ is the set of {\em initial states\/},
\item
$\triangle: Q\times \Sigma^2\times
Q$ is the {\em transition relation\/}, and
\item
$F\,\subseteq\,Q$ is the set of accepting states (the states that are not
in $F$ are the {\em nonaccepting} states).
\end{itemize}
\end{definition}

\noindent
Given an alphabet $\Sigma$, the transducer representing the identity
relation over $\Sigma^2$ is denoted $T_{id}^{\Sigma}$ (or $T_{id}$
when $\Sigma$ is clear from the context). All the concepts and
operations defined for finite automata can be used with
transducers. The only reason to particularize this class of automata
is that some operations, such as composition, are specific to
relations. In the sequel, we use the term ``transducer'' instead of
``automaton'' when using the automaton as a representation of a
relation rather than as a representation of a language. We sometimes
abuse the notations and write $(w_1,w_2)\in T$ instead of
$(w_1,w_2)\in {\lfloor}L(T){\rfloor}$. Given a pair $(w_1,w_2)\in T$,
$w_1$ is the {\em input word}, and $w_2$ is the {\em output word}. The
transducers we consider here are often called {\em
  structure-preserving}. Indeed, when following a transition, a symbol
of the input word is replaced by exactly one symbol of the output
word.

\begin{example}
\label{ex-xx1}
If positive integers are encoded in binary with an arbitrary number of
leading $0$'s allowed, and negative numbers are represented using
$2$'s complement allowing for an arbitrary number of leading $1$'s,
the transducer of Figure~\ref{fig-xx1a} represents the relation
$(x,x+1) \cup(x,x)$ (see~\cite{BW02} for a full description of the
encoding).

\begin{figure}
\begin{center}
\includegraphics{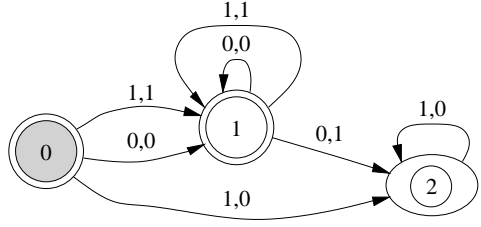}
\end{center}
\caption{A transducer for $(x,x+1)\cup (x,x)$. The initial state of
  the automaton is colored in gray, and the final state is surrounded
  by a double circle (this convention will be followed throughout the
  rest of the paper).}
\label{fig-xx1a}
\end{figure}
\end{example}

Given two transducers $T_1$ and $T_2$ over the alphabet $\Sigma$ that
represents two relations $R_1$ and $R_2$, respectively. The {\em
  composition} of $T_1$ by $T_2$, denoted $T_2\circ T_1$ is the
transducer that represents the relation $R_2\circ R_1$. We denote by
$T_1^i$ ($i\in \nats_0$) the transducer that represents the relation
$R_1^i$. The {\em transitive closure} of $T$ is
$T^+=\bigcup_{i=1}^{\infty}T^i$; its {\em reflexive transitive
  closure} is $T^*=T^+\cup T_{id}$. The transducer $T$ is {\em
  reflexive} if and only if $L(T_{id})\,\subseteq\,L(T)$. Given an
automaton $A$ over $\Sigma$ that represents a set $S$, we denote by
$T(A)$ the automaton representing the {\em image} of $A$ by $T$, i.e.,
an automaton for the set $R(S)$.\\
\newline
Let $T_1$ and $T_2$ be two finite-word (respectively, B\"uchi)
transducers defined over $\Sigma^2$ and let $A$ be a finite-word
automaton (respectively, B\"uchi) automaton defined over $\Sigma$. We
observe that $T_2\circ T_1 = \pi_{\not= 2}{\lbrack}(T_1 \bar{{\times}}
T_{id}^\Sigma)\cap (T_{id}^\Sigma \bar{{\times}} T_2){\rbrack}$ and
$T(A)=\pi_{\not= 1}{\lbrack}(A^{\Sigma} \bar{{\times}} \Sigma)\cap
T{\rbrack}$, where $A^{\Sigma}$ is an automaton accepting $\Sigma^*$
(respectively, $\Sigma^{\omega}$). As a consequence, the composition
of two finite-word ((weak) B\"uchi) transducers is a finite-word
transducer. However, the composition of two deterministic weak B\"uchi
transducer is a weak B\"uchi transducer whose deterministic version
may not be weak. A same observation can be made about the composition
of a transducer with an automaton.

\section{Counter Automata}
\label{counter-section}

We introduce {\em counter-word automata}, a class of automata whose
states are augmented by a vector of counters. Counter-word automata
are intended to be used in our procedure for checking the preciseness
of an extrapolation. All the concepts presented in this section are
thus developped for this purpose.

\subsection{Definitions}

We start with the definition of a counter automaton.

\begin{definition}
\label{counter-automaton}
A counter-word automaton (counter automaton for short) over an
alphabet $\Sigma$ is a tuple
$A_c=(n,\vect{c},Q,\Sigma,Q_0,\triangle,F)$, where
\begin{itemize}
\itemsep0cm
\item
$n\in \nats$ is the counter {\em dimension} of $A$,
\item
$\vect{c}=(c_1,{\dots},c_{n})$ is a {\em vector of counters whose
  values range over the natural numbers}.  A {\em counter valuation}
  $\vect{v}\in \nats^n$ for $\vect{c}$ is a vector of natural numbers,
  where the $ith$ component of $\vect{v}$ assigns a value to $c_i$,
\item
$Q$ is a set of {\em states\/} (unless stated otherwise, $Q$ is
  assumed to be finite),
\item
$\Sigma$ is a {\em finite} alphabet,
\item
$Q_0\,\subseteq\,Q$ is a set of {\em initial states\/},
\item
$\triangle\,\subseteq\, Q\times (\Sigma \times \nats^n)\times Q$ is a
finite {\em transition relation\/}, and
\item
$F\,\subseteq\,Q$ is a set of accepting states.
\end{itemize}
\end{definition}

\noindent
Let $A_c=(n,\vect{c},Q,\Sigma,Q_0,\triangle,F)$ be a counter
automaton. If $(q_1,(a,\vect{v}),q_2)\in \triangle$, then we say that
there is a {\em transition} from $q_1$ (the {\em origin}) to $q_2$
(the {\em destination}) labeled by $a$, and associated to the {\em
counter valuation} $\vect{v}$. The initial value of each counter is
$0$, and each time a transition is followed, the current values of the
counters are incremented with the counter valuation associated to the
transition. Given a counter automaton
$A_c=(n,\vect{c},Q,\Sigma,Q_0,\triangle,F)$, the {\em maximal
increment value} of $A_c$ is the smallest $d\in \nats$ such that
$\triangle\,\subseteq\, Q\times (\Sigma \times
{\lbrack}0,d{\rbrack}^n)\times Q$. Counter automata being finite
structures, the maximal increment value can always be computed by
enumerating the elements of the transition relation.  As finite
automata, counter automata are graphically represented with
edge-labeled directed graphs. We emphasize the counter increment
vector associated to each transition by preceding it with the symbol
``+''.

Our aim is to associate counter valuations to the words accepted by a
counter automaton. For doing so, we first define a notion of accepted
language that does not take the counters into account. We propose the
following definition.

\begin{definition}
Let $A_c=(n,\vect{c},Q,\Sigma,Q_0,\triangle,F)$ be a counter
automaton. The {\em counterless} automaton corresponding to $A_c$ is
the finite automaton $A=(Q,\Sigma,Q_0,$\\$\triangle',F)$, where
\begin{center}
$\triangle'={\lbrace}(q,a,q')\in Q'\times \Sigma\times Q'\mid
(\exists \vect{v}\in \nats^n)((q,(a,\vect{v}),q')\in
\triangle){\rbrace}$.
\end{center}
\end{definition}

\begin{definition}
The {\em language} accepted by a counter automaton $A_c$, denoted
$L(A_c)$, is the language accepted by its corresponding counterless
automaton. If $w\in L(A_c)$, then we say that $w$ is {\em accepted} by
$A_c$.
\end{definition}

We now describe how and when a counter automaton can assign counter
values to the words it accepts. Let
$A_c=(n,\vect{c},Q,\Sigma,Q_0,\triangle,F)$ be a counter
automaton. Assume first that $A_c$ describes a set of finite words. A
{\em run} of $A_c$ on a finite word
$w:{\lbrace}0,{\dots},m-1{\rbrace}{\rightarrow}\Sigma$ is a labeling
$\rho:{\lbrace}0,\dots, m{\rbrace}\rightarrow (Q\times \nats^n)$ such
that
\begin{enumerate}
\itemsep0cm
\item
$\rho(0)\in (Q_0\times \vect{0})$, and
\item
$(\forall 0\,\leq\,i\,\leq\,m-1)$,
$\rho(i+1)=(q_{i+1},\vect{v_{i+1}})$ if and only if
$\rho(i)=(q_i,\vect{v_i})$ and there exists
$(q_{i},(w(i),\vect{v}),q_{i+1})\in \triangle$ with
$\vect{v_{i+1}}=\vect{v_i}+\vect{v}$.
\end{enumerate}

\noindent
%%Let $\rho(m-1) = (q_f\times {\lbrace}\vect{v}{\rbrace})$, the {\em
%%  counter valuation} associated to $\rho$ is given by $\vect{v}$. 
Let $\rho(m-1) = (q_f\times {\lbrace}\vect{v}{\rbrace})$. If $q_f\in
F$, then we say that $\rho$ is an {\em accepting run} and that $w$ is
{\em accepted} by $A_c$ with the counter valuation
$\vect{v}$. Otherwise $\rho$ is {\em rejecting} for $w$. The automaton
$A_c$ being a finite-word automaton, we can always associate at least
one counter valuation to each word $w\in L(A_c)$. Observe that if the
counterless automaton of $A_c$ behaves non deterministically on $w$,
then this word may be associated to several counter valuations. There
can be accepting and nonaccepting runs that assign the same counter
valuation to $w$.

%%\begin{figure}[t]
%%\centering 
%%\includegraphics[width=7.5cm]{}
%%\label{ex-counter0}
%%\caption{Finite-word counter automaton for Example \ref{ref-counter1}.}
%%\label{ex-counters1}
%%\end{figure}

%%\begin{example}
%%\label{ref-counter1}
%%Consider the finite-word counter automaton given in Figure
%%\ref{ex-counters1}. The word $aa$ can be associated to $(5,3)$ or to $(6,7)$.
%%\end{example}

We now switch to the case of infinite words. A run of $A_c$ on an
infinite word $w:\nats{\rightarrow}\Sigma$ is a labeling
$\rho:\nats\rightarrow (Q\times
\nats^n)$ such that

\begin{enumerate}
\itemsep0cm
\item
$\rho(0)\in (Q_0\times \vect{0})$, and
\item
$(\forall 0\,\leq\,i)$, $\rho(i+1)=(q_{i+1},\vect{v_{i+1}})$
if and only if $\rho(i)=(q_i,\vect{v_i})$ and there exists
$(q_{i},(w(i),\vect{v}),q_{i+1})\in
\triangle$ with $\vect{v_{i+1}}=\vect{v_i}+\vect{v}$.
\end{enumerate}

Contrary to the finite-word case, it is generally not possible to
associate a counter valuation to $\rho$. Indeed, there could be the
case that the counters are incremented an unbounded number of
times. There are however sub-classes of infinite-word counter automata
for which it is always possible to assign a counter valuation to each
of its runs. This is illustrated with the following definition.

\begin{definition}
Let $A_c=(n,Q,\Sigma,Q_{0},\triangle,F)$ be a weak B\"uchi counter
automaton. We say that $A_c$ is {\em run-bounded} if for each of
its accepting strongly connected components $S\,\subseteq\,F$ and
states $q_1,q_2\in S$, any transition that goes from $q_1$ to $q_2$ is
associated with the counter valuation $\vect{0}$.
\end{definition}

\noindent
%%The {\em counter valuation} associated to $\rho$ is the
%%vector $\vect{v}$ such that there exists $i{\geq}0$ and for each
%%$j{\geq}i$, $\rho(j)\in (Q\times {\lbrace}\vect{v}{\rbrace})$.
The structure of a run-bounded weak B\"uchi counter automaton ensures
that for each of its runs, after having followed a finite number of
transitions, the values of the counters are no longer
incremented. Hence, one can reason on a finite prefix of the run to
deduce its counter valuation. Let $A_c=(n,Q,\Sigma,Q_{0},\triangle,F)$
be a run-bounded weak B\"uchi counter automaton and $\rho$ be one of
its runs. We say that $\rho$ is an {\em accepting} run and that $w$ is
{\em accepted} by $A_c$ with the counter valuation $\vect{v}$ if and
only if $\infi(\rho)\cap (F\times {\lbrace}\vect{v}{\rbrace})\not=
\emptyset$, where $\infi(\rho)$ is the set of configurations that
appear infinitely often in $\rho$. Otherwise $\rho$ is {\em rejecting}
for $w$.\\
\newline
In the rest of this paper, we will only consider finite-word and
run-bounded weak B\"uchi counter automaton. We can now define a notion
of counter language, which takes the counters into account.

\begin{definition}
The {\em counter language} of a counter automaton $A_c$,
denoted ${\cal{L}}(A_c)$, is the set of pairs $(w,\vect{v})$ such that
$w$ can be accepted by $A_c$ with counter valuation $\vect{v}$.
\end{definition}

Observe that the class of counter-word automata is particular with
respect to existing classes of counter automata{\footnote{As an
    example, we cannot test the values of the counters.}} such as
reversal bounded counter automata \cite{Iba78}, constraint automata
\cite{HR98}, Parikh automata \cite{KR03}, or weighted automata
\cite{Moh03}. Indeed, counter-word automata use the counter part of
the automaton to assign counter valuations to a word when this word is
accepted by the automaton, rather than to restrict the language
accepted by the automaton. Introducing constraints on the counters
before the word is accepted{\footnote{As an example, one could
    associate constraints on each transition.}} generally leads to
more powerful models{\footnote{As an example, models that can
    recognize nonregular languages\,\cite{KR03}.}} for which most
problems are undecidable. The expressiveness of those models is not
needed for the practical applications we considered in the paper.

%%At
%%the end of the section, we will use the counters to restrict the set
%%of words accepted by a counter-word automaton. However, unlike most of
%%other models, those restrictions will not be incorporated in the
%%syntax of the automaton, but rather expressed with classical
%%automata-based constructions.
   
\subsection{Graph-Based Operations}

In this section, the operations of intersection and composition
defined for finite automata are extended to counter automata. We have
the following definitions.

\begin{definition}
Let $A_{c_1}=(n_1,\vect{c_1},Q_1,\Sigma,Q_{01},\triangle_1,F_1)$ and
$A_{c_2}=(n_2,\vect{c_2},Q_2,\Sigma,$\\$Q_{02},\triangle_2,F_2)$ be
two finite-word (respectively, run-bounded weak B\"uchi) counter automata. The
{\em counter-intersection} between $A_{c_1}$ and $A_{c_2}$, denoted
$A_{c_1}\cap_c A_{c_2}$, is the finite-word (respectively, run-bounded weak
B\"uchi) counter automaton $A_c=(n_1+n_2,\vect{c_1}\times
\vect{c_2},Q,\Sigma,Q_0,\triangle,F)$ with $L(A_c)=L(A_{c_1})\cap
L(A_{c_2})$ and ${\cal{L}}(A_c)={\lbrace}(w,\vect{v})\in
\Sigma^{\infty}\times \nats^{n_1+n_2}\mid (\exists (w,\vect{v_1})\in
      {\cal{L}}(A_{c_1}))(\exists (w,\vect{v_2})\in
      {\cal{L}}(A_{c_2}))(\vect{v}=\vect{v_1}\times
      \vect{v_2}){\rbrace}$.
\end{definition}

\begin{definition}
Let $T_{c_1}=(n_1,\vect{c_1},Q_1,\Sigma^2,Q_{01},\triangle_1,F_1)$ and
$T_{c_2}=(n_2,\vect{c_2},Q_2,\Sigma^2,$\\$Q_{02},\triangle_2,F_2)$ be
two finite-word (respectively, run-bounded weak B\"uchi) counter
transducers. The {\em counter-composition} of $T_{c_1}$ by $T_{c_2}$,
denoted $T_{c_2}\circ_c T_{c_1}$, is the finite-word
(respectively, run-bounded weak B\"uchi) counter transducer
$T_c=(n_1+n_2,\vect{c_1}{\times}\vect{c_2},Q,\Sigma^2, Q_0,\triangle,F)$,
with $L(T_c)=L(T_{2}\circ T_{1})$ and
${\cal{L}}(T_c)={\lbrace}(w,\vect{v})\in \Sigma^{\infty}\times
\nats^{n_1+n_2}\mid (\exists (w_1,\vect{v_1})\in
     {\cal{L}}(T_{c_1}))(\exists (w_2,\vect{v_2})\in
     {\cal{L}}(T_{c_2}))(\vect{v}=\vect{v_1}\times \vect{v_2}\wedge
     w=w_2\circ w_1){\rbrace}$.
\end{definition}

\begin{definition}
Let $T_1=(Q_1,\Sigma^2,Q_{01},\triangle_1,F_1)$ be a finite-word
(respectively, run-bounded weak B\"uchi) transducer, and
$A_{c_2}=(n_2,\vect{c_2},Q_2,\Sigma,Q_{02},\triangle_2,F_2)$ be a
finite-word (respectively, run-bounded weak B\"uchi) counter
automaton. The {\em counter-image} of $A_{c_2}$ by $T_1$, denoted
$T_1(A_{c_2})$, is the finite-word (respectively, B\"uchi) counter
automaton $A_{c}=(n_2,\vect{c_2},Q,\Sigma,Q_0,\triangle,F)$, where
$L(A_{c})=L(T_{1}(A_{c_2}))$ and
${\cal{L}}(A_{c})={\lbrace}(w,\vect{v_2})\in \Sigma^{\infty}\times
\nats^{n_2}\mid (\exists w_1\in L(T_1))(\exists (w_2,\vect{v_2})\in
     {\cal{L}}(A_{c_2}))(w=w_2\circ w_1){\rbrace}$.
\end{definition}

\subsection{Counter-Based Operations}
\label{counter-manipulations}

Let $A_c$ be a n-dimensional counter automaton over the alphabet
$\Sigma$, and $d$ its maximal increment value. The {\em extended
automaton} of $A_c$, denoted $(A_c)^e$, is the finite automaton
(without counters) obtained from $A_c$ by augmenting the label of each
of its transitions with its corresponding counter valuation. We have
the following definition.

\begin{definition}
Let $A_c=(n,\vect{c},Q,\Sigma,Q_0,\triangle,F)$ be a counter automaton
whose maximal increment value is $d$. The {\em extended} automaton
corresponding to $A_c$ is the finite automaton
$A=(Q,\Sigma',Q_0,\triangle',F)$, where
\begin{itemize}
\item
$\Sigma'=\Sigma\times {\lbrack}0,d{\rbrack}^n$, and 
\item
$\triangle'={\lbrace}(q,a',q')\in Q'\times \Sigma'\times Q'\mid
(\exists \vect{v}\in \nats^n)((q,(a,\vect{v}),q')\in
\triangle\wedge a'=a\times \vect{v}){\rbrace}$.
\end{itemize}
\end{definition} 

\noindent
A n-dimensional counter automaton over an alphabet $\Sigma$ and whose
maximal increment value is $d$ can be viewed as a finite automaton
over an alphabet $\Sigma\times {\lbrack}0,d{\rbrack}^n$ and,
alternatively, a finite automaton over an alphabet $\Sigma\times
{\lbrack}0,d{\rbrack}^n$ can be viewed as a n-dimensional counter
automaton over an alphabet $\Sigma$ and whose maximal increment value
is $d$. The alphabet $\Sigma\times {\lbrack}0,d{\rbrack}^n$ is
referred to as the {\em extended alphabet} of $A_c$.\\
\newline
If $A_c$ is a finite-word counter automaton, then we say that it is
{\em universal} if and only if $L((A_c)^e)=(\Sigma\times
{\lbrack}0,d{\rbrack}^n)^*$. If $A_c$ is a run-bounded weak B\"uchi
counter automaton, then it is {\em universal} if and only if $L((A_c)^e)=
(\Sigma\times {\lbrack}0,d{\rbrack}^n)^*(\Sigma\times 0)^\omega$.

\begin{definition}
Consider two counter automata $A_{c_1}$ and $A_{c_2}$ of same
dimensions. The {\em extended intersection (respectively, union)}
between $A_{c_1}$ and $A_{c_2}$, denoted $A_{c_1}\cap_e A_{c_2}$
(respectively, $A_{c_1}\cup_e A_{c_2}$), is a counter automaton $A_c$
such that $(A_c)^e=(A_{c_1})^e\cap (A_{c_2})^e$ (respectively,
$(A_c)^e=(A_{c_1})^e\cup (A_{c_2})^e$).
\end{definition}

The extended intersection (respectively, union) of two counter automata
can easily be computed by applying a classical intersection
(respectively, union) algorithm to their extended version. We also have
the following proposition.

\begin{proposition}
The extended intersection/union of two run-bounded weak B\"uchi counter
automata is a run-bounded weak B\"uchi counter automaton.
\end{proposition}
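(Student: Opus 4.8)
The plan is to verify the two defining properties separately: that the constructed counter automaton $A_c$ is \emph{weak}, and that it is \emph{run-bounded}. Since the extended intersection (resp.\ union) is obtained by running a classical intersection (resp.\ union) algorithm on the extended automata $(A_{c_1})^e$ and $(A_{c_2})^e$ over the common alphabet $\Sigma\times[0,d]^n$ with $d=\max(d_1,d_2)$, and since the SCC structure of $(A_c)^e$ coincides with that of $A_c$ (augmenting a transition label with its counter valuation changes neither the transitions nor which states are mutually reachable), I would argue entirely at the level of the underlying transition graphs and invoke the characterization stated earlier in the excerpt: a B\"uchi automaton is weak iff each of its strongly connected components (SCCs) is homogeneous, i.e.\ all accepting or all non-accepting. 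For the intersection I would use the plain product construction with state set $Q_1\times Q_2$ and $F=\{(q_1,q_2)\mid q_1\in F_1\wedge q_2\in F_2\}$, rather than the flag-based B\"uchi product; for the union I would use the disjoint union, which is available because our counter automata may carry several initial states.

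First I would establish weakness. For the union this is immediate: the disjoint union has no transitions between the two copies, so every SCC lies inside a single component and is therefore homogeneous because that component is weak. For the intersection, I would first prove the projection fact that every SCC $C$ of the product maps, under $\pi_i$, into a single SCC $C_i$ of $A_{c_i}$: if $(q_1,q_2)$ and $(q_1',q_2')$ both lie in $C$, a round trip between them in the product projects to a round trip between the corresponding component states in $A_{c_i}$, placing them in one SCC. Homogeneity of $C$ then follows from that of $C_1$ and $C_2$: if $C_1\subseteq F_1$ and $C_2\subseteq F_2$ then $\pi_i(C)\subseteq C_i\subseteq F_i$ gives $C\subseteq F$, while otherwise some $C_i$ avoids $F_i$ and hence $C\cap F=\emptyset$. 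So every SCC is homogeneous and $A_c$ is weak. The same projection fact incidentally shows the plain product computes the correct intersection language, since a weak run stabilizes in a single SCC and that SCC is accepting iff both projected SCCs are.

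Next I would check run-boundedness. Let $C$ be an accepting SCC of $A_c$, so $C\subseteq F$. For the union, $C$ is an accepting SCC of one component, so all its internal transitions carry $\vect{0}$ by that component's run-boundedness. For the intersection, take any transition internal to $C$, say from $(q_1,q_2)$ to $(q_1',q_2')$ labeled $(a,\vect{v})$. By the product construction this requires $(q_1,(a,\vect{v}),q_1')\in\triangle_1$, and by the projection fact $q_1,q_1'$ lie in a common SCC $C_1$ with $C_1\subseteq F_1$, i.e.\ an accepting SCC of the run-bounded automaton $A_{c_1}$. Its run-boundedness forces $\vect{v}=\vect{0}$. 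Hence every transition internal to $C$ carries $\vect{0}$, which is exactly run-boundedness of $A_c$.

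The main obstacle I anticipate is the weakness step for the intersection, and specifically justifying that the plain product, without the auxiliary B\"uchi acceptance flag, both computes the right language and stays weak; the flag-based product used for arbitrary B\"uchi automata would break the SCC homogeneity on which the whole argument rests. Once the projection fact $\pi_i(C)\subseteq C_i$ is set up cleanly, both the homogeneity argument and the run-boundedness argument are short applications of it together with the stated ``weak $\Leftrightarrow$ homogeneous SCCs'' characterization, so the crux of the proof is that single lemma.
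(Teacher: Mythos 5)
The paper states this proposition without any proof, so there is nothing to compare your argument against; it simply asserts that the extended intersection/union is obtained ``by applying a classical intersection (respectively, union) algorithm to their extended version.'' Your proof is correct and supplies exactly the details that assertion leaves implicit: the observation that $(A_c)^e$ and $A_c$ share the same transition graph, hence the same SCC structure; the projection lemma that every SCC of the plain product maps into a single SCC of each factor; and the derivation of both homogeneity (weakness) and the $\vect{0}$-labelling of transitions internal to accepting SCCs (run-boundedness) from that lemma. You are also right to flag that the flag-based B\"uchi product would be the wrong construction here --- it destroys the SCC correspondence --- and that the plain product is language-correct precisely because the factors are weak. The only cosmetic point is that the paper defines the extended intersection/union for automata ``of same dimensions'' without fixing a common maximal increment value; your choice $d=\max(d_1,d_2)$ is the natural repair and does not affect the argument.
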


\begin{definition}
Let $A=(Q,\Sigma,Q_0,\triangle,F)$ be a finite-word (respectively,
B\"uchi automaton), the {\em counter-zero automaton} corresponding to
$A$ is the one-dimensional counter automaton
$A_c=(1,{\vect{c_1}},Q,\Sigma,Q_0,\triangle',F)$, where
\begin{itemize}
\itemsep0cm
\item
$\triangle'={\lbrace}(q,(a,\vect{0}),q')\in Q\times (\Sigma\times
\vect{0})\times Q\mid (q,a,q')\in \triangle{\rbrace}$.
\end{itemize}
\end{definition}

%%performed by computing the intersection/union of their corresponding
%%extended automata, and then view the result as a counter automaton.

The problem of testing the equivalence between counter languages is
known to be undecidable for many classes of counter automata
\cite{Iba78}, but decidability results exist for some very particular
classes~\cite{Roos88}. The algorithms involved in those decidability
results are known to be of high complexity and difficult to
implement. Rather than trying to extend those results to counter-word
automata, we preferred to propose a sufficient criterion that can
easily be implemented with simple automata-based manipulations. Our
criterion is formalized with the following proposition.

\begin{proposition}
\label{th-counter1}
Let $A_{c_1}$ and $A_{c_2}$ be two finite-word (respectively, B\"uchi)
counter automata of same dimension. If $L(A_{c_1}^e)=L(A_{c_2}^e)$,
then ${\cal{L}}(A_{c_1})={\cal{L}}(A_{c_2})$.
\end{proposition}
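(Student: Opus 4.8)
The plan is to factor each counter language through its extended automaton by a single ``summation'' map that depends only on the common dimension $n$, and then to observe that equal extended languages must yield equal counter languages.

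First I would fix the dimension $n$ shared by $A_{c_1}$ and $A_{c_2}$ and define a map $\Phi$ sending words over an extended alphabet $\Sigma \times [0,d]^n$ to pairs in $\Sigma^{\infty}\times\nats^n$. For a finite word $u = (a_0,\vect{v_0})\cdots(a_{m-1},\vect{v_{m-1}})$ put $\Phi(u) = (a_0\cdots a_{m-1},\, \sum_{i=0}^{m-1}\vect{v_i})$; for an infinite word $u = (a_0,\vect{v_0})(a_1,\vect{v_1})\cdots$ in which only finitely many $\vect{v_i}$ differ from $\vect{0}$, put $\Phi(u) = (a_0 a_1\cdots,\, \sum_{i\geq 0}\vect{v_i})$, the sum then being finite. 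I then extend $\Phi$ to languages by $\Phi(K) = \set{\Phi(u)\mid u\in K}$. The essential feature of $\Phi$ is that it mentions only $n$, the projection onto $\Sigma$, and addition in $\nats^n$; it is entirely independent of the particular automaton, of its states and transitions, and (since it is defined on every extended alphabet $\Sigma\times[0,d]^n$ at once) of the maximal increment value $d$.

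The key lemma to prove is that $\mathcal{L}(A_c) = \Phi(L((A_c)^e))$ for every finite-word or run-bounded weak B\"uchi counter automaton $A_c$ of dimension $n$; this is a direct unfolding of the definitions. A run of $A_c$ on $w = a_0 a_1\cdots$ is by definition a choice of transitions $(q_i,(a_i,\vect{v_i}),q_{i+1})\in\triangle$, and this is literally the same data as a run of the extended automaton $(A_c)^e$ on the word $u = (a_0,\vect{v_0})(a_1,\vect{v_1})\cdots$, since $\triangle'$ contains $(q,(a,\vect{v}),q')$ exactly when $\triangle$ does, the only change being that the increment is now read as part of the label. The two runs traverse the same sequence of states, so one is accepting iff the other is, and the counter valuation that the definition of an accepting run assigns to $w$ is precisely the accumulated sum $\sum_i \vect{v_i}$, i.e.\ the counter-valuation component of $\Phi(u)$. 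Hence $(w,\vect{v})\in\mathcal{L}(A_c)$ iff some $u$ with $\Phi(u) = (w,\vect{v})$ is accepted by $(A_c)^e$, which is exactly $(w,\vect{v})\in\Phi(L((A_c)^e))$.

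In the B\"uchi case I would additionally invoke run-boundedness together with weakness: an accepting run eventually remains inside a single accepting strongly connected component, in which every transition carries the valuation $\vect{0}$, so only finitely many $\vect{v_i}$ are nonzero and $\Phi$ is well defined on $L((A_c)^e)$, its summation reproducing the stable counter value used in the definition of $\mathcal{L}(A_c)$. With the lemma established, the proposition is immediate: because $A_{c_1}$ and $A_{c_2}$ have the same dimension $n$, the one map $\Phi$ computes both counter languages, whence
\[
\mathcal{L}(A_{c_1}) = \Phi(L(A_{c_1}^e)) = \Phi(L(A_{c_2}^e)) = \mathcal{L}(A_{c_2}),
\]
the middle equality being the hypothesis $L(A_{c_1}^e) = L(A_{c_2}^e)$. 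I expect the only delicate point to be the well-definedness and automaton-independence of $\Phi$ --- in particular that it does not secretly depend on $d_1$ or $d_2$, and that in the infinite-word case its defining sum converges --- and this is exactly what run-boundedness secures; the rest is bookkeeping on the run correspondence.
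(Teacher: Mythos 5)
Your proof is correct; the paper actually states Proposition \ref{th-counter1} without any proof, and your argument --- factoring ${\cal{L}}(A_c)$ as $\Phi(L((A_c)^e))$ via the exact bijection between runs of $A_c$ and runs of its extended automaton, with the dimension-only (hence $d$-independent) summation map $\Phi$ and with run-boundedness securing well-definedness in the weak B\"uchi case --- is precisely the reasoning the paper leaves implicit. Nothing is missing, and your explicit attention to the fact that $\Phi$ does not depend on the two automata's possibly different maximal increment values is a worthwhile detail the paper glosses over.
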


\noindent
There are of situations where $L(A_{c_1}^e)\not=L(A_{c_2}^e)$,
while ${\cal{L}}(A_{c_1})={\cal{L}}(A_{c_2})$.

\begin{figure}[t]
\centering 

\subfigure[$A_1$]{
\label{ex-counter1}
\includegraphics[width=6.5cm]{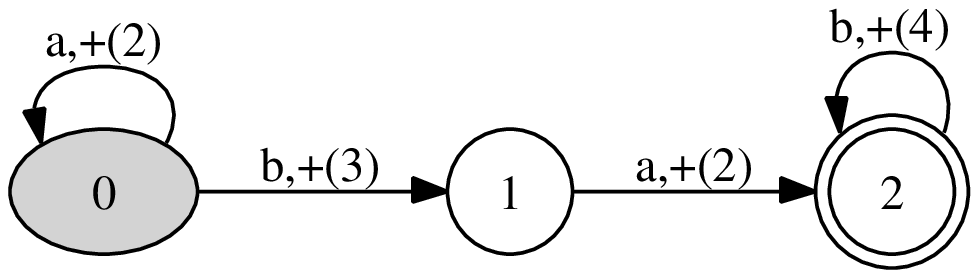}
\label{ex-counter1}
}
\hspace{2em}
\subfigure[$A_2$]{
\label{ex-counter2}
\includegraphics[width=6.5cm]{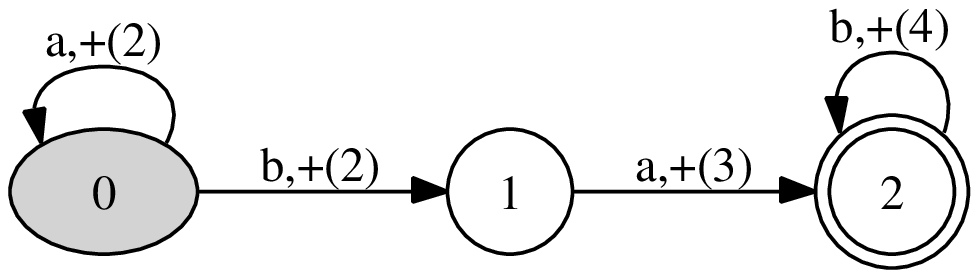}
\label{ex-counter2}
}
\caption{Two finite-word counter automata.}
\label{counter-extend}
\end{figure}

\begin{example}
Consider the two finite-word counter automata $A_{c_1}$ and $A_{c_2}$
given in Figure \ref{counter-extend}. The automaton $A_{c_1}^e$ does
not accept the same language as $A_{c_2}^e$. However
${\cal{L}}(A_{c_1})={\cal{L}}(A_{c_2})$.
\end{example}

The projection operation for finite automata extends to a {\em counter
  projection} for counter automata. We have the following definition.

\begin{definition}
\label{counter-projection}
Let $A_c=(n,\vect{c},Q,\Sigma, Q_0, \triangle,F)$ be a counter
automaton. For $1{\leq}i{\leq}n$, the projection of $A_c$
w.r.t. counter $c_i$, denoted $\Pi_{\not= c_i}(A_c)$ is the counter
automaton $A_c'=(n-1,\vect{c'},Q,\Sigma, Q_0, \triangle',F)$, where
$\vect{c'}=(c_1,{\dots},c_{i-1},c_{i+1},{\dots},c_n)$,
$L(A_c)=L(A_c')$, and
${\cal{L}}(A_c')={\lbrace}(w,\vect{c_1}{\times}\vect{c_2})\in
\Sigma^{\infty}\times \nats^{n-1}\mid (\exists\,\vect{c_3}\in
\nats)$\\$((w,\vect{c_1}{\times}\vect{c_3}{\times}\vect{c_2})\in
     {\cal{L}}(A_c)){\rbrace}$.
\end{definition}
\noindent
In the rest of the paper, we use the shortcut
$\Pi_{(\not={\lbrace}c_1,c_2,\dots,c_n{\rbrace})}(A)$
for\\ $\Pi_{(\not= c_1)}(\Pi_{(\not= c_2)}\dots(\Pi_{(\not=
  c_n)}(A))\dots)$.\\
\newline
We now present a methodology that given a counter automaton $A$,
computes another counter automaton $A'$ whose accepting words are
those of $A$ that satisfy counter constraints. We start with the
following definition.

%%Until now, counters have been used to assign a values to the words
%%accepted by an automaton. In the rest of this section, we will use
%%them to {\em restrict} the set of words that can be accepted by an
%%automaton. 

\begin{definition}
Let $A_c$ be a finite-word (respectively, run-bounded weak B\"uchi)
n-dimensional counter automaton and $1\,{\leq}\,i,j\,{\leq}n$ be an
integer. We define $(A_{c})^{c_i>c_j}$ to be the counter automaton
obtained from $A_c$ by removing all the accepting runs that do not
assign a greater value to $c_i$ than to $c_j$. The automaton
$(A_{c})^{c_i>c_j}$ may have an infinite set of states since its
language may not be regular.
\end{definition}

In the rest of the paper, we use the notation
$(A_c)^{(c_1>{\lbrace}c_2,\dots,c_n{\rbrace})}$ to denote\\ $(\dots
((A_c)^{c_1>c_2})^{c_1>c_3}\dots)^{c_1>c_n}$.\\
\newline
Let $A_c$ be a finite-word (respectively, run-bounded weak B\"uchi)
n-dimensional counter automaton over $\Sigma$ and whose maximal
increment value is $d$. . A way to compute $(A_c)^{c_1>c_2}$ could be
to build a universal finite-word (respectively, run-bounded weak
B\"uchi automaton) $A^U$ defined over the same extended alphabet as
$A_c$ and then take the extended intersection between
$(A^U)^{c_i>c_j}$ and $A^c$. For any word $w\in \Sigma^*$
(respectively, $w\in \Sigma^{\omega}$), the automaton
$(A^U)^{c_i>c_j}$ contains all the accepting runs on $w$ that satisfy
the condition $c_i>c_j$. Hence, taking the extended intersection
between $(A^U)^{c_i>c_j}$ and $A_c$ will remove from $A_c$ all the
accepting runs that do not satisfy $c_i>c_j$. However, since there is
no bound on the difference between the values of $c_i$ and $c_j$
before the word is accepted, the automaton $(A^U)^{c_i>c_j}$ will have
an infinite number of states. Indeed, there should be one state for
each possible value of $c_i-c_j$. To avoid having to working with
infinite-state automata, we impose a synchronization between the
counters that need to be compared. As a consequence, we may not
exactly compute $(A_{c})^{c_i>c_j}$, but an automaton whose language
and counter language are subsets of those of $(A_{c})^{c_i>c_j}$. As
we will see in Section \ref{section-precision}, imposing this
synchronization is sufficient for the applications we will
consider. We have the following definition.

\begin{definition}
\label{synchro-count}
Let $A_c=(n,\vect{c},Q,\Sigma, Q_0, \triangle,F)$ be a finite-word
(respectively, run-bounded weak B\"uchi) counter automaton and a {\em
  synchronization bound} $M \in \nats$.  Let $\Delta c_l(\sigma)$
denotes the difference between the value associated to the counter
$c_l$ in the last and in the first state of the subrun $\sigma$ of a
run $\rho$ on $w$.  The automaton $A_c$ is {\em $M$-synchronized} with
respect to the counters $c_{i}$ and $c_{j}$ if
$L(A_c)=L(A_c)^{c_i>c_j}$, and for each $w \in L(A_c)$ and each
accepting run $\rho$ on $w$, we have $\|\Delta c_j(\sigma)-\Delta
c_i(\sigma)\|$ ${\leq}$ $M$.
\end{definition}

\begin{definition}
\label{uni-counter-final}
The finite-word (respectively, run-bounded weak) counter automaton
$A^{MU}=(n,\vect{c},Q,\Sigma, Q_0, \triangle,F)$ is
$M-$Universal-synchronized w.r.t. counters $c_i$ and $c_j$ if and only
if it is $M-$synchronized w.r.t. $c_i$ and $c_j$, and
$L(A^{MU})=\Sigma^*$ (respectively, $L(A^{MU})=\Sigma^\omega$).
\end{definition}

\noindent
Rather than computing $(A_{c})^{c_i>c_j}$, we propose to compute a
$M$-synchronized automaton whose language and counter language are
subsets of those of $(A_{c})^{c_i>c_j}$. For this, we intersect
$A_{c}$ with a {\em $M$-Universal-synchronized automaton}. Observe
that we can have a possibly infinite number of automata which are
$M-$Universal-synchronized w.r.t. $c_i$ and $c_j$. Clearly, when
taking the extended intersection between a counter automaton $A_c$ and
a $M$-Universal-synchronized automaton $A^{MU}$ defined over the same
extended alphabet, we obtain an automaton which is $M$-synchronized
and whose language and counter language are subsets of those of
$A_c$. The requirement $L(A^{MU})=\Sigma^*$ (respectively,
$L(A^{MU})=\Sigma^\omega$) in Definition \ref{uni-counter-final} is to
make sure that accepting runs are removed from $A_c$ only if they do
not satisfy the constraints over $c_i$ and $c_j$.

%%\begin{remark}
%%It is important to working with run-bounded automata when considering
%%infinite words. Indeed, working with possibly not run-bounded B\"uchi
%%counter automata would require to distinguish between runs that are
%%associated with a counter valuation from those that are
%%not. Considering B\"uchi counter automata that are not run-bounded is
%%fortunately not needed for the applications we will consider in the
%%paper.
%%\end{remark}

\section{The ($\omega$)-Regular Model Checking Framework}
\label{paper-model}

In this paper, we suppose that states of a system are encoded by words
over a fixed alphabet. If the states are encoded by finite words, then
sets of states can be represented by finite-word automata and
relations between states by finite-word transducers. This setting is
referred to as {\em Regular Model Checking}\,\cite{KMMPS97,WB98}. If
the states are encoded by infinite words, then sets of states can be
represented by deterministic weak B\"uchi automata and relations
between states by deterministic weak B\"uchi transducers. This setting
is referred to as {\em $\omega$-Regular Model
  Checking}\,\cite{BLW04a}. Formally, a finite automata-based
representation of a system can be defined as follows.\\
\newline
\begin{definition}
A {\em ($\omega$-)regular system} for a system $\TS=(S,S_0,R)$ is a
triple $M=(\Sigma, A, T)$, where
\begin{itemize}
\itemsep0cm
\item
$\Sigma$ is a finite alphabet over which the states are encoded as
finite (respectively, infinite) words;
\item
$A$ is a deterministic finite-word (respectively, deterministic weak
  B\"uchi) automaton over $\Sigma$ that represents $S_0$;
\item
$T$ is a deterministic finite-word (respectively, deterministic weak
  B\"uchi) transducer over $\Sigma^2$ that represents $R$. In the rest
  of the paper, $T$ is assumed to be reflexive.
\end{itemize}
\end{definition}

In the finite-word case, an execution of the system is an infinite
sequence of same-length finite words over $\Sigma$. The Regular Model
Checking framework was first used to represent parametric systems
\cite{AJMd02,BT02,KMMPS97,ABJN99,BJNT00,KPSZ02}. The framework can
also be used to represent various other models, which includes linear
integer systems\,\cite{WB95,WB00}, FIFO-queues systems\,\cite{BG96},
{\em XML} specifications\,\cite{BHRV06b,Td06}, and heap
analysis\,\cite{BHMV05,BHRV06b}.\\
\newline
As an illustration we give details on how to represent parametric
systems. Let $P$ be a process represented by a finite-state system. A
parametric system for $P$ is an infinite family
$S={\lbrace}S_n{\rbrace}_{n=0}^{\infty}$ of networks where for a fixed
$n$, $S_n$ is an {\em instance} of $S$, {\emph{i.e.,}} a network
  composed of $n$ copies of $P$ that work together in parallel. In the
  Regular Model Checking framework, the finite set of states of each
  process is represented as an alphabet $\Sigma$. Each state of an
  instance of the system can then be encoded as a finite word
  $w=w(0){\dots}w(n-1)$ over $\Sigma$, where $w(i-1)$ encodes the
  current state of the $i$th copy of $P$. Sets of states of several
  instances can thus be represented by finite-word automata. Observe
  that the states of an instance $S_n$ are all encoded with words of
  the same length. Consequently, relations between states in $S_n$ can
  be represented by binary finite-word relations, and eventually by
  transducers.

\begin{example}
\label{example-token}
Consider a simple example of parametric network of identical processes
implementing a token ring algorithm. Each of these processes can be
either in idle or in critical mode, depending on whether or not it
owns the unique token. Two neighboring processes can communicate with
each other as follows: a process owning the token can give it to its
right-hand neighbor. We consider the alphabet
$\Sigma={\lbrace}N,T{\rbrace}$. Each process can be in one of the two
following states : $T$ (has the token) or $N$ (does not have the
token). Given a word $w\in \Sigma^*$ with $|w|=n$ (meaning that $n$
processes are involved in the execution), we assume that the process
whose states are encoded in position $w(0)$ is the right-hand neighbor
of the one whose states are encoded in position $w(n-1)$. The
transition relation can be encoded as the union of two regular
relations that are the following:
\begin{enumerate}
\item
$(N,N)^*(T,N)(N,T)(N,N)^*$ to describe the move of the token from
  $w(i)$ to $w(i+1)$ (with $0{\leq}i{\leq}n-2$), and
\item
$(N,T)(N,N)^*(T,N)$ to describe the move of the token from
$w(n-1)$ to $w(0)$.
\end{enumerate}

The set of all possible initial states where the first process has the
token is given by $TN^*$. 
\end{example}

In the infinite-word case, an execution of the system is an infinite
sequence of infinite words over $\Sigma$. The $\omega$-Regular Model
Checking framework has been used for handling systems
with both integer and real variables~\cite{BW02,BJW05}, such as
linear hybrid systems with a constant derivative (see examples in
\cite{ACHH95} or in \cite{BLW04b,Leg07}).\\
\newline
It is known that verifying properties of systems in the
($\omega$-)Regular Model Checking framework generally reduces to
solving the {\em ($\omega$-)Regular Reachability
  Problems}\,\cite{PS00,BJNT00,BLW04a,AJNdS04,Leg07,BLW04b} that are
defined hereafter.

\begin{definition}  
Let $A$ be a deterministic finite-word (respectively, deterministic
weak B\"uchi) automaton, and $T$ be a reflexive deterministic
finite-word (respectively, deterministic weak B\"uchi) transducer. The
($\omega$-)Regular Reachability Problems for $A$ and $T$ are the
following:
\begin{enumerate}
\item
{\em Computing $T^*(A)$:} the goal is to compute a finite-word
(respectively, weak B\"uchi) automaton representing $T^*(A)$. If $A$
represents a set of states $S$ and $T$ a relation $R$, then $T^*(A)$
represents the set of states that can be reached from $S$ by applying
$R$ an arbitrary number of times;
\item
{\em Computing $T^*$:} the goal is to compute a finite-word
(resp. weak B\"uchi) transducer representing the reflexive transitive
closure of $T$. If $T$ represents a subset of a power of a
reachability relation $R$, then $T^*$ represents its closure.
\end{enumerate}
\end{definition}

The ($\omega$-)Regular Reachability Problems are
undecidable\,\cite{AK86}, but partial solutions exist. Studying those
solutions is the subject of the rest of this paper.

\section{On Solving ($\omega$-)Regular Reachability Problems}
\label{section-solving}

Among the techniques to solve the ($\omega$-)Regular Reachability
Problems, one distinguishes between {\em domain specific and generic}
techniques. Domain specific techniques exploit the specific properties
and representations of the domain being considered and were for
instance obtained for systems with FIFO-queues in~\cite{BG96,BH97},
for systems with integers and reals in~\cite{Boi98,BW02,BHJ03}, for
pushdown systems in~\cite{FWW97,BEM97}, and for lossy queues
in~\cite{AJ96}.  Generic
techniques\,\cite{KMMPS97,BJNT00,JN00,BHV04,BLW03,BLW04a,Tou01,DLS02,AJNd03,VSVA04,VSVA05}
consider automata-based representations and provide algorithms that
operate directly on these representations, mostly disregarding the
domain for which it is used.\\
\newline
In this paper, we propose a new generic technique to solving the
($\omega$-)Reachability Problems. We use the following definition.

\begin{definition}
Given a possibly infinite sequence $A^1, A^2, \dots$ of automata, the
limit of this sequence is an automaton $A^*$ such that $L(A^*)=\bigcup
L(A^i)$.
\end{definition}

\noindent
Consider a transducer $T$ and an automaton $A$. We first observe that
the computations of both $T^*$ and $T^*(A)$ can be reduced to the
computation of the limit of a possibly infinite sequence of
automata. Indeed, computing $T^*$ amounts to compute the limit of
$T_{id}$, $T^{1}$, $T^{2}$, $T^{3}$, \ldots, and computing $T^*(A)$
amounts to compute the limit of $A$, $T^{1}(A)$, $T^{2}(A)$,
$T^{3}(A)$, \ldots. We propose a generic technique which can compute
the limit of a sequence of automata by extrapolating one of its finite
{\em sampling sequence}, {\emph{i.e.}}  selected automata from a finite
  prefix of the sequence. The extrapolation step proceeds by comparing
  successive automata in the sampling sequence, trying to identify the
  difference between these in the form of an {\em increment}, and {\em
    extrapolating} the repetition of this increment by adding loops to
  the last automaton of the sequence. After the extrapolation has been
  built, one has to check whether it corresponds to the limit of the
  sequence. If this is the case, the computation terminates,
  otherwise, another sampling sequence has to be chosen. This is a
  semi-algorithm since there is no guarantee that (1) we can find a
  sampling sequence that can be extrapolated, and (2) the result of
  the extrapolation will be the desired closure.\\
\newline
The presentation of our solution is organized as follows. Section
\ref{section-sampling} discusses the choice of the sampling
sequence. Section \ref{section-increment} presents a methodology to
detect increments. Section \ref{section-extrapolation} presents
several extrapolation algorithms. Finally, Section
\ref{section-precision} introduces criteria to determine the
correctness of the extrapolation. An implementation of those results
as well as some experiments are presented in Section
\ref{section-implementation}. 

\section{Choosing the Sampling Sequence}
\label{section-sampling}

Choosing the sampling sequence is a rather tricky issue and there is
no guarantee that this can be done in a way that ensures that the
extrapolation step can be applied. However, there are heuristics that
are very effective for obtaining a sampling sequence that can be
extrapolated. The following lemma shows that the sampling sequence can
be selected quite arbitrarily, assuming that $T$ is {\em reflexive}.

\begin{lemma}
\label{lemma-sample1}
Let $T$ be a reflexive transducer and $A$ be an automaton. If
$s=s_0,s_1, s_2, \ldots$ is an infinite increasing subsequence of the
natural numbers, then $L(T^*) = \bigcup_{k \geq 0} L(T^{s_k})$ and,
similarly, $L(T^*(A))= \bigcup_{k \geq 0} L(T^{s_k} (A))$.
\end{lemma}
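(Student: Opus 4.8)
The plan is to reduce both equalities to a single monotonicity property of the powers of $T$ — which is precisely where reflexivity is needed — and then to use that an infinite increasing subsequence of $\nats$ is cofinal in $\nats$.

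First I would record the two decompositions $L(T^*) = \bigcup_{i \geq 0} L(T^i)$ and $L(T^*(A)) = \bigcup_{i \geq 0} L(T^i(A))$, where $T^0 = T_{id}$. The first is immediate from $T^* = T^+ \cup T_{id}$ and $T^+ = \bigcup_{i \geq 1} T^i$. The second follows because, writing $R$ for the relation represented by $T$ and $S = L(A)$, the image distributes over unions of relations, so $R^*(S) = (\bigcup_{i \geq 0} R^i)(S) = \bigcup_{i \geq 0} R^i(S)$, i.e. $L(T^*(A)) = \bigcup_{i \geq 0} L(T^i(A))$.

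The key step, and the only place reflexivity enters, is to establish $L(T^i) \subseteq L(T^{i+1})$ and $L(T^i(A)) \subseteq L(T^{i+1}(A))$ for every $i$. Reflexivity states $L(T_{id}) \subseteq L(T)$, i.e. $R_{id} \subseteq R$. Composing with $R^i$ gives $R^i = R^i \circ R_{id} \subseteq R^i \circ R = R^{i+1}$, hence $L(T^i) \subseteq L(T^{i+1})$; applying both relations to $S$ and using monotonicity of the image then yields $R^i(S) \subseteq R^{i+1}(S)$, i.e. $L(T^i(A)) \subseteq L(T^{i+1}(A))$. An easy induction extends this to $L(T^i) \subseteq L(T^j)$ and $L(T^i(A)) \subseteq L(T^j(A))$ whenever $i \leq j$.

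It then remains to combine these facts. Each $s_k$ is a natural number, so $\bigcup_{k \geq 0} L(T^{s_k}) \subseteq \bigcup_{i \geq 0} L(T^i) = L(T^*)$, and similarly for the image; this is the easy inclusion. Conversely, any word of $L(T^*)$ lies in some $L(T^n)$, and since $s_0 < s_1 < \cdots$ is cofinal in $\nats$ there is a $k$ with $s_k \geq n$, whence $L(T^n) \subseteq L(T^{s_k})$ by monotonicity and the word belongs to $\bigcup_{k \geq 0} L(T^{s_k})$. The verbatim argument with $L(T^n(A))$ in place of $L(T^n)$ settles the second equality. The only real obstacle is getting the composition order right in the monotonicity step so that $R_{id} \subseteq R$ propagates correctly to $R^i \subseteq R^{i+1}$; once that inclusion is in hand, the cofinality argument is routine.
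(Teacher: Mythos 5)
Your proposal is correct and follows essentially the same route as the paper: the paper's (much terser) proof likewise combines the cofinality of the increasing subsequence with the monotonicity $L(T^j)\subseteq L(T^i)$ for $j\leq i$ (respectively $L(T^j(A))\subseteq L(T^i(A))$) derived from reflexivity. You merely spell out the details the paper leaves implicit, such as deriving $R^i = R^i\circ R_{\id}\subseteq R^i\circ R = R^{i+1}$ from $R_{\id}\subseteq R$.
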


\begin{proof}
The lemma follows directly from the fact that for any $i\geq 0$, there
is an $s_k\in s$ such that $s_k > i$ and that, since $T$ is reflexive,
$(\forall j \leq i)(L(T^j) \subseteq L(T^i))$ (respectively,
$L(T^j(A)) \subseteq L(T^i(A))$).
\end{proof}

As an example, for the cases of FIFO-queue, pushdown, and parametric
systems, we observed that considering sample points of the form
$s_k=ak$, where $a\in \nats$ is a constant, turns out to be very
useful. For the case of arithmetic, we observed that the useful
sampling points are often of the form $s_k=a^k$. Sampling sequences
with sampling points of the form $s_k=ak$ are called {\em linear},
while sampling sequences with sampling points of the form $s_k=a^k$
are called {\em exponential}.

\begin{example}
\label{last-example} 

Figure~\ref{fig-xx1b} shows the minimal transducer of Example
\ref{ex-xx1} composed with itself $2$, $4$, $8$ and $16$ times. The
difference between the graphs for $T^4$ and $T^8$ takes the form of an
increment represented by the set of states ${\lbrace}2,6{\rbrace}$ in
$T^8$. This increment is repeated between $T^8$ and
$T^{16}$. Consequently, $T^{16}$ differs from $T^4$ by the addition of
two increments represented by the sets ${\lbrace}3,8{\rbrace}$ and
${\lbrace}2,7{\rbrace}$.
\begin{figure}
\begin{center}
\includegraphics{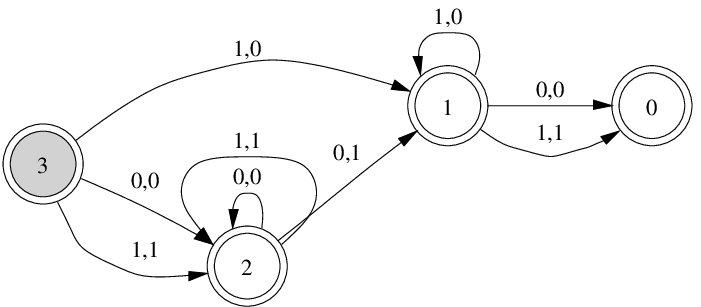}
\includegraphics{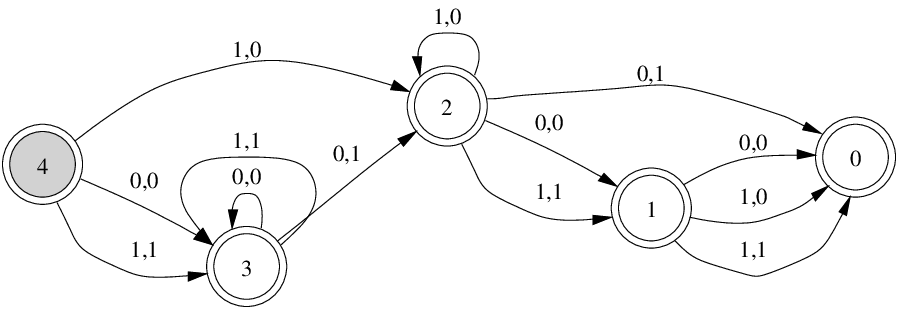}
\includegraphics{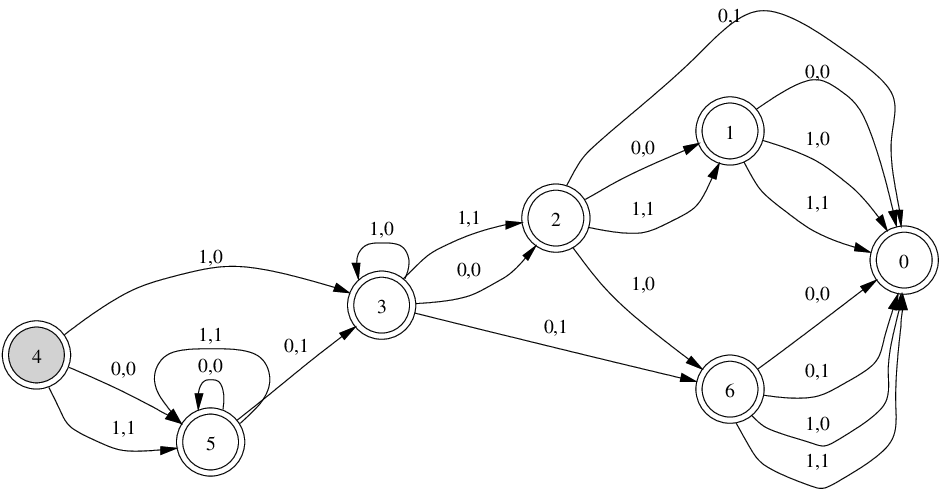}
\includegraphics{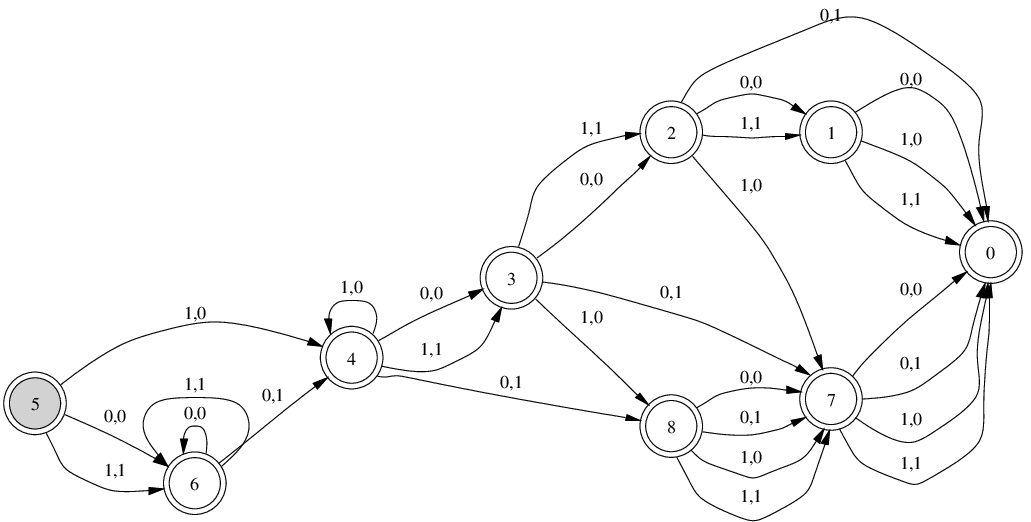}
\end{center}
\caption{Transducer of Example \ref{last-example} at powers of two.
\label{fig-xx1b}}
\end{figure}
\end{example}

\section{Detecting Increments}

\label{section-increment}

We consider a finite sequence $A^1, A^2, A^3, \ldots, A^n$ of finite
automata that are either all finite-word automata or all weak B\"uchi
automata. Those automata are assumed to be deterministic and
minimal. Our goal is to determine whether, for sufficiently large $i$,
the automaton $A^{i+1}$ differs from $A^i$ by some additional constant
finite-state structure. Our strategy, consists in comparing a finite
number of successive automata until a suitable increment can be
detected.

For each $i > 0$, let $A^{i} = (Q^{i}, \Sigma, q_0^{i}, \delta^{i},
F^{i})$. To identify common parts between two successive automata
$A^{{i}}$ and $A^{{i+1}}$ we first look for states of $A^{{i}}$ and
$A^{{i+1}}$ from which identical languages are accepted. Precisely, we
compute a forward equivalence relation $E^{i}_f \subseteq Q^{i} \times
Q^{{i+1}}$ between $A^{{i}}$ and $A^{{i+1}}$. Since we are dealing
with deterministic minimal automata, the forwards equivalence $E_f^i$
is one-to-one (though not total) and can easily be computed by
partitioning the states of the joint automaton $(Q^{i} \cup Q^{i+1},
\Sigma, q_{0}^i, \delta^i \cup \delta^{i+1}, F^{i} \cup F^{i+1})$
according to their accepted language. For finite-word automata, this
operation is easily carried out by Hopcroft's finite-state
minimization procedure~\cite{Hop71}. For weak B\"uchi automata, one
uses the variant introduced in \cite{Loe01}.

\begin{remark}
Note that because the automata are minimal, the parts of
$A^i$ and $A^{i+1}$ linked by $E_i^f$ are isomorphic (see Definition
\ref{iso-def}), incoming transitions being ignored.
\end{remark}

Next, we search for states of $A^{{i}}$ and $A^{{i+1}}$ that are
reachable from the initial state by identical languages.  Precisely,
we compute a backward equivalence relation $E^{i}_b \subseteq Q^{i}
\times Q^{{i+1}}$ between $A^{{i}}$ and $A^{{i+1}}$. Since $A^i$ and
$A^{i+1}$ are deterministic and minimal, the {\em backwards
  equivalence} $E_b^i$ can be computed by forward propagation,
starting from the pair $(q_{0}^i, q_{0}^{i+1})$ and exploring the
parts of the transition graphs of $A^i$ and $A^{i+1}$ that are
isomorphic to each other, if transitions leaving these parts are
ignored.

\begin{remark}
Note that because the automata are minimal, the parts of
$A^i$ and $A^{i+1}$ linked by $E_b^i$ are isomorphic, outgoing
transitions being ignored.
\end{remark}

We now define a notion of finite-state {\em increment} between two
successive automata, in terms of the relations $E^{i}_f$ and
$E^{i}_b$.

\begin{definition}
\label{def-incr-larger1}
Let $A^{i} = (Q^{i}, \Sigma, q_0^{i}, \delta^{i}, F^{i})$ and $A^{i+1}
= (Q^{i+1}, \Sigma, q_0^{i+1}, \delta^{i+1}, F^{i+1})$ be two minimal
finite-word (respectively, minimal weak B\"uchi) automata. Let $E^i_b$
and $E^i_f$ be respectively, the backward and forward equivalences
computed between $A^{i}$ and $A^{i+1}$.  The automaton $A^{i+1}$ is\/
{\em incrementally larger} than $A^i$ if the relations $E^i_f$ and
$E^i_b$ cover all the states of $A^i$. In other words, for each $q \in
Q^i$, there must exist $q' \in Q^{i+1}$ such that $(q, q') \in E^i_b
\cup E^i_f$.
\end{definition}

If $A^{i+1}$ is incrementally larger than $A^{i}$, the {\em increment}
consists of the states that are matched neither by $E^{i}_f$, nor by
$E^{i}_b$.

\begin{definition}
\label{def-incr-larger2}
Let $A^{i} = (Q^{i}, \Sigma, q_0^{i}, \delta^{i}, F^{i})$ and $A^{i+1}
= (Q^{i+1}, \Sigma, q_0^{i+1}, \delta^{i+1}, F^{i+1})$ be two minimal
finite-word (respectively, minimal weak B\"uchi) automata. Let $E^i_b$
and $E^i_f$ be respectively, the backward and forward equivalences
computed between $A^{i}$ and $A^{i+1}$. If $A^{i+1}$ is incrementally
larger than $A^{i}$, then
\begin{enumerate}
\item
the set $Q^{i}$ can be partitioned into $\{ Q_b^i, Q_f^i\}$, such that
\begin{itemize}
\item The set $Q_f^i$ contains the states $q$ covered by $E^i_f$,
i.e., for which there exists $q'$ such that $(q, q') \in E^i_f$;
\item The set $Q_b^i$ contains the remaining states.
\end{itemize}
\item
The set $Q^{i+1}$ can be partitioned into
$\{ Q_H^{i+1}, Q_{I_0}^{i+1}, Q_T^{i+1} \}$, where
\begin{itemize}
\item The\/ {\em head part} $Q_H^{i+1}$ is the image by $E^{i}_b$
of the set $Q_b^{i}$;
\item The\/ {\em tail part} $Q_T^{i+1}$ is the image by $E^{i}_f$
of the set $Q_f^{i}$, dismissing the states that belong to
$Q_H^{i+1}$ (the intention is to have an unmodified head part);
\item The\/ {\em increment} $Q_{I_0}^{i+1}$ contains the states
that do not belong to either $Q_H^{i+1}$ or $Q_T^{i+1}$.
\end{itemize}
\end{enumerate}
\end{definition}

\begin{figure}
\centerline{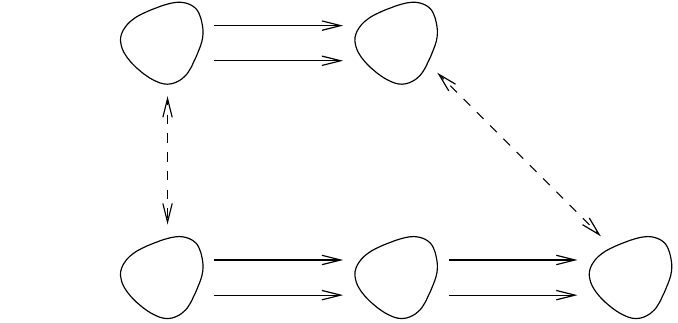}
\caption{Partitioning automata states.}
\label{fig-parts}
\end{figure}

\noindent
Definitions \ref{def-incr-larger1} and \ref{def-incr-larger2} are
illustrated in Figure \ref{fig-parts}.\\
\newline
%%\begin{proposition}
%%There exists an algorithm CHECK-INCREMENTALLY which takes as arguments
%%two minimal finite-word (resp. weak B\"uchi) automata $A^{i} = (Q^{i},
%%\Sigma, q_0^{i}, \delta^{i}, F^{i})$ and $A^{i+1} = (Q^{i+1}, \Sigma,
%%q_0^{i+1}, \delta^{i+1}, F^{i+1})$, and test whether $A^{i+1}$ is
%%incrementally larger than $A^{i}$. In the positive case, the algorithm
%%returns the partition of the set of states of $A^{i+1}$ into its head
%%part, increment, and tail part. 
%%\end{proposition}

%%\begin{proof}
%%The algorithm computes the forward equivalence $E^i_f$ and the
%%backward equivalence $E^i_b$ between $A^i$ and $A^{i+1}$. Next, the
%%algorithm tests whether for each $q\in Q^i$ there exists $q'\in
%%Q^{i+1}$ such that $(q,q')\in E^i_f\cup E^i_b$. If this is the case,
%%it returns the partition of the set of states of $A^{i+1}$ into its
%%head part, tail part, and increment according to Definition
%%\ref{def-incr-larger2}. If hash tables are used as structures for
%%representing equivalence relations, then the time cost of testing
%%whether $(q,q')\in E^i_f\cup E^i_b$ is $O(|Q_1|)$, and so is the one
%%of the computations given in Definition
%%\ref{def-incr-larger2}. Consequently the time cost of the algorithm
%%mainly depends on the time cost of COMPUTE-FORWARD, which is
%%$O((|Q^i|+|Q^{i+1}|){\times}log(|Q^i|+|Q^{i+1}|))$.
%%\end{proof}

Our expectation is that, when moving from one automaton to the next in
the sequence, the increment will always be the same.  We formalize
this property with the following definition.

\begin{definition}
\label{def-incr-seq}
Let $S_I=A^{i}$, $A^{i + 1}$, \ldots, $A^{i + k}$\/ and for each
$0\,{\leq}\,j\,{\leq}\,k$, let $A^{i+j} = (Q^{i+j}, \Sigma, q_0^{i+j},
\delta^{i+j}, F^{i+j})$ be a finite-word (respectively, weak B\"uchi)
automata. For each $0\,{\leq}\,j<k$, let $E_b^{i+j}$ and $E_f^{i+j}$
be respectively, the backward and the forward equivalences computed
between $A^{i+j}$ and $A^{i+j+1}$.  The sequence $S_I$ is an {\em
  incrementally growing} sequence if
\begin{itemize}
\item For each $0\,{\leq}\,j\,{\leq}\,k$, $A^{i+j}$ is minimal;
\item For each  $0\,{\leq}\,j\,{\leq}\,k-1$, $A^{i+j+1}$ is 
incrementally larger than $A^{i+j}$;
\item For each $1\,{\leq}\,j\,{\leq}\,k-1$, the {\em head} increment
$Q_{I_0}^{i+j+1}$, which is detected between $A^{i+j}$ and
$A^{i+j+1}$, is the image by $E^{i+j}_b$ of the increment
$Q_{I_0}^{i+j}$.
\end{itemize}
%%The {\em size} of $S_I$, denoted $|S_I|$, is the number of automata in $S_I$.
\end{definition}

\noindent
Consider a subsequence $S_I=A^{i}$, $A^{i + 1}$, \ldots, $A^{i + k}$
of $A^1, \dots, A^n$ that grows incrementally. For
$2{\leq}\,j\,{\leq}\,n$, the tail part $Q_T^{i+j}$ of $A^{i+j}$ will
then consist of $j-1$ copies of its head increment $Q_{I_{0}^{i+}}$
plus a part that we will name the {\em tail-end set\/}. Precisely,
$Q_T^{i+j}$ can be partitioned into $\{ Q_{I_1}^{i+j}, Q_{I_2}^{i+j},
\ldots, Q_{I_{j-1}}^{i+j}, Q_{T_f}^{i+j} \}$, where
\begin{itemize}
\item For each $1{\leq}\,\ell\,\leq j-1$, the {\em tail increment\/}
$Q_{I_\ell}^{i+j}$ is the image by the relation 
$E_f^{i+j-1} \circ E_f^{i+ j-2} \circ \cdots
\circ E_f^{i+j-\ell}$
of the {\em head increment} $Q_{I_0}^{i+j-\ell}$;
\item The {\em tail-end set\/} $ Q_{T_f}^{i+j}$ contains the
remaining elements of $Q_T^{i+j}$.
\end{itemize}

\noindent
Given an automaton $A^{i + j}$ in the sequence $S_I$, we define its
{\em growing decomposition} w.r.t. $S_I$, denoted
$\it{GROW}_{(S_I)}(A^{i+j})$, to be the ordered list
${\lbrace}Q_H^{i+j},{\lbrace}Q_{I_0}^{i + j}, \ldots,
$\\$Q_{I_{j-1}}^{i + j}{\rbrace},Q_T^{i+j}{\rbrace}$. It is easy to
see that the head increment $Q_{I_0}^{i + j}$ of $A^{i+j}$ and all its
tail increments $Q_{I_\ell}^{i+j}$, $\ell \in [1,j-1]$ appearing in
its tail part $Q_T^{i + j}$ are images of the head increment
$Q_{I_0}^{i+1}$ detected between $A^{i}$ and $A^{i+1}$ by a
combination of forward and backward equivalences. This observation
extends to all the automata in $S_I$. Consequently the transition
graphs internal{\footnote{The transition graph only contains
    transitions between states of the increment.}} to all increments
of all the automata in the sequence are isomorphic to that of
$Q_{I_0}^{i+1}$, and hence are isomorphic to each other. In the rest
of the thesis, this isomorphism relation between two increments is
called the {\em increment isomorphism relation}. Observe also that,
since we are working with minimal automata, for each $j \in [1,k-1]$
we have the following:
\begin{itemize}
\item
The head part $Q_{H}^{i+j+1}$ is the image by $E^{i+j}_b$ of the head
part $Q_{H}^{i+j}$. Consequently, the internal transition graphs of
the head parts of all the automata in the sequence $S_I$ are
isomorphic to each other. This isomorphism relation is called the {\em
head isomorphism relation};
\item
The tail-end set $Q_{T_f}^{i+j+1}$ is the image by $E_f^{i+j}$ of the
tail-end set $Q_{T_f}^{i+j}$. Consequently, the internal transition
graphs of the tail-end sets of all the automata in the sequence $S_I$
are isomorphic to each other. This isomorphism relation is called the
{\em tail-end set isomorphism relation}.
\end{itemize}
\noindent 
The situation is illustrated in Figure~\ref{figure2}.\\
\newline
\begin{figure}
\begin{center}
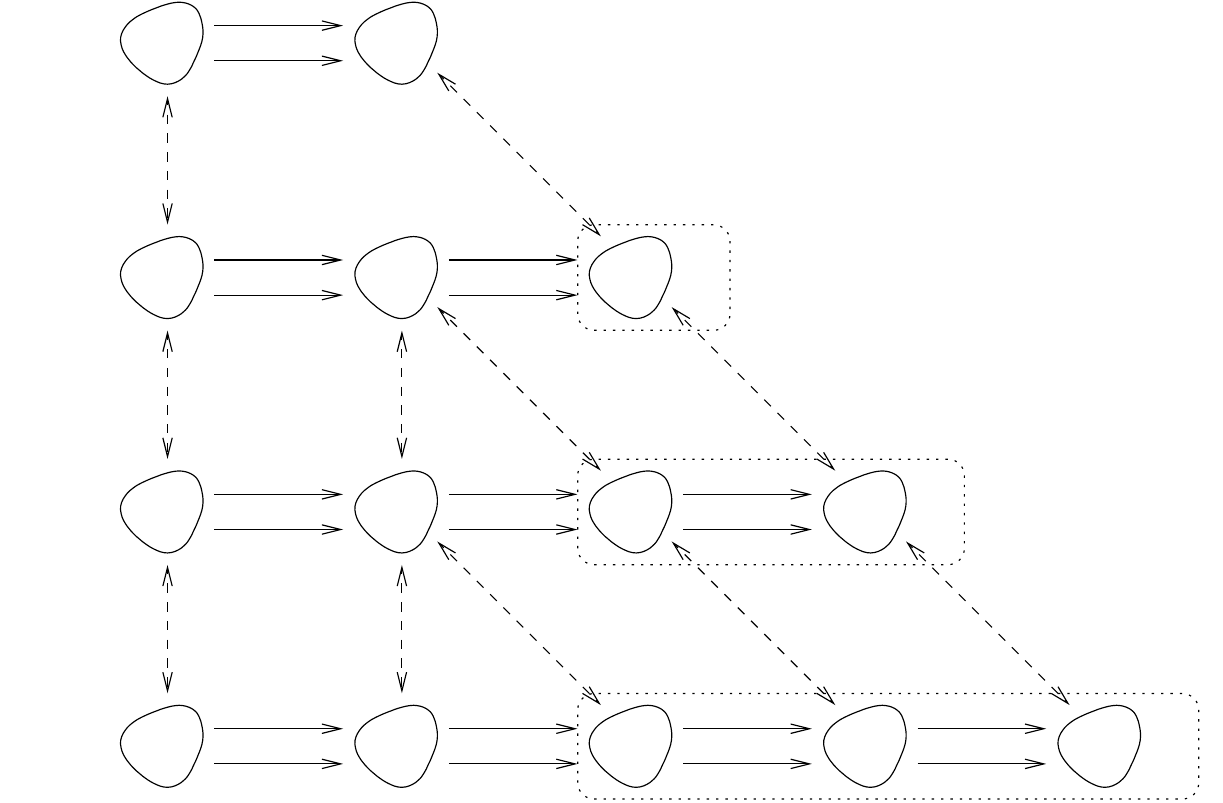
\end{center}
\caption{Automata in an incrementally growing sequence.}
\label{figure2}
\end{figure}
\noindent

%%\begin{proposition}
%%There exists an algorithm CHECK-INCREMENTALLY-GROW which takes as
%%argument a sequence of automata $S_I=A^i, A^{i+1},\ldots,A^{i+k}$, and
%%tests whether this sequence is incrementally growing. In the positive
%%case, the algorithm returns $\it{GROW}_{(S_I)}(A^{i+k})$. 
%%\end{proposition}

%%\begin{proof}
%%This algorithm is an extension of CHECK-INCREMENTALLY. It reuses the
%%equivalences computed between the successive automata in the sequence
%%to compute $\it{GROW}_{(S_I)}A^{i+k}$. 
%%\end{proof}

Our intention is to extrapolate the last automaton of an incrementally
growing sequence of automata by adding more increments, following a
regular pattern. In order to do this, we need to compare and
characterize the transitions leaving different increments.

\begin{definition}
\label{communication-equiv}
Let $A^{i+k} = (Q^{i+k}, \Sigma, q_0^{i+k}, \delta^{i+k}, F^{i+k})$ be
the last automaton of an incrementally growing sequence of automata
$S_I=A^{i}$, $A^{i + 1}$, \ldots, $A^{i + k}$. Assume that
$\it{GROW}_{(S_I)}(A^{i+k})={\lbrace}Q_H^{i+k},{\lbrace}Q_{I_0}^{i +
k}, \ldots, Q_{I_{k-1}}^{i +
k}{\rbrace},Q_{T_f}^{i+k}{\rbrace}$. Then, an increment
$Q_{I_\alpha}^{i + k}$ ($0\,{\leq}\,\alpha\,\leq\,k-1$) is said to
be\/ {\em communication equivalent} to an increment $Q_{I_\beta}^{i +
k}$ ($0\,{\leq}\,\beta\,\leq\,k-1$) if and only if, for each pair of
corresponding states (by the increment isomorphism) $(q,q')$, $q \in
Q_{I_\alpha}^{i + k}$ and $q' \in Q_{I_\beta}^{i + k}$, and $a \in
\Sigma$, we have that, either
\begin{itemize}
\item
$\delta^{i+k}(q,a) \in Q_{I_\alpha}^{i + k}$ and $\delta^{i+k}(q',a)
\in Q_{I_\beta}^{i + k}$, hence leading to corresponding states by the
existing increment isomorphism between $Q_{I_\alpha}^{i + k}$ and
$Q_{I_\beta}^{i + k}$, or
\item
$\delta^{i+k}(q,a)$ and $\delta^{i+k}(q',a)$ are both undefined, or
\item
$\delta^{i+k}(q,a)$ and $\delta^{i+k}(q',a)$ both leading to the same
state of the tail end $Q_{T_f}^{i+k}$, or
\item
there exists some $\gamma>0$ such that $\delta^{i+k}(q,a)$ and
$\delta^{i+k}(q',a)$ lead to corresponding states by the increment
isomorphism between $Q_{I_{\alpha+\gamma}}^{i + k}$ and
$Q_{I_{\beta+\gamma}}^{i + k}$ ($0\,\leq\,\alpha+\gamma,
\beta+\gamma\,\leq\,k-1$).
\end{itemize}
\end{definition}

\noindent
The definition easily generalizes to increments of different automata.

\begin{figure}
\centerline{\includegraphics[width=8cm]{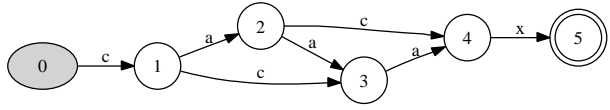}}
\caption{An automaton for Definition \ref{communication-equiv}.}
\label{increment-illustr1}
\end{figure}

\begin{example}
Consider the automaton of Figure \ref{increment-illustr1}, whose set
of states is given by ${\lbrace}0,1,2,3,4,5{\rbrace}$. Assume that $Q$
contains three increments that are $Q_{I_{0}}={\lbrace}1{\rbrace}$,
$Q_{I_{1}}={\lbrace}2{\rbrace}$, and
$Q_{I_{2}}={\lbrace}3{\rbrace}$. The increments $Q_{I_{0}}$ and
$Q_{I_1}$ are communication stable. The property does not hold for
$Q_{I_{0}}$ and $Q_{I_2}$ since a transition labeled with $c$ is not
defined from states $3$.
\end{example}

For the same reasons, we also need to compare the transitions leaving
the head part of different automata in the sequence.

\begin{definition}
\label{communication_stable}
Let $A^{i+k-1} = (Q^{i+k-1}, \Sigma, q_0^{i+k-1}, \delta^{i+k-1},
F^{i+k-1})$ and $A^{i+k} = (Q^{i+k}, \Sigma, q_0^{i+k}, \delta^{i+k},
F^{i+k})$ be the two last automata of an incrementally growing
sequence of automata $S_I=A^{i}$, $A^{i + 1}$, \ldots, $A^{i +
k}$. Assume that $\it{GROW}_{(S_I)}(A^{i+k-1})={\lbrace}Q_H^{i + k
-1},{\lbrace}Q_{I_0}^{i + k-1}, \ldots,Q_{I_{k-2}}^{i +
k-1}{\rbrace},Q_{T_f}^{i+k-1}{\rbrace}$ and
$\it{GROW}_{(S_I)}(A^{i+k})={\lbrace}Q_H^{i + k},{\lbrace}Q_{I_0}^{i +
k},$\\$ \ldots,Q_{I_{k-1}}^{i +
k}{\rbrace},Q_{T_f}^{i+k}{\rbrace}$. We say that $A^{i+k-1}$ and
$A^{i+k}$ are {\em communication stable} if and only if for each pair
of corresponding states (by the increment isomorphism) $(q, q')$, $q$
${\in}$ $Q_H^{i + k -1}$ and $q'$ ${\in}$ $Q_H^{i + k}$, and $a$
${\in}$ ${\Sigma}$, we have that, either

\begin{itemize}
\item 
$\delta^{i+k-1}(q,a)$ ${\in}$ $Q_H^{i+k-1} $ and $\delta^{i+k}(q',a)$
${\in}$ $Q_H^{i+k}$, hence leading to corresponding states by the
existing head isomorphism between $Q_H^{i+k-1}$ and $Q_H^{i+k}$, or
\item
$\delta^{i+k-1}(q,a)$ and $\delta^{i+k}(q',a)$ are both
undefined, or
\item
$\delta^{i+k-1}(q,a)= q_f^{i+k-1}\in Q_{T_f}^{i+k-1}$ and
$\delta^{i+k}(q',a)=q_f^{i+k}\in Q_{T_f}^{i+k}$, hence leading to
corresponding states by the existing tail-end set isomorphism between
$Q_{T_f}^{i+k-1}$ and $Q_{T_f}^{i+k}$, or
\item 
$\delta^{i+k-1}(q,a)\in Q_{I_x}^{i+k-1}$ and $\delta^{i+k}(q',a)\in
Q_{I_x}^{i+k}$, hence leading to corresponding states by the existing
increment isomorphism between $Q_{I_x}^{i+k-1}$ and $Q_{I_x}^{i+k}$
($0\,\leq\,x\,\leq\,k-1$).
\end{itemize} 
\end{definition}

%%The strongly connected components of the automata in an incrementally
%%growing sequence can be characterized by the following theorem.

%%\begin{theorem}
%%\label{composante-th}
%%Consider an incrementally growing sequence $A^{i}$, $A^{i + 1}$,
%%\ldots, $A^{i + k}$ of automata.  For $l$ $\in$
%%${\lbrack}i{\dots}k{\rbrack}$, each strongly connected components of
%%$A^{i+l}$ is entirely included in either $Q_H^{i+l}$,
%%$Q_{I_{0{\leq}k<l}}^{i+l}$, or $Q_{T_f}^{i+l}$.
%%\end{theorem}

%%\begin{proof}
%%Follows from Definition \ref{def-incr-seq} which ensures that (1) for
%%each transition $(q,a,q')$, if $q$ $\in$ $Q_{T_f}^{i+l}$ then $q'$
%%$\in$ $Q_{T_f}^{i+l}$, and (2) if $q$ $\in$ $Q_{I_{0{\leq}j<l}}$ then
%%$q'$ $\in$ $Q_{I_{j'{\geq}j}}^{i+l}~\cup~Q_{T_f}^{i+l}$.
%% \end{proof}

%%Finally, we define the {\em communication diameter} of an automaton in
%%a sequence. 

%%\begin{definition}
%%\label{communication-distance}
%%Consider an automaton $A^{i+j}$ in an incrementally growing sequence
%%of automata $S=A^{i}$, $A^{i + 1}$, \ldots, $A^{i + k}$. Assume that
%%$\it{GROW}_{(S)}(A^{i+j})={\lbrace}Q_H^{i+j},{\lbrace}Q_{I_0}^{i + j},
%%\ldots, Q_{I_{j-1}}^{i + j}{\rbrace},Q_{T_f}^{i+j}{\rbrace}$. The
%%{\em communication diameter} of $A^{i+j}$ is the greatest integer
%%$d~{\leq}~j-1 $, such that there exists $q_1\in Q_H^{i+j}\cup
%%Q_{I_0}^{i+j}$ and $q_2\in Q_{I_d}^{i+j}$.
%%\end{definition}

\section{Extrapolation Algorithms}
\label{section-extrapolation}

To extrapolate a possibly infinite sequence of minimal finite-word
(respectively, minimal weak B\"uchi) automata $A^1, A^2,{\dots}$ we
try to extract and extrapolate one of its finite incrementally growing
sampling sequences $S_I=A^{s_0},\dots,A^{s_k}$. The ``candidate''
extrapolation for $A^1, A^2,{\dots}$ is then given by the {\em
  extrapolation} of the sequence $S_I$. Let $A^{e_0}=A^{s_k}$ be the
last automaton of $S_I$. In order to extrapolate $S_I$, we simply
insert an extra increment between the head part of $A^{e_0}$ and its
head increment $Q_{I_0}^{e_0}$, and define its outgoing transitions in
order to make this extra increment communication equivalent to
$Q_{I_0}^{e_0}$. By repeatedly applying this extrapolation step we
obtain an extrapolated {\em infinite} sequence of automata
$A^{e_0},A^{e_1},\dots$ which is assumed to be the infinite extension
of the sampling sequence $S_I$. Formally, the {\em extrapolated
  sequence} of {\em origin} $A^{e_0}$ is the infinite sequence of
minimal automata $A^{e_0},A^{e_1},\dots$ such that
\begin{itemize}
\item
For each $i\,{\geq}\,0$,
$A^{s_0},A^{s_1},\dots,A^{s_{k-1}},A^{e_0},A^{e_1},\dots,A^{e_i}$
grows incrementally;
\item
For each $i>0$, $A^{e_i}$ is communication stable with $A^{e_0}$;
\item
For each $i>0$, the head increment detected between $A^{e_{i-1}}$ and
$A^{e_i}$ is communication equivalent to $Q_{I_0}^{e_0}$.
\end{itemize}

\noindent
The limit $A^{e_*}$ of the extrapolated sequence of origin $A^{e_0}$
is thus an extrapolation of the limit of $A^1, A^2, \dots$. In this
section, we present procedures to build a finite representation for
$A^{e_*}$. For technical reasons, the cases of finite-word and weak
B\"uchi automata are considered separately.

%%\begin{definition}
%%Let $A^{e_0}$ be the last automaton of an incrementally growing
%%sequence of automata $S$ with $|S|=k$.  Assume that
%%$\it{GROW}_{(S)}(A^{e_0})={\lbrace}Q_H^{e_0},{\lbrace}Q_{I_0}^{e_0},
%%\ldots,$\\$Q_{I_{k-1}}^{e_0}{\rbrace},Q_{T_f}^{e_0}{\rbrace}$. 

\subsection{Finite-word Automata}

Assume $A^{e_0}$ to be a finite-word automaton. We propose to build a
finite representation of $A^{e_*}$ by adding to $A^{e_0}$ new
transitions that simulate the existence of additional increments.

Consider the automaton $A^{e_0}$ with
$\it{GROW}_{(S_I)}(A^{e_0})={\lbrace}Q_H^{e_0},{\lbrace}Q_{I_0}^{e_0},
\ldots,
Q_{I_{k-1}}^{e_0}{\rbrace},$\\$Q_{T_f}^{e_0}{\rbrace}$. Suppose the
existence of a transition labeled by $a$ from a state $x$ of
$Q_{I_0}^{e_0}$ to a state $x'$ of $Q_{I_3}^{e_0}$. Since, the
increment $Q_{I_0}^{e_1}$ added between $A^{e_0}$ and $A^{e_1}$ is
communication equivalent to $Q_{I_0}^{e_0}$, there must exist a
transition $t$ labeled by $a$ from the state isomorphic to $x$ in
$Q_{I_0}^{e_1}$ to the state isomorphic to $x'$ in
$Q_{I_2}^{e_1}$. Our construction simulates $t$ in $A^{e_0}$ by adding
a transition $t'$ labeled by $a$ from $x$ to the state isomorphic to
$x'$ in $Q_{I_2}^{e_0}$. This construction can be repeated for the
addition of a second increment. The simulation of ``more than two
increments'' is done by adding transitions between states of
$Q_{I_0}^{e_0}$. Due to the communication equivalence property, a
similar principle has to be applied for outgoing transitions from
$Q_H^{e_0}$. The situation is illustrated in Figure \ref{extra-algo1}
where a part of $A^{e_0}$ has been represented. The dashed transitions
in the figure are the transitions added during the extrapolation
process.

\begin{figure}
\begin{center}
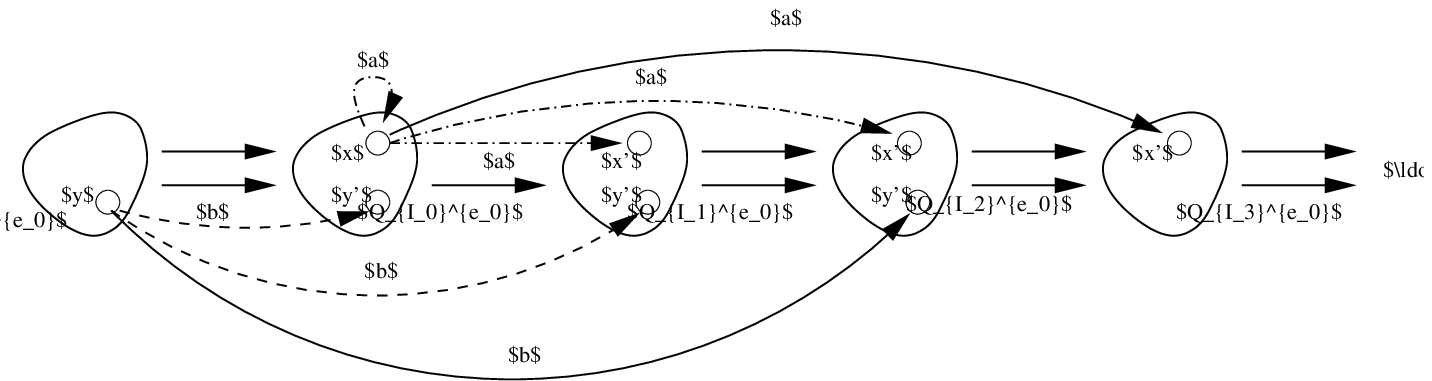
\end{center}
\caption{Illustration of the extrapolation procedure for finite-word
automata.
\label{extra-algo1}
}
\end{figure}

Formally, a finite representation of $A^{e_*}$ can be built from
$A^{e_0}$ with the construction underlined in the following
proposition.

\begin{proposition}
\label{extra-finite}
Let $A^{e_0}$ defined over $\Sigma$ be a minimal finite-word automaton
which is the last automaton of an incrementally growing sequence of
automata $S_I$.  Assume that
$\it{GROW}_{(S_I)}(A^{e_0})={\lbrace}Q_H^{e_0},{\lbrace}Q_{I_0}^{e_0},
\ldots,Q_{I_{k-1}}^{e_0}{\rbrace},Q_{T_f}^{e_0}{\rbrace}$. One can
compute a finite-word automaton $A^{e_*}$ that represents the limit of
the extrapolated sequence of origin $A^{e_0}$.
\end{proposition}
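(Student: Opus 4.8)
The plan is to construct $A^{e_*}$ by keeping the states of $A^{e_0}$ unchanged and enriching its transition relation with ``shortcut'' transitions that encode the effect of the infinitely many increments that the extrapolated sequence would eventually produce. Concretely, I would proceed from the observation illustrated just before the statement: a transition of $A^{e_0}$ that leaves the head part $Q_H^{e_0}$ or the head increment $Q_{I_0}^{e_0}$ and lands, by a forward jump, on a state $x'$ of some increment $Q_{I_\beta}^{e_0}$ must, by the communication equivalence and communication stability guaranteed by the extrapolated sequence, be mirrored in every $A^{e_i}$ by a transition whose target sits $\beta$ increments ahead of its source. Folding the extra increments of $A^{e_i}$ onto the $k$ increments available in $A^{e_0}$ turns such a jump of length $\beta$ into every shorter jump $0 \le \beta' < \beta$ from the same source. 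I would therefore define $A^{e_*}$ as $A^{e_0}$ together with, for every transition of $A^{e_0}$ labelled $a$ from a source $x$ in $Q_H^{e_0}$ or in $Q_{I_0}^{e_0}$ to a target $x'$ in some increment $Q_{I_\beta}^{e_0}$ with $\beta \ge 1$, and for every $0 \le \beta' < \beta$, a fresh transition labelled $a$ from $x$ to the state isomorphic to $x'$ (through the increment isomorphism) in $Q_{I_{\beta'}}^{e_0}$; the case $\beta'=0$ yields transitions internal to $Q_{I_0}^{e_0}$, which is exactly what lets the head increment act as a pumpable loop. The initial and accepting states stay those of $A^{e_0}$, and determinism is not required, since the statement only asks for a finite-word automaton.

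With $A^{e_*}$ in hand, I would prove $L(A^{e_*}) = \bigcup_{i \ge 0} L(A^{e_i})$ by two inclusions. For $\bigcup_i L(A^{e_i}) \subseteq L(A^{e_*})$ (safety), I would use that in $A^{e_i}$ the $i$ freshly inserted increments all sit at the head and are communication equivalent to $Q_{I_0}^{e_0}$, while the last $k$ increments reproduce those of $A^{e_0}$. I would define the folding map sending a state of increment $Q_{I_j}^{e_i}$ to the isomorphic state of $Q_{I_{\max(0,\,j-i)}}^{e_0}$ (and acting by the head and tail-end isomorphisms elsewhere), and check, through the case analyses of Definitions \ref{communication-equiv} and \ref{communication_stable}, that every transition of $A^{e_i}$ is matched by an original or a shortcut transition of $A^{e_*}$ under this map: jumps confined to the fixed tail use original transitions, jumps confined to the pumpable head collapse onto the internal ($\beta'=0$) shortcuts, and jumps crossing the boundary use the intermediate shortcuts. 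Because communication equivalence bounds every forward jump by $k-1$, the required shortcut always exists, so any accepting run of $A^{e_i}$ folds to an accepting run of $A^{e_*}$.

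For the converse inclusion $L(A^{e_*}) \subseteq \bigcup_i L(A^{e_i})$ (precision), I would take a finite accepting run of $A^{e_*}$ and unfold it. Reading the run left to right, I would maintain a counter recording how many increments each traversed shortcut implicitly skips (a shortcut replacing a jump of length $\beta$ by one of length $\beta'$ accounts for $\beta-\beta'$ inserted increments), let $i$ be the total skip count accumulated along the finite run, and then reconstruct a genuine run of $A^{e_i}$ by expanding each shortcut back into the corresponding jump across the actual increments of $A^{e_i}$. The reconstructed sequence of increment indices is monotone and terminates inside the fixed tail exactly as the original run does, and every expanded transition exists in $A^{e_i}$ by the communication equivalence of its increments, so the word is accepted by $A^{e_i}$. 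I expect this unfolding to be the main obstacle: one must argue that the implicit increment indices can be assigned consistently along an arbitrary interleaving of original transitions, internal head-increment loops, and boundary-crossing shortcuts, and that the chosen $i$ indeed makes the whole reconstructed run fit. This is where the communication-equivalence bookkeeping set up in Section \ref{section-increment} does the real work, whereas the two folding arguments are otherwise routine case checks.
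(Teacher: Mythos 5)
Your construction is exactly the one in the paper: for each transition from a state of $Q_H^{e_0}\cup Q_{I_0}^{e_0}$ into an increment $Q_{I_j}^{e_0}$ with $j\ge 1$, add the corresponding transition into every $Q_{I_\ell}^{e_0}$ with $0\le \ell<j$, the $\ell=0$ case supplying the pumpable loops. The paper merely asserts the two inclusions in one sentence each, whereas you sketch the folding and unfolding arguments explicitly; this is the same approach, carried out in more detail, and it is correct.
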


\begin{proof}
Let $\delta$ be the transition relation of $A^{e_0}$. The automaton
$A^{e_*}$ can be built from $A^{e_0}$ by augmenting $\delta$ using the
following rule:
\begin{quote}
For each state $q \in Q_H^{e_0}\cup Q_{I_0}^{e_0}$ and $a \in \Sigma$,
if $\delta(q,a)$ leads to a state $q'$ in an increment $Q_{I_{j}}{}$,
$1\leq j \leq k -1$, then for each $0\leq \ell < j$, add a transition
$(q,a,q'')$, where $q''$ is the state corresponding to $q'$ (by the
increment isomorphism) in $Q_{I_{\ell}}^{e_0}{}$.
\end{quote}
The added transitions, which include loops (transitions to
$Q_{I_0}^{e_0}$ itself) allow $A^{e_*}$ to simulate the runs of any of
the $A^{e_i}$ ($i\geq 0$). Conversely, it is also easy to see all
accepting runs generated using the added transitions correspond to
accepting runs of some $A^{e_i}$.
\end{proof}

%%\noindent
%%Observe that, if the state space of $A^{e_0}$ is partitioned in more
%%than one increment, then $A^{e_*}$ will be nondeterministic.

%%\begin{proposition}
%%Let $A^{e_0}$ be a NDD that is the origin of an extrapolated sequence
%%$A^{e_0}, A^{e_1},\dots$ If it exists, the finite-word automaton
%%$A^{e_*}=\bigcup_{i=0}A^{e_i}$ is a NDD.
%%\end{proposition}

%%\begin{proof}
%%Direct by noticing that each of the $A^{e_i}$ is a NDD.
%%\end{proof}

\begin{figure}[t]
\centering 
\hspace{2em}
\subfigure[$A^{e_0}$]{
\label{extra4}
\includegraphics[width=8cm]{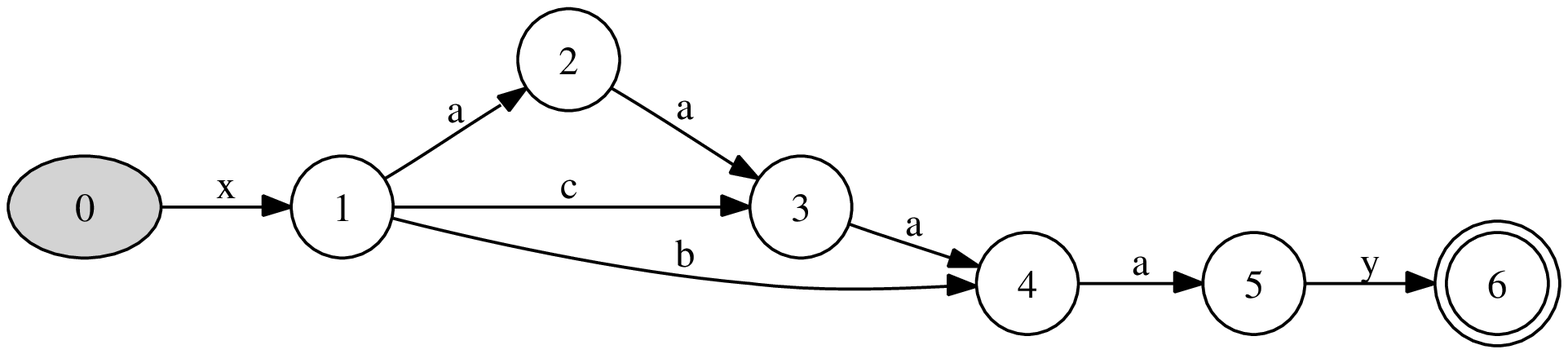}
\label{extra4}
}
\hspace{2em}
\subfigure[$A^{e_*}$]{
\label{extra5}
\includegraphics[width=8cm]{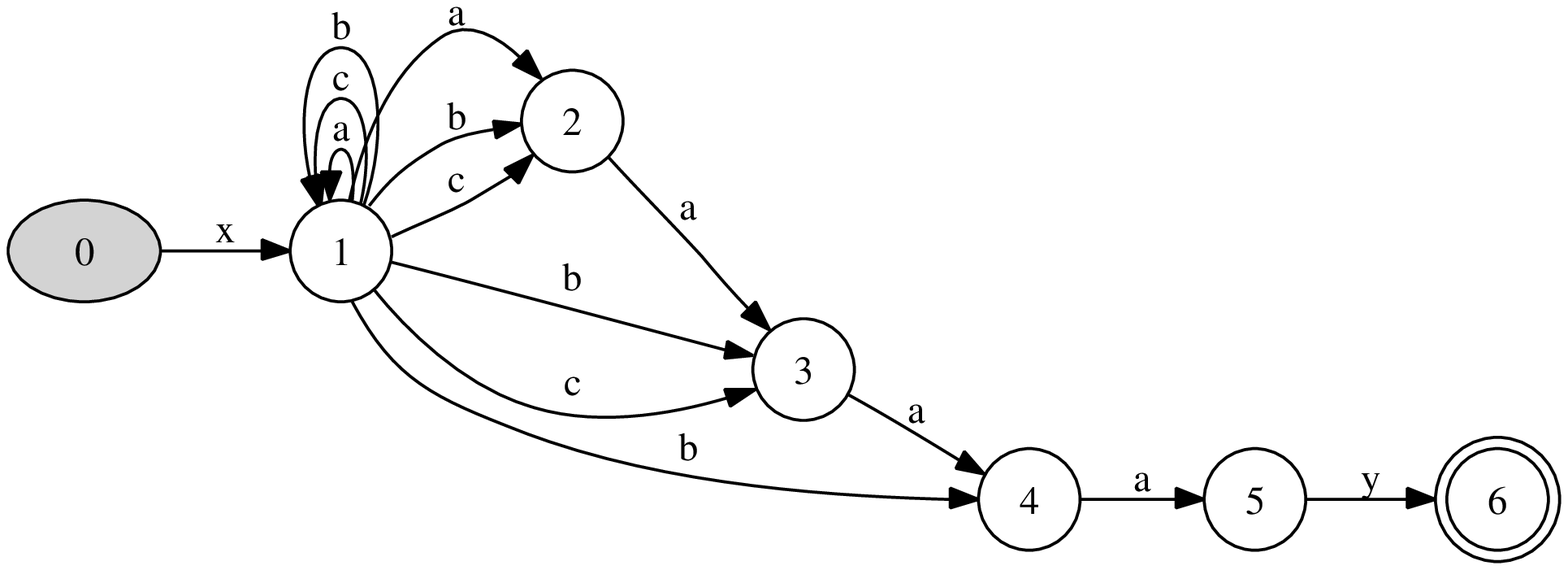}
\label{extra5}
}
\caption{Automata for Example \ref{example-extra1}.}
\end{figure}

\begin{example}
\label{example-extra1}
Consider the minimal finite-word automaton $A^{e_0}$ given in Figure
\ref{extra4}, with $Q_H^{e_0}={\lbrace}0{\rbrace}$,
$Q_{I_0}^{e_0}={\lbrace}1{\rbrace}$,
$Q_{I_1}^{e_0}={\lbrace}2{\rbrace}$,
$Q_{I_2}^{e_0}={\lbrace}3{\rbrace}$,
$Q_{I_3}^{e_0}={\lbrace}4{\rbrace}$, and
$Q_{T_f}^{e_0}={\lbrace}5,6{\rbrace}$. Applying the construction of
Proposition \ref{extra-finite} to $A^{e_0}$ gives the automaton
$A^{e_*}$ in Figure \ref{extra5}.
\end{example}

\begin{figure}
\begin{center}
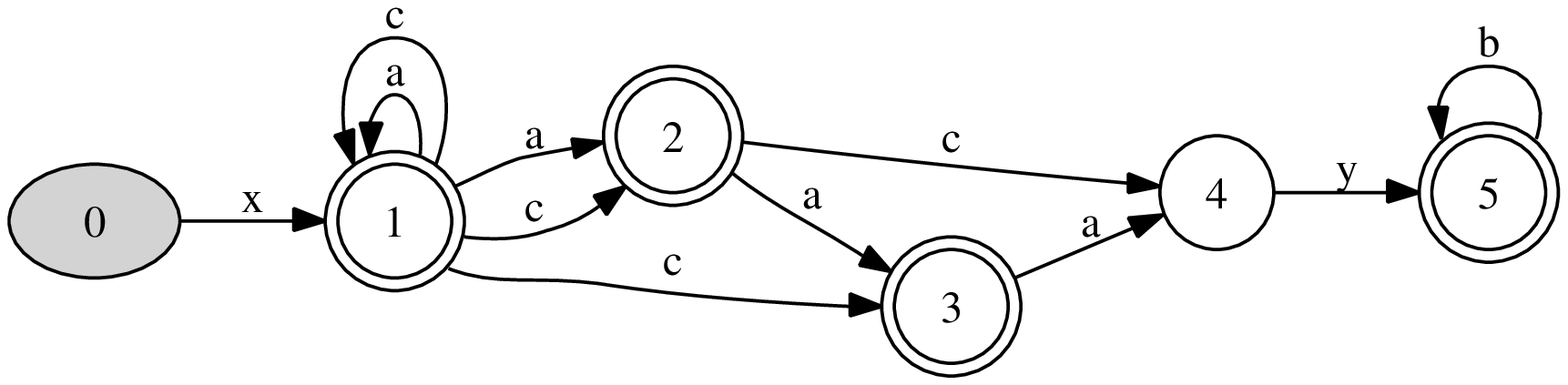
\end{center}
\caption{Illustration of the extrapolation algorithm for finite-word
automata with the addition of counter values.
\label{extra-algo2}
}
\end{figure}

We now show that it is possible to add a counter $c$ to $A^{e_*}$ in
such a way that when a word is accepted, the value of $c$ is the
smallest index $i$ of the automaton $A^{e_i}$ of the extrapolation
sequence by which the word is in fact accepted. Our construction
labels each transition added to $A^{e_0}$ with a value that represents
the number of increments simulated by this transition. In Figure
\ref{extra-algo2} we sketch the construction for the automaton given
in Figure \ref{extra-algo2}.

\begin{proposition}
\label{extra-finite-counter}
Let $A^{e_0}=(Q,\Sigma,Q_0,\delta,F)$ be a minimal finite-word
automaton which is the last automaton of a finite incrementally
growing sequence of automata $S_I$.  Assume that
$\it{GROW}_{(S_I)}(A^{e_0})={\lbrace}Q_H^{e_0},{\lbrace}Q_{I_0}^{e_0},
\ldots,Q_{I_{k-1}}^{e_0}{\rbrace},Q_{T_f}^{e_0}{\rbrace}$ and let
$A^{e_0}, A^{e_1},\dots$ be the extrapolated sequence of origin
$A^{e_0}$. One can compute a finite-word counter automaton $A^{e_*}_c$
such that (1) $L(A^{e_*}_c)={\bigcup}_{i{\geq}0}L(A^{e_i})$, (2) for
each $(w,i)$ $\in$ ${\cal{L}}(A^{e_*}_c)$, $w\in L(A^{e_i})$, and (3)
for each $i{\geq}0$, $w$ $\in$ $L(A^{e_i})$, $0{\leq}j{\leq}i$ exists
such that $(w,j)$ $\in$ ${\cal{L}}(A^{e_*}_c)$.
\end{proposition}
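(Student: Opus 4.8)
The plan is to reuse the construction of Proposition~\ref{extra-finite} verbatim and simply decorate its transitions with counter increments that record how many increments each transition simulates. Concretely, I would keep every transition already present in $A^{e_0}$ with counter increment $0$, and give every transition added by Proposition~\ref{extra-finite} — the one redirecting a jump whose original target lies in $Q_{I_j}^{e_0}$ to the corresponding state (by the increment isomorphism) in $Q_{I_\ell}^{e_0}$ with $\ell<j$ — the counter increment $j-\ell$. This is exactly the labelling sketched in Figure~\ref{extra-algo2}. Because the counterless automaton underlying this one-dimensional counter automaton $A^{e_*}_c$ is, by construction, precisely the finite-word automaton $A^{e_*}$ of Proposition~\ref{extra-finite}, and because the language of a counter automaton is the language of its counterless automaton, property~(1) is immediate: $L(A^{e_*}_c)=L(A^{e_*})=\bigcup_{i\geq 0}L(A^{e_i})$.

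The heart of the argument is the correspondence between the total counter value accumulated along a run of $A^{e_*}_c$ and the index $i$ of the automaton $A^{e_i}$ that this run simulates. First I would make precise an \emph{unfolding} map for property~(2): given an accepting run of $A^{e_*}_c$ on $w$, I replace each counter-$m$ transition (a folded jump reaching $Q_{I_\ell}^{e_0}$ that stands for a real jump of length $\ell+m$) by the genuine traversal of $m$ extra, isomorphic increments in a larger automaton, using communication equivalence to certify that the target states match up. A run whose counter values sum to $v$ is thereby turned into a legal accepting run of the automaton obtained from $A^{e_0}$ by inserting exactly $v$ increments, namely $A^{e_v}$; acceptance is preserved because only the head and increment regions are touched and the same tail-end state is reached. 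This yields $(w,i)\in{\cal L}(A^{e_*}_c)\Rightarrow w\in L(A^{e_i})$.

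For property~(3) I would run the correspondence backwards. Given an accepting run of $A^{e_i}$ on $w$, I \emph{fold} it by collapsing the $i$ inserted increments back onto the $k$ increments of $A^{e_0}$: each maximal traversal crossing inserted increments is replaced by the single added transition whose counter increment equals the number of increments it spans, while every step confined to the retained structure keeps counter $0$. The resulting run of $A^{e_*}_c$ is accepting (it ends in the same tail-end state), and its total counter value $j$ equals the number of inserted increments actually crossed; since at most $i$ increments were inserted, $j\leq i$, giving $(w,j)\in{\cal L}(A^{e_*}_c)$ with $j\leq i$.

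The hard part will be making the unfolding and folding maps watertight when a run behaves intricately inside the increment region — looping back into $Q_{I_0}^{e_0}$ through an added self-jump, mixing jumps of several lengths, or re-entering increments from $Q_H^{e_0}$. Definition~\ref{communication-equiv} (communication equivalence) and Definition~\ref{communication_stable} (communication stability) are exactly what guarantee that every such move has a faithful counterpart of the correct counter weight in the unfolded automaton, so the whole argument reduces to an induction on run length establishing the invariant that ``total counter value after a prefix'' equals ``number of inserted increments consumed by the corresponding prefix''. Pinning down this invariant cleanly — and in particular verifying that counter value $v$ places one precisely in $A^{e_v}$ rather than in a neighbouring automaton — is where the real work lies; the inequality $j\leq i$ in property~(3), as opposed to equality, is the expected and harmless consequence of $w$ being acceptable with fewer than $i$ increments.
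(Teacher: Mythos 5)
Your construction is exactly the one the paper gives: keep the transitions of $A^{e_0}$ with increment $0$ and label each transition added by Proposition~\ref{extra-finite} (redirected from $Q_{I_j}^{e_0}$ to $Q_{I_\ell}^{e_0}$) with the counter increment $j-\ell$, which is all the paper's proof actually contains. Your additional unfolding/folding argument for properties (2) and (3) fleshes out the correspondence between accumulated counter value and the index of the simulated $A^{e_i}$ that the paper leaves implicit (deferring to the informal remark in Proposition~\ref{extra-finite} that added runs correspond to runs of some $A^{e_i}$), so the proposal is correct and follows essentially the same route.
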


\begin{proof}
Let $\delta$ be the transition relation of $A^{e_0}$. The
one-dimensional counter automaton $A^{e_*}_c$ is given by
$(1,\vect{c},Q,\Sigma,Q_0,\triangle,F)$, with $\triangle$ defined as
follows:
\begin{itemize}
\item
Start with $\triangle={\lbrace}\emptyset{\rbrace}$;
\item
For each $(q,a,q')\in \delta$, add $(q,(a,\vect{0}),q')$ to $\triangle$;
\item
For each state $q \in Q_H^{e_0}\cup Q_{I_0}^{e_0}$ and $a \in
\Sigma$,
\begin{quote}
If $\delta(q,a)$ leads to a state $q'$ in an increment $Q_{I_{j}}{}$,
$1\leq j \leq k -1$, then for each $0\leq \ell < j$, add to
$\triangle$ a transition $(q,(a,j-l),q'')$, where $q''$ is the state
corresponding to $q'$ (by the increment isomorphism) in
$Q_{I_{\ell}}^{e_0}{}$.
\end{quote}
\end{itemize}
\end{proof}

\noindent
Let $A^{e_0}_c$ be the counter-zero automaton corresponding to
$A^{e_0}$. We directly see that for each $i>0$, $w$ $\in$
$L(A^{e_i})\setminus L(A^{e_0})$, $1\,{\leq}\,j\,{\leq}\,i$ exists
such that $(w,j)$ $\in$ ${\cal{L}}(A^{e_*}_c)\setminus
{\cal{L}}(A^{e_0}_c)$. Indeed, since $w\notin L(A^{e_0})$, any
accepted run on $w$ must pass by states of one of the added increments
and $j$ cannot be equal to $0$.

%%Observe
%%that the maximal increment value of $A^{e_*}_c$ coincides with the
%%communication diameter of $A^{e_0}$.

\begin{figure}[t]
\centering 
\includegraphics[width=10cm]{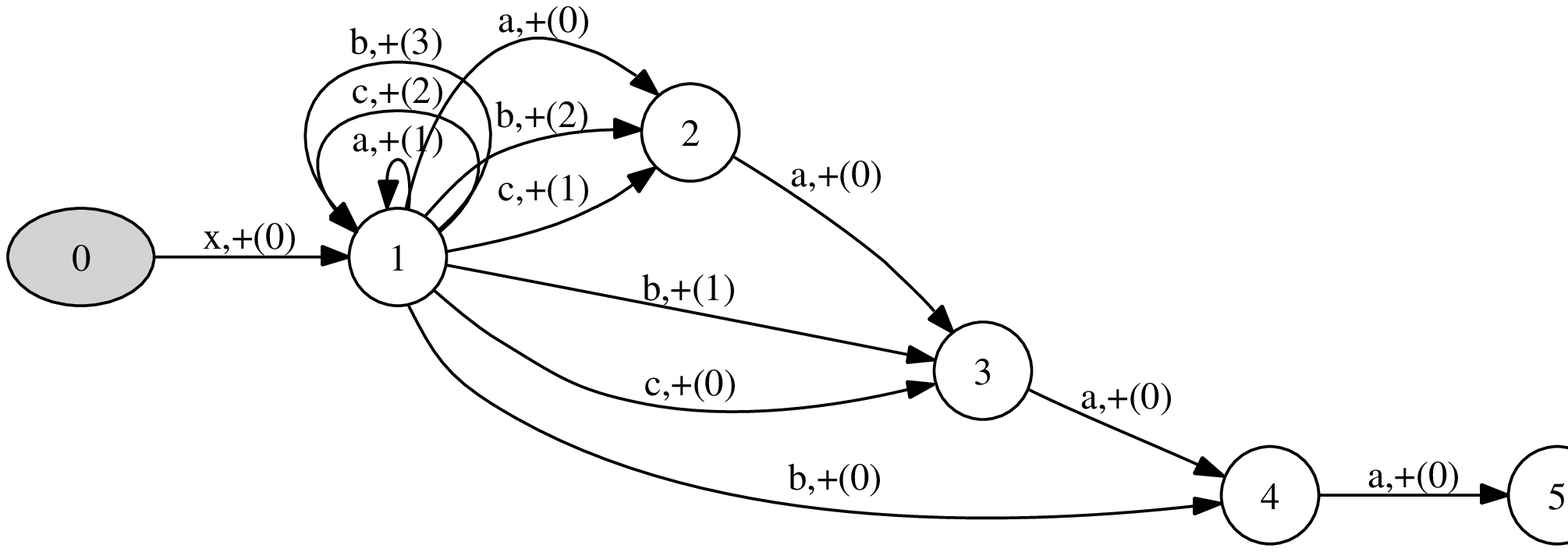}
\label{ex-counter0}
\caption{Automaton for Example \ref{example-extra2}.}
\label{extra6}
\end{figure}

\begin{example}
\label{example-extra2}
Figure \ref{extra6} presents the result of applying the construction
of Proposition \ref{extra-finite-counter} to Automaton $A^{e_0}$ of
Example \ref{example-extra1}.
\end{example}

\subsection{Weak B\"uchi Automata}
\label{extraproc-weak}

Assume now $A^{e_0}$ to be a deterministic weak B\"uchi automaton. In
such a case, a finite representation of the extrapolated sequence of
origin $A^{e_0}$ cannot be computed with the construction of
Proposition \ref{extra-finite}.

\begin{figure}[t]
\centering 
\subfigure[$A^{e_0}$]{
\label{counter-extra1}
\includegraphics[width=6.5cm]{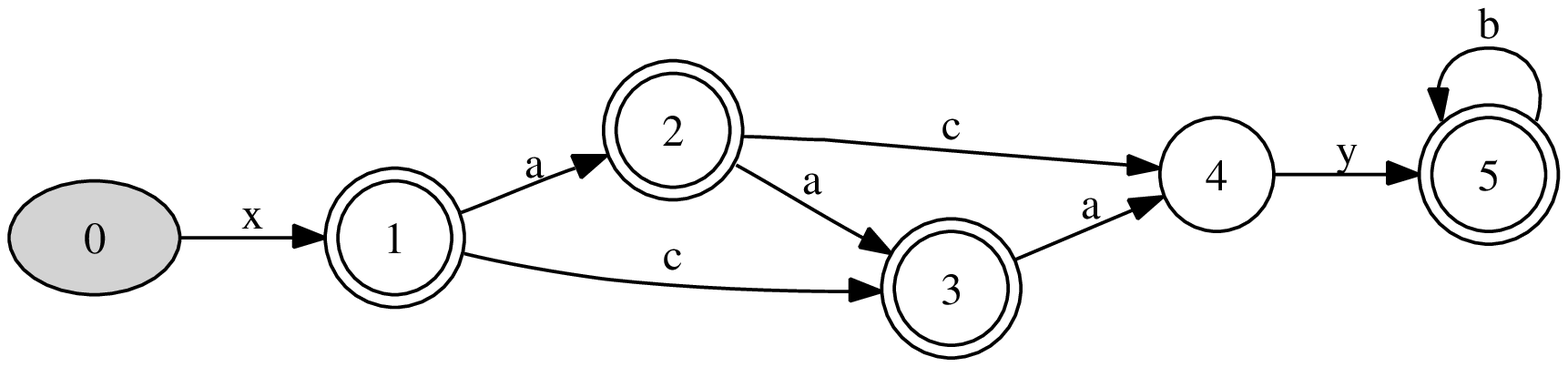}
\label{counter-extra1}
}
\hspace{2em}
\subfigure[$A^{e_*}$]{
\label{counter-extra2}
\includegraphics[width=6.5cm]{figs/extra2.ps}
\label{counter-extra2}
}
\caption{A weak B\"uchi automaton and its extrapolation with the
  construction of Proposition \ref{extra-finite} .}
\label{ex-counters2}
\end{figure}

\begin{example}
\label{example-extrapolation1}
Consider the minimal weak B\"uchi automaton $A^{e_0}$ given in Figure
\ref{counter-extra1}, with $Q_H^{e_0}={\lbrace}0{\rbrace}$,
$Q_{I_0}^{e_0}={\lbrace}1{\rbrace}$,
$Q_{I_1}^{e_0}={\lbrace}2{\rbrace}$,
$Q_{I_2}^{e_0}={\lbrace}3{\rbrace}$, and
$Q_{T_f}^{e_0}={\lbrace}4,5{\rbrace}$.  Applying the construction of
Proposition \ref{extra-finite} to $A^{e_0}$ gives the automaton
$A^{e_*}$ in Figure \ref{counter-extra2}. This automaton accepts the
word $xa^\omega$ which cannot be accepted by one of the automata
$A^{e_i}$ in the extrapolated sequence of origin $A^{e_0}$.
\end{example}

The example above shows that applying the construction of Proposition
\ref{extra-finite} to $A^{e_0}$ may introduce new cycles from states
of $Q_{I_{0}}^{e_0}$ to themselves. Since the accepting runs of the
$A^{e_i}$ can only go through a {\em finite number} of increments, it
is essential to make these cycles nonaccepting. The problem can easily
be solved, as stated with the following proposition.

\begin{proposition}
\label{extra-infinite}
Let $A^{e_0}$ defined over $\Sigma$ be a minimal weak B\"uchi
automaton which is the last element of an incrementally growing
sequence of automata $S_I$.  Assume that
$\it{GROW}_{(S_I)}(A^{e_0})={\lbrace}Q_H^{e_0},{\lbrace}Q_{I_0}^{e_0},
\ldots,Q_{I_{k-1}}^{e_0}{\rbrace},Q_{T_f}^{e_0}{\rbrace}$. One can
compute a weak B\"uchi automaton $A^{e_*}$ that represents the limit
of the extrapolated sequence of origin $A^{e_0}$.
\end{proposition}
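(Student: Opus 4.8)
The plan is to reuse the transition structure produced by Proposition~\ref{extra-finite} and then repair the single feature that makes it unsound for infinite words. First I would apply the augmentation rule of Proposition~\ref{extra-finite} to $A^{e_0}$. Among the transitions it adds, the only ones that create \emph{new} cycles are those obtained for $\ell=0$ whose origin lies in $Q_{I_0}^{e_0}$: they send a state of $Q_{I_0}^{e_0}$ back into $Q_{I_0}^{e_0}$, and iterating one of them simulates crossing an increment boundary. Since every $A^{e_i}$ crosses only finitely many boundaries, letting such a loop be taken infinitely often through an accepting state yields words accepted by no $A^{e_i}$; this is exactly what the preceding example exhibits with $xa^\omega$. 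The remaining added transitions (originating in $Q_H^{e_0}$, or going from $Q_{I_0}^{e_0}$ to some $Q_{I_\ell}^{e_0}$ with $\ell\ge 1$) only move strictly downstream and hence introduce no new cycle. Note that the automaton of Proposition~\ref{extra-finite} is already nondeterministic, so I am free to use nondeterminism here.

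The construction I would use splits the automaton into two layers. The \emph{committed layer} is the finite-word automaton of Proposition~\ref{extra-finite} with the problematic $\ell=0$ self-loops on $Q_{I_0}^{e_0}$ \emph{removed}, keeping every other transition and the accepting status inherited from $A^{e_0}$; because only those loops are deleted, all cycles surviving in this layer are internal increment cycles, head cycles, or tail-end cycles coming from $A^{e_0}$. The \emph{free layer} is a complete copy of the extrapolated structure, including the boundary-crossing loops, in which \emph{every} state is declared nonaccepting. All initial states lie in the free layer, and for every non-boundary transition I add a one-way ``commit'' edge from the free layer into the corresponding state of the committed layer; no edge goes from the committed layer back to the free layer. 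Thus a run may cross increment boundaries only while it is in the (nonaccepting) free layer, and once it commits it can cross no further boundary.

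Two verifications then remain. For correctness, soundness ($\bigcup_i L(A^{e_i})\subseteq L(A^{e_*})$) follows because an accepting run of a fixed $A^{e_i}$ uses finitely many increments: I mimic its finitely many boundary crossings in the free layer, then commit, after which the genuine accepting cycle (internal to an increment or in the tail-end set) is visited infinitely often. Completeness ($L(A^{e_*})\subseteq\bigcup_i L(A^{e_i})$) holds because an accepting run visits $F$ infinitely often, the free layer is nonaccepting, so the run must eventually enter and stay in the committed layer; there only boundedly many boundaries can be crossed, so the run is a run of some $A^{e_i}$. For weakness I would exhibit the block partition: the free layer is a single nonaccepting region, the committed layer reuses the blocks of $A^{e_0}$, which are homogeneous since $A^{e_0}$ is weak, and all cross-layer edges point from the free blocks down into the committed blocks, giving the required partial order with no upward edge.

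The hard part will not be the language equivalence but keeping the result genuinely \emph{weak} without discarding any legitimate $\omega$-behaviour. In particular I must ensure that accepting cycles lying \emph{inside} a single increment (they are isomorphic across all increments, hence legitimate) stay reachable and accepting in the committed layer, and that the downward commit edges never fuse an accepting and a nonaccepting cycle into one mixed strongly connected component. This is precisely where the clean separation, guaranteed by the increment isomorphism and the communication-equivalence conditions, between internal increment transitions and the added boundary-crossing transitions is essential; when an increment happens to be internally acyclic the construction degenerates to simply making the looping states nonaccepting.
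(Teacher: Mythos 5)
Your construction is correct, and it rests on the same key observation as the paper's proof: the only unsound feature of carrying the finite-word construction of Proposition~\ref{extra-finite} over to weak B\"uchi automata is that the $\ell=0$ transitions originating in $Q_{I_0}^{e_0}$ create new cycles, and these must be made nonaccepting. Where you differ is in how much gets duplicated. The paper duplicates only the head increment: it builds a nonaccepting isomorphic copy of $Q_{I_0}^{e_0}$ (with its internal transitions and its exits towards $Q_{I_1}^{e_0},\ldots,Q_{I_{k-1}}^{e_0}$ and $Q_{T_f}^{e_0}$) and reroutes every boundary-crossing transition that would have returned to $Q_{I_0}^{e_0}$ into that copy, so that all new cycles are confined to the nonaccepting copy and the rest of $A^{e_0}$ is untouched. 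You instead duplicate the entire extrapolated structure into a nonaccepting ``free'' layer carrying the loops, with one-way commit edges into a loop-free ``committed'' layer. Your automaton is roughly twice as large, which is not innocent here since $A^{e_*}$ must subsequently be determinized, complemented and composed in the safety and preciseness checks, whereas the paper pays only $|Q_{I_0}^{e_0}|$ extra states. In exchange, your layering makes completeness and weakness essentially immediate (an accepting run must leave the all-nonaccepting layer and can then simulate no further increment), and it avoids a subtlety of the paper's construction: there the accepting original $Q_{I_0}^{e_0}$ is unreachable from its nonaccepting copy, so a run whose accepting cycle lies inside a late extra increment must be recovered nondeterministically, at its last boundary crossing, by branching into one of $Q_{I_1}^{e_0},\ldots,Q_{I_{k-1}}^{e_0}$ and invoking the increment isomorphism, while your commit edges let such a run land directly in the accepting copy of $Q_{I_0}^{e_0}$. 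Both constructions produce a nondeterministic weak automaton, consistent with Proposition~\ref{counter-not-weak}.
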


\begin{proof}
Let $\delta$ be the transition relation of $A^{e_0}$. The automaton
$A^{e_*}$ that represents the limit of the extrapolated sequence whose
origin is $A^{e_0}$ can be built from $A^{e_0}$ by augmenting its set
of states and transitions with the following rules:

\begin{enumerate}
\item
Build an isomorphic copy $A_{I_{0}\mbox{\footnotesize\it copy}}$ of
the automaton formed by the states in $Q_{I_0}^{e_0}$, the transitions
between them, and the outgoing transitions from these states to states
in $Q_{I_1}^{e_0}$, $Q_{I_2}^{e_0}$, \ldots, $Q_{I_{k-1}}^{e_0}$, and
$Q_{T_f}^{e_0}$;
\item
Make all the states of $A_{I_{0}\mbox{\footnotesize\it copy}}$
nonaccepting;
\item
For each state $q \in Q_{I_0}^{e_0}\,\cup\,Q_H^{e_0}$ and $a \in
\Sigma$, if $\delta(q,a)$ leads to a state $q'$ in an increment
$Q_{I_{j}}^{e_0}$, $1\leq j \leq k -1$, then
\begin{enumerate}
\item
For each $1\leq \ell < j$, add a transition
$(q,a,q'')$, where $q''$ is the state corresponding to $q'$ (by the
increment isomorphism) in $Q_{I_{\ell}}^{e_0}{}$. Also, add a
transition $(q,a,q'')$, where $q''$ is the state corresponding to $q'$
in $A_{I_{0}\mbox{\footnotesize\it copy}}$;
\item
If $q \in Q_{I_0}$, then let $q_{\mbox{\footnotesize\it copy}}$ be the
state corresponding to $q$ in $A_{I_{0}\mbox{\footnotesize\it
copy}}$. For each $1\leq \ell < j$, add a transition
$(q_{\mbox{\footnotesize\it copy}},a,q'')$, where $q''$ is the state
corresponding to $q'$ (by the increment isomorphism) in
$Q_{I_{\ell}}^{e_0}{}$. Also, add a transition
$(q_{\mbox{\footnotesize\it copy}},a,q'')$, where $q''$ is the state
corresponding to $q'$ in $A_{I_{0}\mbox{\footnotesize\it copy}}$.
\end{enumerate}
\end{enumerate}
\end{proof}

\noindent
The construction in the proposition above follows from the one given
in Proposition \ref{extra-finite}. The only slight difference is in
the duplication of the head increment, which is needed to make sure
that new cycles added to $A^{e_0}$ are nonaccepting.

\begin{figure}[t]
\centering 
\label{counter-extra3}
\includegraphics[width=9cm]{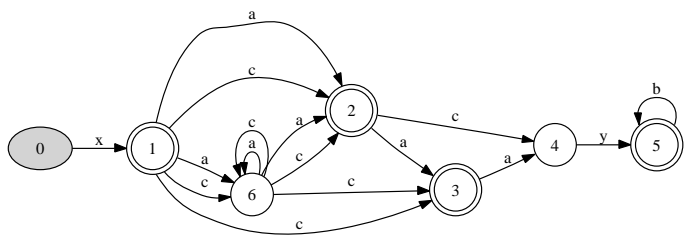}
\label{counter-extra3}
\caption{A weak B\"uchi automaton for Example \ref{example-extrapolation2} .}
\label{ex-counters3}
\end{figure}

\begin{example}
\label{example-extrapolation2}
The automaton in Figure \ref{ex-counters3} is the result of applying
the construction of Proposition \ref{extra-infinite} to Automaton
$A^{e_0}$ of Example \ref{example-extrapolation1}.
\end{example}

\begin{figure}[t]
\centering 
\subfigure[$A^{e_0}$]{
\label{counter-prop1}
\includegraphics[width=6.5cm]{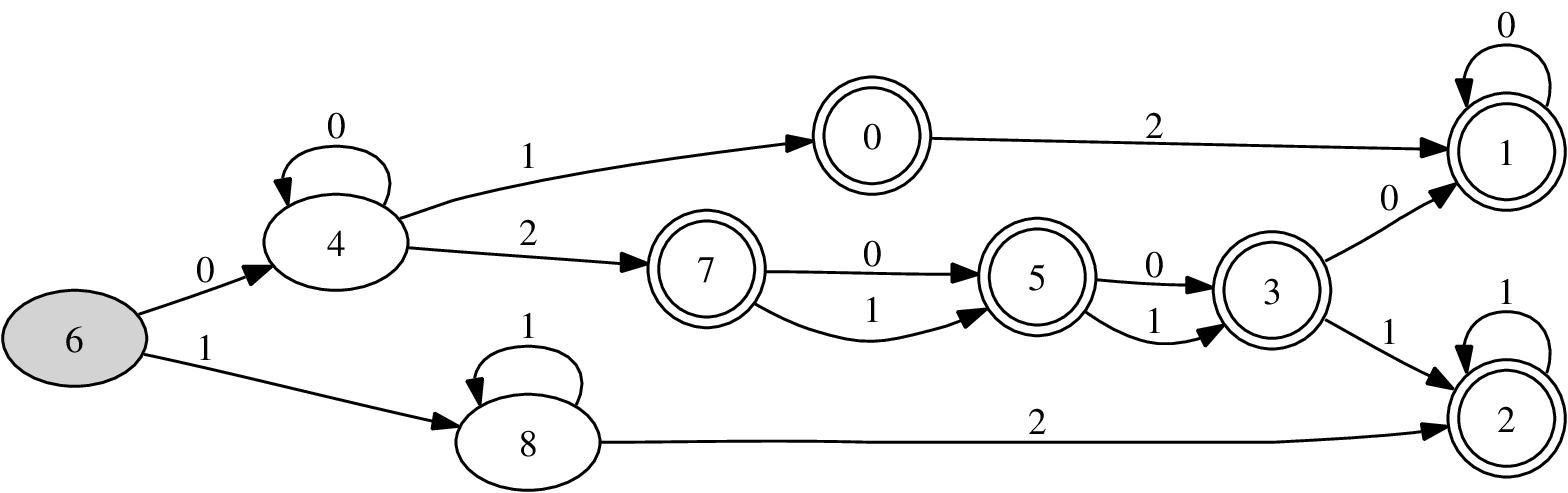}
\label{counter-prop1}
}
\hspace{2em}
\subfigure[$A^{e_*}_1$]{
\label{counter-prop2}
\includegraphics[width=6.5cm]{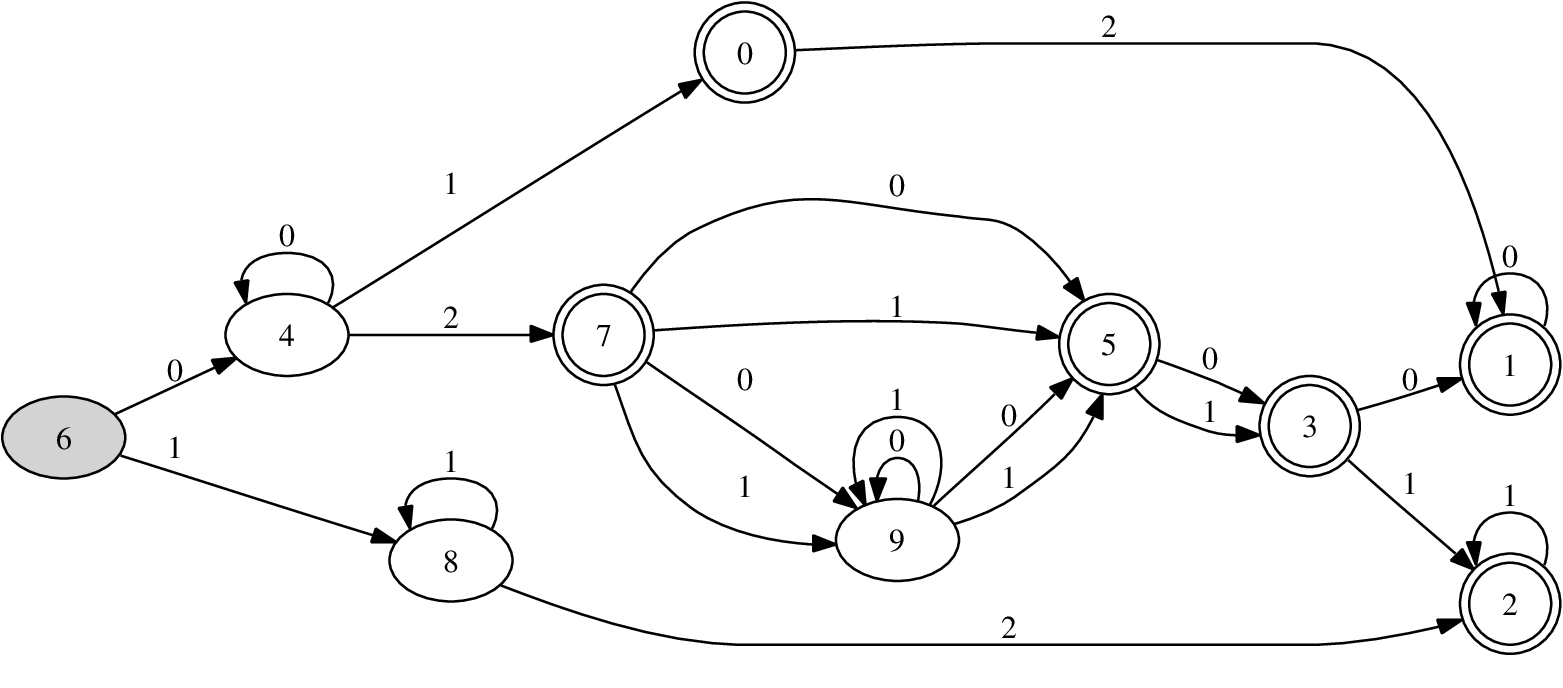}
\label{counter-prop2}
}
\hspace{2em}
\subfigure[$A^{e_*}_2$]{
\label{counter-prop3}
\includegraphics[width=8cm]{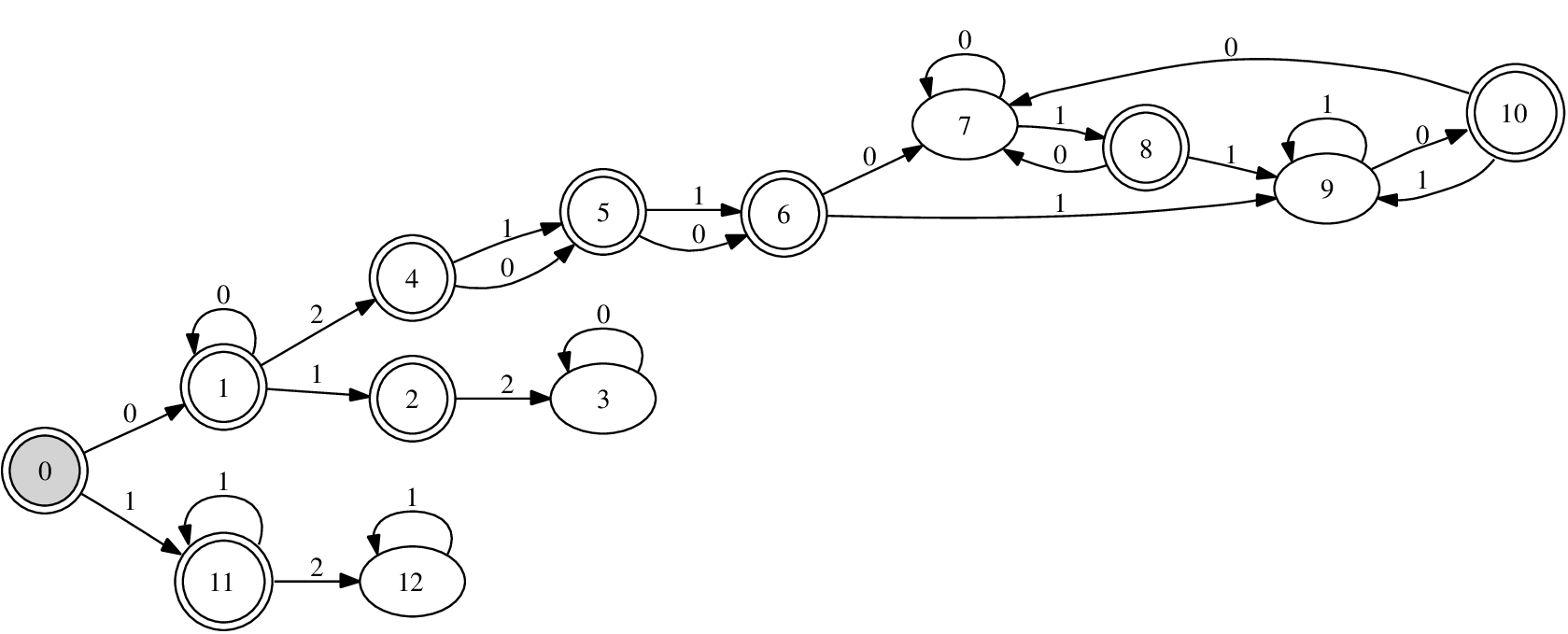}
\label{counter-prop3}
}
\caption{B\"uchi automata for the proof of Proposition
\ref{counter-not-weak} .}
\label{ex-counters2}
\end{figure}

%%\begin{proposition}
%%Let $A^{e_0}$ be a minimal weak B\"uchi automaton which is the last
%%element of a finite incrementally growing sequence of automata
%%$S_I$. The automaton $A^{e_*}$ obtained by applying the construction
%%of Proposition \ref{extra-infinite} to $A^{e_0}$ and
%%$\it{GROW}_{(S_I)}(A^{e_0})$ is weak.
%%\end{proposition}

%%\begin{proof}
%%The proposition follows directly from the decomposition
%%$\it{GROW}_{(S_I)}(A^{e_0})$ and the construction in the proof of
%%Proposition \ref{extra-infinite}.
%%\end{proof}

\begin{proposition}
\label{counter-not-weak}
Let $A^{e_*}$ be the result of applying the construction of
Proposition \ref{extra-infinite} to $A^{e_0}$, the last automaton of a
finite incrementally growing sequence of deterministic weak B\"uchi
automata. The automaton $A^{e_*}$ may not be weak deterministic.
\end{proposition}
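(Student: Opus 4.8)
The statement is an existence claim (``may not be''), so the plan is to exhibit a single witness rather than to reason about all admissible inputs. I would fix the concrete minimal deterministic weak B\"uchi automaton $A^{e_0}$ of Figure~\ref{counter-prop1}, first checking that it really is the last element of an incrementally growing sequence so that the hypotheses of Proposition~\ref{extra-infinite} are met and the construction applies to it. Applying the construction of Proposition~\ref{extra-infinite} to $A^{e_0}$ yields the automaton $A^{e_*}_1 = A^{e_*}$ of Figure~\ref{counter-prop2}; by the construction this is a \emph{nondeterministic} weak B\"uchi automaton (the added transitions from $Q_{I_0}^{e_0}\cup Q_H^{e_0}$ fan out to several targets), and I would record its language $L=L(A^{e_*})=\bigcup_{i\geq 0}L(A^{e_i})$. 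The feature that makes the example work is that, because the self-cycles created through the duplicated increment $A_{I_0\mbox{\footnotesize\it copy}}$ are nonaccepting, $L$ consists exactly of the words that traverse the increment region only \emph{finitely many} times before reaching an accepting tail behaviour. This is a ``finitely often'' condition, i.e. an $F_\sigma$ language that is not $G_\delta$, and hence not in $\fsgd$.

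I would then determinize $A^{e_*}$ with the procedure of \cite{Saf92}, obtaining the deterministic B\"uchi automaton $A^{e_*}_2$ of Figure~\ref{counter-prop3}. The key verification is to point, in that figure, to a single reachable strongly connected component that carries both an accepting cycle (passing through a final state) and a nonaccepting cycle (avoiding all final states). Exhibiting these two cycles explicitly shows that $A^{e_*}_2$ is \emph{not} inherently weak in the sense of the definition recalled in Section~\ref{section-background}, and in particular that the determinized automaton cannot be turned into a weak one simply by inverting accepting and nonaccepting states.

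To turn this local observation into the claimed impossibility I would invoke the result of \cite{BJW01} quoted earlier: if the language of a B\"uchi automaton were accepted by some deterministic weak B\"uchi automaton, then the output of the determinization procedure would be inherently weak. Contrapositively, since $A^{e_*}_2$ is the determinization of $A^{e_*}$ and is not inherently weak, $L$ is accepted by \emph{no} deterministic weak B\"uchi automaton, so $A^{e_*}$ is neither deterministic weak nor convertible to such an automaton. I expect the main obstacle to be this last step: one must be sure the failure is genuine rather than an artifact of a clumsy determinization, i.e. that the mixed accepting/nonaccepting component is \emph{forced} by $L$ and not merely by the chosen automaton. The clean way to secure this is the topological argument sketched above: deterministic weak B\"uchi automata recognize exactly the $\omega$-regular languages in $\fsgd$, while $L$ is $F_\sigma$ but not $G_\delta$; this makes the non-inherent-weakness of every equivalent determinization unavoidable, and incidentally records the conceptual crux of the proposition, namely that nondeterministic weak automata are strictly more expressive than deterministic ones.
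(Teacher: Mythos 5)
Your proof follows essentially the same route as the paper's: the same witness $A^{e_0}$ from Figure~\ref{counter-prop1}, the same application of the construction of Proposition~\ref{extra-infinite} yielding the nondeterministic $A^{e_*}_1$, determinization to $A^{e_*}_2$, and the observation that the result is not inherently weak. You are in fact somewhat more explicit than the paper at the final step, where the paper simply asserts that the non-inherently-weak determinization ``cannot be turned to a weak B\"uchi automaton,'' while you spell out the appeal to the result of \cite{BJW01} (plus a topological cross-check, which would still need to be verified for the concrete language but is not essential) to conclude that the failure is forced by the language rather than by the particular determinization.
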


\begin{proof}
Consider the minimal weak B\"uchi automaton $A^{e_0}$ given in Figure
\ref{counter-prop1}, with $Q_H^{e_0}={\lbrace}6,4{\rbrace}$,
$Q_{I_0}^{e_0}={\lbrace}7{\rbrace}$,
$Q_{I_1}^{e_0}={\lbrace}5{\rbrace}$, and
$Q_{T_f}^{e_0}={\lbrace}0,1,2,3,8{\rbrace}$.  Applying the
construction of Proposition \ref{extra-infinite} to $A^{e_0}$ gives
the nondeterministic weak B\"uchi automaton $A^{e_*}$ in Figure
\ref{counter-prop2}. In this automaton, the state labeled by $9$ is
the duplication of $Q_{I_0}^{e_0}$. The result of determinizing
$A^{e_*}_1$ is the deterministic co-B\"uchi automaton $A^{e_*}_2$ that
is given in Figure \ref{counter-prop3}. It is easy to see that this
automaton is not inherently weak and, consequently, cannot be turned
to a weak B\"uchi automaton.
\end{proof}

%%\begin{proposition}
%%Let $A^{e_0}$ be a RVA that is the origin of an extrapolated sequence
%%$A^{e_0}, A^{e_1},\dots$ If it exists, the weak B\"uchi automaton
%%$A^{e_*}=\bigcup_{i=0}A^{e_i}$ is a possibly non deterministic RVA.
%%\end{proposition}

%%\begin{proof}
%%Direct by noticing that each of the $A^{e_i}$ is a RVA.
%%\end{proof}

Following what has been done for the case of finite-word automata, we
now propose to add a counter $c$ to $A^{e_*}$ in such a way that
when a word is accepted, the value of $c$ is the smallest index $i$ of
the automaton $A^{e_i}$ of the extrapolated sequence by which the
word is in fact accepted.

\begin{proposition}
\label{extra-infinite-counter}
Let $A^{e_0}=(Q,\Sigma,Q_0,\delta,F)$ be a minimal weak B\"uchi
automaton which is the last element of an incrementally growing
sequence of automata $S_I$.  Assume that
$\it{GROW}_{(S_I)}(A^{e_0})={\lbrace}Q_H^{e_0},{\lbrace}Q_{I_0}^{e_0},
\ldots,Q_{I_{k-1}}^{e_0}{\rbrace},Q_{T_f}^{e_0}{\rbrace}$ and let
$A^{e_0}, A^{e_1},\dots$ be the extrapolated sequence of origin
$A^{e_0}$. One can compute a run-bounded weak B\"uchi counter
automaton $A^{e_*}_c$ such that (1)
$L(A^{e_*}_c)={\bigcup}_{i{\geq}0}A^{e_i}$, (2) for each $(w,i)$ $\in$
${\cal{L}}(A^{e_*}_c)$, $w\in L(A^{e_i})$, and (3) for each $w$ $\in$
$L(A^{e_i})$, $j{\leq}i$ exists such that $(w,j)$ $\in$
${\cal{L}}(A^{e_*}_c)$.
\end{proposition}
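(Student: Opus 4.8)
The plan is to reuse the weak B\"uchi transition graph produced by Proposition~\ref{extra-infinite} and to superimpose on it the counter-labelling of Proposition~\ref{extra-finite-counter}. First I would run the construction of Proposition~\ref{extra-infinite} on $A^{e_0}$, producing the (nondeterministic but weak) automaton $A^{e_*}$ together with the nonaccepting copy $A_{I_{0}\mbox{\footnotesize\it copy}}$ of the head increment. I would then turn $A^{e_*}$ into a one-dimensional counter automaton $A^{e_*}_c=(1,\vect{c},Q,\Sigma,Q_0,\triangle,F)$ by assigning to every transition the number of increments it simulates: each transition inherited from $\delta$ receives $\vect{0}$, and every \emph{added} transition is weighted as in Proposition~\ref{extra-finite-counter}. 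Concretely, when $\delta(q,a)$ leads to $q'\in Q_{I_j}^{e_0}$ and the construction redirects it to the copy of $q'$ in $Q_{I_\ell}^{e_0}$ ($1\le \ell<j$), the new transition is labelled $j-\ell$; when it is redirected to the image of $q'$ in $A_{I_{0}\mbox{\footnotesize\it copy}}$ (which plays the role of the missing index $\ell=0$) it is labelled $j$; the transitions leaving the copy states $q_{\mbox{\footnotesize\it copy}}$ are labelled in the same way. Since the counterless automaton of $A^{e_*}_c$ is by definition the $A^{e_*}$ of Proposition~\ref{extra-infinite}, property~(1) is immediate from that proposition.

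The step I expect to be the main obstacle is showing that $A^{e_*}_c$ is \emph{run-bounded}, i.e. that no accepting strongly connected component $S\subseteq F$ contains a transition with a nonzero increment; this is exactly the property for which the nonaccepting copy was introduced. The only transitions carrying a nonzero increment are the added ones, and they are of two kinds. Those touching $A_{I_{0}\mbox{\footnotesize\it copy}}$ have a nonaccepting endpoint, hence cannot be internal to any $S\subseteq F$. The remaining added transitions go \emph{forward}, from a state of $Q_H^{e_0}\cup Q_{I_0}^{e_0}$ to a state of some $Q_{I_\ell}^{e_0}$ with $\ell\ge 1$; given the forward, layered arrangement of the growing decomposition (Figure~\ref{figure2}), no path returns from $Q_{I_\ell}^{e_0}$ to $Q_H^{e_0}$ or $Q_{I_0}^{e_0}$ except through the copy, so these transitions lie on no cycle of the accepting part and are internal to no accepting component. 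Invoking the weakness of $A^{e_*}$ (each component being uniformly accepting or nonaccepting), I conclude that every accepting component uses only the original $\vect{0}$-labelled transitions, which is precisely run-boundedness; in particular, every accepting run eventually stops incrementing its counter, so ${\cal{L}}(A^{e_*}_c)$ is well defined.

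It then remains to establish the counter properties by the run-to-run correspondence already used for Proposition~\ref{extra-finite-counter}, read on an accepting (hence eventually increment-free) run. For property~(2), an accepting run of $A^{e_*}_c$ whose increments sum to $i$ uses added transitions accounting for $i$ simulated increments; expanding each simulated increment along the increment isomorphism turns the run into a genuine accepting run of $A^{e_i}$, so $w\in L(A^{e_i})$. For property~(3), given $w\in L(A^{e_i})$, an accepting run in $A^{e_i}$ passes through at most $i$ of the added increments; folding these back onto the states of $A^{e_0}$ and $A_{I_{0}\mbox{\footnotesize\it copy}}$ yields an accepting run of $A^{e_*}_c$ whose accumulated counter value is some $j\le i$, hence $(w,j)\in{\cal{L}}(A^{e_*}_c)$. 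As in the remark following Proposition~\ref{extra-finite-counter}, any $w\in L(A^{e_i})\setminus L(A^{e_0})$ must traverse at least one added increment, so its minimal counter value is at least $1$, confirming that the counter records the least index $i$ with $w\in L(A^{e_i})$.
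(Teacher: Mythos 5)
Your proposal follows essentially the same route as the paper: the paper's proof is exactly the construction you describe, i.e., the transition graph of Proposition~\ref{extra-infinite} (including the nonaccepting copy $A_{I_{0}\mbox{\footnotesize\it copy}}$) with the counter labels of Proposition~\ref{extra-finite-counter} superimposed ($\vect{0}$ on inherited transitions, $j-\ell$ on redirections into $Q_{I_\ell}^{e_0}$, and $j$ on redirections into the copy). The only difference is that you additionally spell out the run-boundedness argument and the run-to-run correspondence for properties (2) and (3), which the paper leaves implicit; these added justifications are correct.
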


\begin{proof}
Let $\delta$ be the transition relation of $A^{e_0}$. The
one-dimensional counter automaton $A^{e_*}_c$ is given by
$(1,\vect{c},Q',\Sigma,Q_0,\triangle,F)$ , with $Q$ and $\triangle$
defined as follows:
\begin{enumerate}
\item
Start with $\triangle={\lbrace}\emptyset{\rbrace}$;
\item
For each $(q,a,q')\in \delta$, add $(q,(a,\vect{0}),q')$ to
$\triangle$;
\item
Build an isomorphic copy $A_{I_{0}\mbox{\footnotesize\it copy}}$ of
the automaton formed by the states in $Q_{I_0}^{e_0}$, the transitions
between them, and the outgoing transitions from these states to states
in $Q_{I_1}^{e_0}$, $Q_{I_2}^{e_0}$, \ldots, $Q_{I_{k-1}}^{e_0}$, and
$Q_{T_f}^{e_0}$. All the transitions are associated with the counter
increment $0$;
\item
Make all the states of $A_{I_{0}\mbox{\footnotesize\it copy}}$
nonaccepting;
\item
For each state $q \in Q_{I_0}^{e_0}\,\cup\,Q_H^{e_0}$ and $a \in
\Sigma$, if $\delta(q,a)$ leads to a state $q'$ in an increment
$Q_{I_{j}}^{e_0}$, $1\leq j \leq k -1$, then
\begin{enumerate}
\item
For each $1\leq \ell < j$, add to $\triangle$ a transition
$(q,(a,j-l),q'')$, where $q''$ is the state corresponding to $q'$ (by
the increment isomorphism) in $Q_{I_{\ell}}^{e_0}{}$. Also, add a
transition $(q,(a,j),q'')$, where $q''$ is the state corresponding to
$q'$ in $A_{I_{0}\mbox{\footnotesize\it copy}}$;
\item
If $q \in Q_{I_0}$, then let $q_{\mbox{\footnotesize\it copy}}$ be the
state corresponding to $q$ in $A_{I_{0}\mbox{\footnotesize\it
copy}}$. For each $1\leq \ell < j$, add to $\triangle$ a transition
$(q_{\mbox{\footnotesize\it copy}},(a,j-l),q'')$, where $q''$ is the
state corresponding to $q'$ (by the increment isomorphism) in
$Q_{I_{\ell}}^{e_0}{}$. Also, add a transition
$(q_{\mbox{\footnotesize\it copy}},(a,j),q'')$, where $q''$ is the
state corresponding to $q'$ in $A_{I_{0}\mbox{\footnotesize\it
copy}}$.
\end{enumerate}
\end{enumerate}
\end{proof}

\noindent
Let $A^{e_0}_c$ be the counter-zero automaton corresponding to
$A^{e_0}$. From the observations above, we directly see that for each
$i\in \nats_0$ $w$ $\in$ $L(A^{e_i})\setminus L(A^{e_0})$,
$1{\leq}j{\leq}i$ exists such that $(w,j)$ $\in$
${\cal{L}}(A^{e_*}_c)\setminus {\cal{L}}(A^{e_0}_c)$.

%%\begin{proposition}
%%\label{prop-extraproc-runbound}
%%Let $A^{e_0}$ be a minimal weak B\"uchi automaton which is the last
%%element of a finite incrementally growing sequence of automata
%%$S_I$. The counter automaton $A^{e_*}_c$ obtained by the construction
%%of Proposition \ref{extra-infinite-counter} to $A^{e_0}$ and\\
%%$\it{GROW}_{(S_I)}(A^{e_0})$ is run-bounded weak.
%%\end{proposition}

%%\begin{proof}
%%Direct from the decomposition $\it{GROW}_{(S_I)}(A^{e_0})$ and the
%%construction in the proof of Proposition \ref{extra-infinite-counter}.
%%\end{proof}

\begin{figure}[t]
\centering 
\includegraphics[width=10cm]{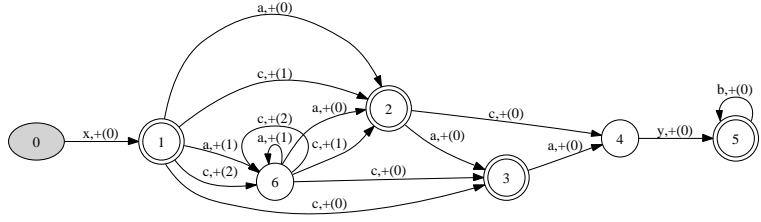}
\label{ex-counter0}
\caption{Automaton for Example \ref{example-extra3}.}
\label{extra7}
\end{figure}

\begin{example}
\label{example-extra3}
Figure \ref{extra7} presents the result of applying the construction
of Proposition \ref{extra-infinite-counter} to Automaton $A^{e_0}$ of
Example \ref{example-extrapolation1}.
\end{example}

\section{Safety and Preciseness}
\label{section-precision}

After having constructed a finite automaton $A^{e_*}$ representing the
extrapolation of a sequence $A^1$, $A^2, \ldots$ of automata, it
remains to check whether it accurately corresponds to what we really
intend to compute, i.e., $\bigcup_{i > 0} A^i$. This is done by first
checking that the extrapolation is {\em safe\/}, in the sense that it
captures all behaviors of $\bigcup_{i > 0} A^i$, and then checking
that it is {\em precise\/}, i.e., that it has no more behaviors than
$\bigcup_{i > 0} A^i$. We check both properties using sufficient
conditions. We develop separately these conditions for the two
($\omega$-)Regular Reachability Problems.

\begin{remark}
As we already mentioned in the introduction, the ability to
extrapolate an infinite sequence of automata has other applications
than solving the ($\omega$-)Regular Reachability Problems (see
\cite{BLW04b,CLW08} for examples). Depending on the problem being
considered, we may have to use other correctness criteria than those
that are proposed in this paper.
\end{remark}

\subsection{Transitive Closure of a Transducer}
\label{subsec-transd}

Consider a reflexive deterministic finite-word (respectively,
deterministic weak B\"uchi) transducer $T$ and let $T^{e_0}$ be the
last element of an incrementally growing sampling sequence $S_I$ of
powers of $T$. Assume that $T^{e_0}$ is the origin of an extrapolated
sequence $T^{e_0},T^{e_1},\dots$. The limit of this sequence is the
transducer $T^{e_*}$ with
$L(T^{e_*})=\bigcup_{i=0}^{\infty}L(T^{e_i})$ that has been computed
by applying the construction of Proposition \ref{extra-finite}
(respectively, Proposition \ref{extra-infinite}) to $T^{e_0}$. We
provide sufficient criteria to test whether $L(T^*)=L(T^{e_*})$.\\
\newline
We first determine whether $T^{e_*}$ is a safe extrapolation of $T$,
i.e., whether $L(T^*)\,\subseteq\,L(T^{e_*})$. For this, we propose
the following result.

\begin{proposition}
\label{safe-extrapolation-transdu}
Let $T_1$ and $T_2$ be two reflexive transducers defined over the same
alphabet. If $L(T_2\circ T_2)\,\subseteq\, L(T_2)$ and
$L(T_1)\,\subseteq\,L(T_2)$, then $L(T_1^*)\subseteq\,L(T_2)$.
\end{proposition}

\begin{proof}
We show by induction that for each $i>0$,
$L(T^i_1)\,\subseteq\,L(T_2)$. The base cases, i.e.,
$L(T_1^0)\,\subseteq\,L(T_2)$ and $L(T_1)\,\subseteq\,L(T_2)$, hold
by hypothesis.  Suppose now that $i>1$ and that the result holds for
any $k<i$. It is easy to see that
$L(T_1^i)\,\subseteq\,L(T_2)$. Indeed, $L(T^{i}_1)=L(T^{i-1}_1\circ
T_1)\,\subseteq\,L(T_2 \circ T_1)\,\subseteq\,L(T_2\circ
T_2)\,\subseteq\,L(T_2)$. The first inclusion holds by induction, the
second because $L(T_1)\,\subseteq\, L(T_2)$, and the third is by
hypothesis.
\end{proof}

\noindent
By construction, $L(T)\,\subseteq\,L(T^{e_*})$ and, moreover, $T$ is
reflexive. Consequently, Proposition \ref{safe-extrapolation-transdu}
states that if $L(T^{e_*}\circ T^{e_*})\,\subseteq\,L(T^{e_*})$, then
$T^{e_*}$ is a safe extrapolation of $T^*$. This criterion is only
sufficient since their could exist two words $w,w'\in L(T^{e_*})$ such
that $w,w'\not\in L(T^*)$ and $w~{\circ}~w'$ $\not\in$
$L(T^{e_*})$. In practice, checking the condition expressed by
Proposition \ref{safe-extrapolation-transdu} requires to complement
$T^{e_*}$. Indeed, this condition is equivalent to checking whether
the language accepted by the automaton which is the intersection of
the automaton for $T^{e_*}\circ T^{e_*}$ and the one for the
complement of $T^{e_*}$ is empty or not.  When working with weak
automata, $T^{e_*}$ is by construction weak but generally not
deterministic (see Proposition \ref{counter-not-weak}). Our approach
consists in determinizing $T^{e_*}$, and then checking whether the
resulting transducer is inherently weak. In the positive case, this
transducer can be turned into a weak deterministic one and easily be
complemented by inverting the sets of accepting and nonaccepting
states. Otherwise a B\"uchi complementation algorithm has to be
applied.\\
\newline 
We now turn to determine whether $T^{e_*}$ is a precise extrapolation
of $T$, i.e., whether $L(T^{e_*})\,\subseteq\,L(T^{*})$. For this, we
again provide a partial solution in the form of a sufficient
criterion. The ``preciseness'' problem amounts to proving that any
word accepted by $T^{e_*}$, or equivalently by some $T^{e_i}$, is also
accepted by an iteration $T^j$ of the transducer $T$. The idea is to
check that this can be proved inductively. The property is true by
construction for the transducer $T^{e_0}$ from which the extrapolation
sequence is built. If we can also prove that, if the property holds
for all $T^{e_j}$ with $j<i$, then it also holds for $T^{e_i}$, we are
done. For this, we propose the following theorem.

\begin{theorem}
\label{precis-extrapolation-transdu1}
Let $T$ and $T^{e_*}$ be two transducers and $T^{e_0}$ be a power of
$T$. Assume an infinite sequence of transducers
$T^{e_0},T^{e_1},\dots$, and let
$L(T^{e_*})=\bigcup_{i=0}^{\infty}L(T^{e_i})$. If
\begin{equation}
\label{cond-extra-transdu1}
 \forall w, {\forall} i >0 \ {\lbrack}w \in L(T^{e_i})\setminus
L(T^{e_0}) \ {\Rightarrow} \ {\exists}0\,{\leq}\,j,j'<i, w \in
L(T^{e_j}{\circ}T^{e_j'}){\rbrack},
\end{equation}
then $L(T^{e_*})\,\subseteq\,L(T^*)$.
\end{theorem}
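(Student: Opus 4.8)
The plan is to prove the stronger claim that $L(T^{e_i}) \subseteq L(T^*)$ for every $i \geq 0$, and then obtain the theorem by taking the union, since $L(T^{e_*}) = \bigcup_{i \geq 0} L(T^{e_i})$. I would argue by strong induction on $i$. The base case $i = 0$ is immediate: by hypothesis $T^{e_0}$ is a power of $T$, and $T^*$ contains every power of $T$ by definition of the reflexive transitive closure, so $L(T^{e_0}) \subseteq L(T^*)$.

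Two elementary facts about transducer composition will drive the inductive step. First, composition is monotone with respect to language inclusion: if $L(T_1) \subseteq L(T_1')$ and $L(T_2) \subseteq L(T_2')$ then $L(T_2 \circ T_1) \subseteq L(T_2' \circ T_1')$. This is immediate from the definition of relational composition, using that for structure-preserving transducers language inclusion coincides with inclusion of the represented relations. Second, the reflexive transitive closure is idempotent under composition, that is $L(T^* \circ T^*) = L(T^*)$, because the underlying relation $R^*$ is both reflexive and transitive and hence satisfies $R^* \circ R^* = R^*$.

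For the inductive step I would assume $L(T^{e_k}) \subseteq L(T^*)$ for all $0 \leq k < i$ and take an arbitrary $w \in L(T^{e_i})$. If $w \in L(T^{e_0})$ we are done by the base case. Otherwise $w \in L(T^{e_i}) \setminus L(T^{e_0})$, and the hypothesis~(\ref{cond-extra-transdu1}) supplies indices $0 \leq j, j' < i$ with $w \in L(T^{e_j} \circ T^{e_{j'}})$. Since both indices are strictly below $i$, the induction hypothesis yields $L(T^{e_j}) \subseteq L(T^*)$ and $L(T^{e_{j'}}) \subseteq L(T^*)$; monotonicity then gives $w \in L(T^{e_j} \circ T^{e_{j'}}) \subseteq L(T^* \circ T^*)$, and idempotence rewrites the right-hand side as $L(T^*)$. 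Hence $w \in L(T^*)$, which closes the induction.

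The proof is short once the two algebraic facts are isolated, so I do not expect a genuine obstacle; the only delicate point is that the decomposition provided by the condition places \emph{both} factors at a strictly smaller index, which is precisely what lets the induction hypothesis apply to each factor, while the idempotence $L(T^* \circ T^*) = L(T^*)$ is what recombines the two halves into a single membership in the closure. I would also remark that reflexivity of $T$ is not actually used in this direction --- only that $T^{e_0}$ is a power of $T$ and that $T^*$ is closed under self-composition --- reflexivity being needed instead for the complementary safety statement of Proposition~\ref{safe-extrapolation-transdu}.
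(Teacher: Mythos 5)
Your proof is correct and follows essentially the same route as the paper's: a strong induction on $i$ showing $L(T^{e_i})\subseteq L(T^*)$, splitting on whether $w\in L(T^{e_0})$ and using Condition~(\ref{cond-extra-transdu1}) together with the inductive hypothesis on both factors $j,j'<i$. The only cosmetic difference is that you recombine the two halves via monotonicity and the idempotence $L(T^*\circ T^*)=L(T^*)$, whereas the paper extracts explicit powers $n_1,n_2$ and concludes $w\in L(T^{n_1+n_2})$ --- the same argument at a slightly different level of abstraction.
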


\begin{proof}
The proof is by induction: we show that for each $i\,{\geq}\,0$,
$L(T^{e_i})\subseteq L(T^*)$. The base case, i.e.,
$L(T^{e_0})\,\subseteq\,L(T^*)$, holds by hypothesis. Suppose now that
$i>0$ and that the result holds for any $j<i$. We show that
$L(T^{e_i})\,\subseteq\,L(T^*)$. Consider a word $w\in L(T^{e_i})$. If
$w\in L(T^{e_0})$, then the result holds. If $w\notin L(T^{e_0})$
then, by Condition (\ref{cond-extra-transdu1}) there exist $j,j'<i$,
$w'\in L(T^{e_j})$, $w''\in L(T^{e_{j'}})$ such that $w=w'\circ
w''$. Since, by inductive hypothesis $w',w''\in L(T^*)$, $n_1, n_2\in
\nats$ exist such that $w'\in L(T^{n_1})$ and $w''\in L(T^{n_2})$. We
thus have $w\in L(T^{n_1+n_2})$.
\end{proof}

\noindent
Theorem \ref{precis-extrapolation-transdu1} reduces the problem of
checking the preciseness of $T^{e_*}$ to the one of testing whether
Condition (\ref{cond-extra-transdu1}) is satisfied or not. We now go
one step further and reduce this test to automata-based manipulations.

\begin{lemma}
\label{precis-extrapolation-transdu2}
Let $T^{e_0}$ be the last element of an incrementally growing sampling
sequence $S_I$ of transducers, and $T^{e_0}_c$ be the counter-zero
automaton corresponding to $T^{e_0}$. Assume that $T^{e_0}$ is the
origin of an extrapolated sequence $T^{e_0},T^{e_1},\dots$ and let
$T^{e_*}_{c_1}, T^{e_*}_{c_2}, T^{e_*}_{c_3}$ be three copies of the
counter transducer $T^{e_*}_{c}$ which is obtained by applying the
construction of Proposition \ref{extra-finite-counter} (respectively,
Proposition \ref{extra-infinite-counter}) to $T^{e_0}$. If
\begin{equation}
\label{cond-extra-transdu2}
{\cal{L}}({\pi}_{(\not={\lbrace}c_2,c_3{\rbrace})}{\lbrack}(T_{c_1}^{e_*}
\cap_c (T_{c_2}^{e_*} \circ_c
T_{c_3}^{e_{*}}))^{c_1>{\lbrace}c_2,c_3{\rbrace}}){\rbrack})=
{\cal{L}}(T_{c}^{e_*})\setminus{\cal{L}}(T_{c}^{e_0}),
\end{equation} 
\noindent
then
\begin{equation*}
 \forall w, {\forall} i >0 \ {\lbrack}w \in L(T^{e_i})\setminus
L(T^{e_0}) \ {\Rightarrow} \ {\exists}0\,{\leq}\,j,j'<i, w \in
L(T^{e_j}{\circ}T^{e_j'}){\rbrack}.
\end{equation*}
\end{lemma}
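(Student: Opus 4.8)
The plan is to read Condition~(\ref{cond-extra-transdu2}) purely semantically, relying on the property established in Proposition~\ref{extra-finite-counter} (respectively Proposition~\ref{extra-infinite-counter}) that the counter of $T^{e_*}_c$ records an index of the extrapolated sequence by which the input is accepted: $(w,m)\in\mathcal{L}(T^{e_*}_c)$ forces $w\in L(T^{e_m})$, and conversely every $w\in L(T^{e_i})$ admits some $(w,m)\in\mathcal{L}(T^{e_*}_c)$ with $m\le i$. I would fix $w$ and $i>0$ with $w\in L(T^{e_i})\setminus L(T^{e_0})$, produce a witness pair $(w,m)$ in the right-hand side of~(\ref{cond-extra-transdu2}), transport it to the left-hand side using the assumed equality, and read off the decomposition the conclusion requires.

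For the witness, by the third property of the relevant proposition there is $m\le i$ with $(w,m)\in\mathcal{L}(T^{e_*}_c)$, and by the observation following that proposition any accepting run on a word $w\notin L(T^{e_0})$ must traverse one of the added increments, so its counter is strictly positive; hence $m\ge 1$. Since $\mathcal{L}(T^{e_0}_c)$ consists exactly of the pairs $(w',0)$ with $w'\in L(T^{e_0})$, the pair $(w,m)$ lies in $\mathcal{L}(T^{e_*}_c)\setminus\mathcal{L}(T^{e_0}_c)$, i.e. in the right-hand side of~(\ref{cond-extra-transdu2}). Note that only the inclusion from this set difference into the projected counter language is actually used, not the full equality.

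Next I would unfold the left-hand side operation by operation, using the definitions of Section~\ref{counter-manipulations}. Membership of $(w,m)$ in $\mathcal{L}(\pi_{(\neq\{c_2,c_3\})}[(T^{e_*}_{c_1}\cap_c(T^{e_*}_{c_2}\circ_c T^{e_*}_{c_3}))^{c_1>\{c_2,c_3\}}])$ unpacks as: there are counter values $j,j'$ (the projected-out $c_2,c_3$ components) with $(w,m)\in\mathcal{L}(T^{e_*}_{c_1})$, with the triple surviving the constraint $c_1>\{c_2,c_3\}$ so that $m>j$ and $m>j'$, and with $w$ in the counter-composition $\mathcal{L}(T^{e_*}_{c_2}\circ_c T^{e_*}_{c_3})$ at counters $(j,j')$. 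By the definition of counter-composition this means $w=w'\circ w''$ with $(w',j)\in\mathcal{L}(T^{e_*}_{c_2})$ and $(w'',j')\in\mathcal{L}(T^{e_*}_{c_3})$, hence $w'\in L(T^{e_j})$ and $w''\in L(T^{e_{j'}})$ by the counter-index property again. Therefore $w\in L(T^{e_j}\circ T^{e_{j'}})$, and since $j,j'<m\le i$ we get $0\le j,j'<i$, which is precisely the asserted conclusion.

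Rather than one hard step, the delicate points are bookkeeping: keeping straight which of the three copies carries which counter after composition, counter-intersection and projection, and above all verifying that the superscript $(\cdot)^{c_1>\{c_2,c_3\}}$ delivers the \emph{strict} inequalities $j<m$ and $j'<m$, so that the witnessing indices are genuinely smaller and we land at $j,j'<i$ rather than $j,j'\le i$. I would also state explicitly that the $M$-synchronization apparatus of Definition~\ref{synchro-count} plays no role in this implication: it is needed only to render the left-hand set of~(\ref{cond-extra-transdu2}) effectively computable, whereas here we merely reason about its contents.
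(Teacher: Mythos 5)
Your proposal is correct and follows essentially the same route as the paper's own proof: obtain a witness pair $(w,m)$ with $0<m\le i$ in ${\cal{L}}(T_{c}^{e_*})\setminus{\cal{L}}(T_{c}^{e_0})$ via Proposition \ref{extra-finite-counter} (respectively \ref{extra-infinite-counter}) and the observation that words outside $L(T^{e_0})$ force a positive counter, transport it through Condition (\ref{cond-extra-transdu2}), and unfold the counter-composition and the strict constraint $c_1>\{c_2,c_3\}$ to extract $j,j'<m\le i$ with $w\in L(T^{e_j}\circ T^{e_{j'}})$. Your additional remarks --- that only one inclusion of the equality is used and that the $M$-synchronization machinery is irrelevant to this implication --- are accurate refinements of the same argument rather than a different approach.
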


\begin{proof}
Observe that the counter language of
${\pi}_{(\not={\lbrace}c_2,c_3{\rbrace})}{\lbrack}(T_{c_1}^{e_*}
\cap_c (T_{c_2}^{e_*} \circ_c
T_{c_3}^{e_{*}}))^{c_1>{\lbrace}c_2,c_3{\rbrace}}$ is the counter
language of $T_{c}^{e_*}$ from where one has removed all the pairs
$(w,i)$ for which there is not $(w',j<i), (w'',j'<i) \in
{\cal{L}}(T_{c}^{e_*})$ with $w=w'\circ w''$. For each $i$ and each
word $w$, if $w\in L(T^{e_i})\setminus L(T^{e_0})$ then, by
Proposition \ref{extra-finite-counter} (respectively, Proposition
\ref{extra-infinite-counter}), there exists $k>0\in \nats$ such that
$(w,k{\leq}i)\in {\cal{L}}(T_c^{e_*})\setminus
    {\cal{L}}(T_c^{e_0})$. Since Condition (\ref{cond-extra-transdu2})
    holds, there exist $j,j'\in \nats$ with $j,j'\,<\,k\,{\leq}\,i$
    and two words $w',w''$ such that $(w',j)\in {\cal{L}}(T_c^{e_*})$
    and $(w'',j')\in {\cal{L}}(T_c^{e_*})$, with $w=w'\circ w''$. By
    Proposition \ref{extra-finite-counter} (respectively, Proposition
    \ref{extra-infinite-counter}), $w'\in L(T^{e_{j}})$ and $w''\in
    L(T^{e_{j'}})$ and $w\in L(T^{e_{j}}\circ T^{e_{j'}})$.
\end{proof}

%%\begin{remark}
%%\label{rem-equality}
%%The reader could be surprized that we check Equality
%%(\ref{cond-extra-transdu2}) rather than the following equality:
%%\begin{equation}
%%\label{cond-extra-transdu2special}
%%{\cal{L}}({\pi}_{(\not={\lbrace}c_2,c_3{\rbrace})}{\lbrack}(T_{c_1}^{e_*}
%%\cap_c (T_{c_2}^{e_*} \circ_c
%%T_{c_3}^{e_{*}}))^{c_1>{\lbrace}c_2,c_3{\rbrace}}){\rbrack})=
%%{\cal{L}}(T_{c}^{e_*})
%%\end{equation} 
%%The reason is that Equality \ref{cond-extra-transdu2special} would
%%never hold. Indeed, by construction, words in $L(T_{c}^{e_0})$ are
%%associated with the counter valuation $0$ in ${\cal{L}}(T_{c}^{e_*})$,
%%while\\
%%${\cal{L}}({\pi}_{(\not={\lbrace}c_2,c_3{\rbrace})}{\lbrack}(T_{c_1}^{e_*}
%%\cap_c (T_{c_2}^{e_*} \circ_c
%%T_{c_3}^{e_{*}}))^{c_1>{\lbrace}c_2,c_3{\rbrace}}){\rbrack})$ only
%%contains pairs of the form $(w,i)$ with $i>0$. Since we do not need to
%%show preciseness for words in $L(T_c^{e_0})$, Equality
%%(\ref{cond-extra-transdu2}) is sufficient.
%%\end{remark}

We can now state our main result.

\begin{theorem}
\label{main-result-precis-transdu}
Let $T$ be a transducer, $T^{e_0}$ the last element of an
incrementally growing sampling sequence $S_I$ of powers of $T$, and
$T^{e_0}_c$ the counter-zero automaton corresponding to $T^{e_0}$.
Assume that $T^{e_0}$ is the origin of an extrapolated sequence
$T^{e_0},T^{e_1},\dots$ and let $T^{e_*}$ be the transducer that has
been obtained by applying the construction of proposition
\ref{extra-finite} (respectively, Proposition \ref{extra-infinite}) to
$T^{e_0}$. Let $T^{e_*}_{c_1}, T^{e_*}_{c_2}, T^{e_*}_{c_3}$ be three
copies of the counter transducer $T^{e_*}_c$ which is obtained by
applying the construction of Proposition \ref{extra-finite-counter}
(respectively, Proposition \ref{extra-infinite-counter}) to
$T^{e_0}$. If
${\cal{L}}({\pi}_{(\not={\lbrace}c_2,c_3{\rbrace})}{\lbrack}(T_{c_1}^{e_*}
\cap_c (T_{c_2}^{e_*} \circ_c
T_{c_3}^{e_{*}}))^{c_1>{\lbrace}c_2,c_3{\rbrace}}){\rbrack})=
{\cal{L}}(T_{c_1}^{e_*})\setminus{\cal{L}}(T_{c}^{e_0})$, then
$L(T^{e_*})\subseteq L(T^*)$.
\end{theorem}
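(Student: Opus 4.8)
The plan is to obtain this result with essentially no new work, by chaining the two preceding results. The hypothesis of the theorem is precisely the counter-language equality (\ref{cond-extra-transdu2}) of Lemma \ref{precis-extrapolation-transdu2}, and the desired conclusion $L(T^{e_*})\subseteq L(T^*)$ is exactly the conclusion of Theorem \ref{precis-extrapolation-transdu1}. So the whole argument is a composition of the lemma and the theorem, and I would structure it in two implications.

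First I would check that all the objects named in the statement coincide with those required by Lemma \ref{precis-extrapolation-transdu2}. In particular, the counter transducer $T^{e_*}_c$ and its three copies $T^{e_*}_{c_1}, T^{e_*}_{c_2}, T^{e_*}_{c_3}$ are constructed as the lemma demands, by applying Proposition \ref{extra-finite-counter} (respectively, Proposition \ref{extra-infinite-counter}) to $T^{e_0}$, while $T^{e_0}_c$ is the counter-zero automaton of $T^{e_0}$. The right-hand side of the hypothesis, written ${\cal{L}}(T_{c_1}^{e_*})\setminus{\cal{L}}(T_{c}^{e_0})$, reads identically to the ${\cal{L}}(T_{c}^{e_*})\setminus{\cal{L}}(T_{c}^{e_0})$ of the lemma, since $T^{e_*}_{c_1}$ is merely a copy of $T^{e_*}_c$. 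Applying Lemma \ref{precis-extrapolation-transdu2} then yields Condition (\ref{cond-extra-transdu1}): for every $i>0$ and every $w\in L(T^{e_i})\setminus L(T^{e_0})$, there exist $0\leq j,j'<i$ with $w\in L(T^{e_j}\circ T^{e_{j'}})$.

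Next I would invoke Theorem \ref{precis-extrapolation-transdu1} with the transducer $T$, the extrapolation $T^{e_*}$, the power $T^{e_0}$ of $T$, and the extrapolated sequence $T^{e_0},T^{e_1},\dots$, for which $L(T^{e_*})=\bigcup_{i=0}^{\infty}L(T^{e_i})$ holds by construction (Proposition \ref{extra-finite} or \ref{extra-infinite}). Since Condition (\ref{cond-extra-transdu1}) has just been established, the theorem immediately gives $L(T^{e_*})\subseteq L(T^*)$, which is the claimed preciseness property.

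There is no real computational obstacle here; the substance was already discharged in Lemma \ref{precis-extrapolation-transdu2} (the reduction of the combinatorial splitting condition to a decidable counter-automaton equality) and in Theorem \ref{precis-extrapolation-transdu1} (the inductive lifting from single compositions to iterations of $T$). The only point demanding care is the bookkeeping described above: verifying that the $T^{e_i}$ are exactly the members of the extrapolated sequence of origin $T^{e_0}$, that $L(T^{e_*})$ equals their union, and that the copies of $T^{e_*}_c$ together with $T^{e_0}_c$ match the setup of the lemma. Once this identification is made, the two implications compose and the proof is complete.
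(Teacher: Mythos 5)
Your proposal matches the paper's own proof exactly: the paper likewise notes that $L(T^{e_*})=\bigcup_{i=0}^{\infty}L(T^{e_i})$ by Proposition \ref{extra-finite} (respectively, Proposition \ref{extra-infinite}), invokes Lemma \ref{precis-extrapolation-transdu2} to obtain Condition (\ref{cond-extra-transdu1}), and concludes via Theorem \ref{precis-extrapolation-transdu1}. Your additional remark that ${\cal{L}}(T_{c_1}^{e_*})$ in the hypothesis is just ${\cal{L}}(T_{c}^{e_*})$ under a renamed copy is correct bookkeeping and does not change the argument.
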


\begin{proof}
By Proposition \ref{extra-finite} (respectively, Proposition
\ref{extra-infinite}), we have
$L(T^{e_*})=\bigcup_{i=0}^{\infty}L(T^{e_i})$.\\
\newline
\noindent
According to Lemma \ref{precis-extrapolation-transdu2}, since
\begin{equation*}
{\cal{L}}({\pi}_{(\not= {\lbrace} c_2,c_3{\rbrace})}{\lbrack}(T_{c_1}^{e_*}
\cap_c (T_{c_2}^{e_*} \circ_c
T_{c_3}^{e_{*}}))^{c_1>{\lbrace}c_2,c_3{\rbrace}}){\rbrack})=
{\cal{L}}(T_{c}^{e_*})\setminus{\cal{L}}(T_{c}^{e_0}),
\end{equation*} 
\noindent
we have
\begin{equation*}
 \forall w, {\forall} i >0 \ {\lbrack}w \in L(T^{e_i})\setminus
L(T^{e_0}) \ {\Rightarrow} \ {\exists}0\,{\leq}\,j,j'<i, w \in
L(T^{e_j}{\circ}T^{e_j'}){\rbrack}.
\end{equation*}

\noindent
It follows from Theorem \ref{precis-extrapolation-transdu1} that
$L(T^{e_*})\,\subseteq\,L(T^*)$.
\end{proof}

\noindent
Condition (\ref{cond-extra-transdu2}) can be implemented as follows :

\begin{itemize}
\item
Observe that, since ${\cal{L}}({\pi}_{(\not= {\lbrace}
  c_2,c_3{\rbrace})}{\lbrack}(T_{c_1}^{e_*} \cap_c (T_{c_2}^{e_*}
\circ_c T_{c_3}^{e_{*}}))^{c_1>{\lbrace}c_2,c_3{\rbrace}}){\rbrack})$
is disjoint from, checking ${\cal{L}}({\pi}_{(\not= {\lbrace}
  c_2,c_3{\rbrace})}{\lbrack}(T_{c_1}^{e_*} \cap_c (T_{c_2}^{e_*}
\circ_c T_{c_3}^{e_{*}}))^{c_1>{\lbrace}c_2,c_3{\rbrace}}){\rbrack})=
     {\cal{L}}(T_{c}^{e_*})\setminus{\cal{L}}(T_{c}^{e_0})$ is
     equivalent to check ${\cal{L}}({\pi}_{(\not= {\lbrace}
       c_2,c_3{\rbrace})}{\lbrack}(T_{c_1}^{e_*} \cap_c (T_{c_2}^{e_*}
     \circ_c
     T_{c_3}^{e_{*}}))^{c_1>{\lbrace}c_2,c_3{\rbrace}}){\rbrack}\cup_e
     T_{c}^{e_0})= {\cal{L}}(T_{c}^{e_*})$, which avoid to compute
     ${\cal{L}}(T_{c}^{e_*})\setminus{\cal{L}}(T_{c}^{e_0})$. Computing
     ${\cal{L}}(T_{c}^{e_*})\setminus{\cal{L}}(T_{c}^{e_0})$ is a hard
     problem, which requires the ability to distinguish between
     accepting and nonaccepting runs that assign the same counter
     valuation to a given word.
\item
There are algorithms to compute $\cap_c$, $\circ_c$, and
${\pi}_{(\not={\lbrace}c_2,c_3{\rbrace})}$. Those algorithms directly
follow from the definitions given in Section
\ref{counter-section}. Observe that if $T$ is weak, then the counter
automaton for $T_{c_1}^{e_*} \cap_c (T_{c_2}^{e_*} \circ_c
T_{c_3}^{e_{*}})$ is run-bounded weak.
\item
We do not compute the one-counter automaton for $(T_{c_1}^{e_*} \cap_c
(T_{c_2}^{e_*} \circ_c
T_{c_3}^{e_{*}}))^{c_1>{\lbrace}c_2,c_3{\rbrace}}$, but a
$M$-synchronized counter automaton whose language and counter
languages may be subsets of those of $(T_{c_1}^{e_*} \cap_c
(T_{c_2}^{e_*} \circ_c
T_{c_3}^{e_{*}}))^{c_1>{\lbrace}c_2,c_3{\rbrace}}$. We follow the
methodology described in Section \ref{counter-section}, and compute
the extended-intersection between the automaton $T_{c_1}^{e_*} \cap_c
(T_{c_2}^{e_*} \circ_c T_{c_3}^{e_{*}})$ and two finite-word
(respectively, run-bounded weak B\"uchi) $M$-Universal-synchronized
counter automata, one which is synchronized w.r.t. counters $c_1$ and
$c_2$, and the other one w.r.t. counters $c_1$ and $c_3$. Assume that
$\Sigma^2$ is the alphabet of $T$ and $d$ is the maximal increment
value of $T^{e_*}_c$. The extended alphabet of $T_c$ is $\Sigma\times
{\lbrack}0,d{\rbrack}$, and the one of $T_{c_1}^{e_*} \cap_c
(T_{c_2}^{e_*} \circ_c T_{c_3}^{e_{*}})$ is thus $\Sigma^2\times
{\lbrack}0,d{\rbrack}^3$ (see constructions for $\circ_c$ and
$\cap_c$). In our experiments (see \cite{Leg07} for details), we
worked with counter automata whose extended alphabet is
$\Sigma^2\times {\lbrack}0,d{\rbrack}^3$, and such that $c_1$ is
$M$-synchronized with respect to $c_2$ and $c_3$, with
$M=2{\times}d$. This choice turned out to be the best compromise for
our experimental results\,\cite{Leg07,T(O)RMC}, where we clearly
observed a synchronization between the counters.
%%The value of
%%$M$, the extended alphabet, and the synchronization condition could be
%%tuned when considering other experiments.
\item
We reduce the problem of checking the equivalence between the counter
languages of the two members of the equality to the one of checking
the equivalence between the languages of their extended automata (see
Proposition \ref{th-counter1}).
\end{itemize}

\noindent
Observe that, if $L(T^*)=L(T^{e_*})$, then the transducers $T^{e_i}$
($i{\geq}0$) may constitute new elements in an extension of the
sampling sequence $S_i$, i.e., if $S_I=T^{s_0},T^{s_1},\dots,T^{s_k}$
with $T^{s_k}=T^{e_0}$, then the extension is
$T^{s_0},T^{s_1},\dots,T^{s_k}, T^{s_{k+1}}, $\\$T^{s_{k+2}},\dots$,
with $T^{s_{k+i}}=T^{e_i}$ for each $i{\geq}0$. Condition
(\ref{cond-extra-transdu1}) is thus particularly designed to hold for
sampling sequences where each transducer can be obtained by a single
composition of transducers that appear before in the sequence. Indeed,
the condition can be read as follows: {\em each transducer $T^{e_i}$
  in the extended sampling sequence is the composition of two
  transducers $T^{e_j}$ and $T^{e_{j'}}$ that appear before in this
  sequence}. If more than one composition is needed, then the
condition may not be satisfied even if $L(T^{e_*})=L(T^*)$. Condition
(\ref{cond-extra-transdu1}) can be adapted to work with other sampling
sequences. This is illustrated with the following example.

\begin{example}
If each transducer in the sampling sequence is obtained by composing
$n$ transducers that appear before in the sequence, then one can test
whether the following condition holds

\begin{equation}
 \forall w, {\forall} i >0 \
{\lbrack}w \in L(T^{e_i})\setminus L(T^{e_0}) \ {\Rightarrow} \ {\exists}0\,{\leq}j_1,\dots,j_n<i, w \in
L(T^{e_{j_1}}{\circ}\dots {\circ}T^{e_{j_n}}){\rbrack},
\end{equation} 

\noindent
rather than to test whether Condition (\ref{cond-extra-transdu1})
holds.
\end{example}

\noindent
Theorem \ref{main-result-precis-transdu} easily extends to other
sampling sequences.

\subsection{Limit of a Sequence of Reachable Sets}
\label{subsec-autopreci}

This section lifts the results obtained in the previous section to the
case where one computes the limit of a sequence of reachable states.
We consider a reflexive finite-word (respectively, deterministic weak
B\"uchi) transducer $T$ and a deterministic finite-word (respectively,
deterministic weak B\"uchi) automaton $A$. Let $A^{e_0}$ be the last
automaton of an incrementally growing sampling sequence $S_I$ of $A$,
$T^{1}(A)$, $T^{2}(A)$, $T^{3}(A)$, and assume that $A^{e_0}$ is the
origin of an extrapolated sequence $A^{e_0},A^{e_1},\dots$ . The limit
of this sequence is the automaton $A^{e_*}$ with
$L(A^{e_*})=\bigcup_{i=0}^{\infty}L(A^{e_i})$ that has been computed
by applying the construction of Proposition \ref{extra-finite}
(respectively, Proposition \ref{extra-infinite}) to $A^{e_0}$. We
provide sufficient criteria to test whether $L(T^*(A))=L(A^{e_*})$.\\
\newline
We first determine whether $A^{e_*}$ is a safe extrapolation of
$T^*(A)$, i.e., whether
$L(T^*(A))\,\subseteq\,L(A^{e_*})$. For this, we propose the following
result.

\begin{proposition}
\label{safe-extrapolation-auto}
Let $A_1$ and $A_2$ be two automata defined over the same alphabet
$\Sigma$ and with $L(A_1)\,\subseteq\,L(A_2)$. Let $T$ be a reflexive
transducer over $\Sigma^2$. If $L(T(A_2))\,\subseteq\,L(A_2)$ then
$L(T^*(A_1))\,\subseteq\,L(A_2)$.
\end{proposition}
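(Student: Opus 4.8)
The plan is to mirror the inductive argument used in the proof of Proposition~\ref{safe-extrapolation-transdu}, but working with images of sets rather than with compositions of transducers. First I would recall that, since $T$ is reflexive, $T^*(A_1)=\bigcup_{i\geq 0}T^i(A_1)$, so that $L(T^*(A_1))=\bigcup_{i\geq 0}L(T^i(A_1))$. It therefore suffices to prove, by induction on $i$, that $L(T^i(A_1))\subseteq L(A_2)$ for every $i\geq 0$, since a union of languages each contained in $L(A_2)$ is itself contained in $L(A_2)$.

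For the base case $i=0$, the zero-power $T^0$ is the identity relation, so $L(T^0(A_1))=L(A_1)$, which is contained in $L(A_2)$ by hypothesis. For the inductive step, I would use the identity $T^{i+1}(A_1)=T(T^i(A_1))$, which follows from $T^{i+1}=T\circ T^i$ together with the relational fact that $(R_2\circ R_1)(S)=R_2(R_1(S))$ (a direct unfolding of the definitions of composition and image given in Section~\ref{section-background}). Assuming $L(T^i(A_1))\subseteq L(A_2)$, monotonicity of the image operation gives $L(T(T^i(A_1)))\subseteq L(T(A_2))$, and the hypothesis $L(T(A_2))\subseteq L(A_2)$ then yields $L(T^{i+1}(A_1))\subseteq L(A_2)$, closing the induction. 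Reflexivity of $T$ is used only to guarantee that the reflexive transitive closure decomposes as the union of all powers starting from $T^0$; it is not otherwise needed for the inclusions themselves.

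The one point that needs explicit justification --- and which I expect to be the only real obstacle, albeit a mild one --- is the monotonicity of the transducer image: if $L(A)\subseteq L(A')$ then $L(T(A))\subseteq L(T(A'))$. This is immediate from the relational reading of the image, since $L(T(A))$ is the set of output words $w'$ for which there exists an input word $w\in L(A)$ with $(w,w')\in T$, so enlarging the argument language can only enlarge the resulting image language. With monotonicity in hand, the remainder is a routine induction entirely parallel to Proposition~\ref{safe-extrapolation-transdu}, and the statement follows.
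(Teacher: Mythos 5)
Your proof is correct and follows essentially the same route as the paper's: an induction on $i$ showing $L(T^i(A_1))\subseteq L(A_2)$, with the inductive step given by the chain $L(T^{i}(A_1))=L(T(T^{i-1}(A_1)))\subseteq L(T(A_2))\subseteq L(A_2)$. The only difference is cosmetic --- you make the monotonicity of the image operation explicit where the paper leaves it implicit in the phrase ``holds by induction,'' and you correctly observe (as the paper does not) that reflexivity plays no role beyond the decomposition of $T^*$ into powers.
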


\begin{proof}
By hypothesis, we have $L(A_1)\,\subseteq\, L(A_2)$.  We show by
induction that for each $i>0$, $L(T^i(A_1))\,\subseteq\,L(A_2)$. The
base cases, i.e., $L(A_1)\,\subseteq\, L(A_2)$ and
$L(T(A_1))\,\subseteq\,L(A_2)$, hold by hypothesis.  Suppose now that
$i>1$ and that the result holds for any $j<i$. It is easy to see that
$L(T^i(A_1))\,\subseteq\,L(A_2)$. Indeed,
$L(T^{i}(A_1))=L(T(T^{i-1}(A_1)))\,\subseteq\,L(T(A_2))\,\subseteq\,L(A_2)$.
The first inclusion holds by induction and the second because
$L(T(A_2))\,\subseteq\,L(A_2)$.
\end{proof}

\noindent
Proposition \ref{safe-extrapolation-auto} states that checking whether
$A^{e_*}$ is a safe extrapolation of $\bigcup_{i=0}^{\infty}T^i(A)$
can be done by checking whether
$L(T(A^{e_*}))\,\subseteq\,L(A^{e_*})$.  It is worth mentioning that
this criterion is only sufficient. Indeed, their could exist a word
$w\in L(A^{e_*})$ such that $w \not\in L(T^*(A))$ and $w\not\in$
$L(T(A^{e_*}))$.\\
\newline
We now turn to determine whether $A^{e_*}$ is a precise extrapolation
of $T^*(A)$, i.e., whether $L(A^{e_*})\,\subseteq\,L(T^{*}(A))$. As in
Section \ref{subsec-transd}, we use an inductive argument, which is
formalized with the following theorem.

\begin{theorem}
\label{precis-extrapolation-auto1}
Let $T$ be a transducer and $A, A^{e_*}$ be two automata. Let
$A^{e_0}=T^k(A)$, and consider an infinite sequence of automata
$A^{e_0},A^{e_1},\dots$, with
$L(A^{e_*})=\bigcup_{i=0}^{\infty}L(A^i)$. If
\begin{equation}
\label{cond-extra-auto1}
 \forall w, {\forall} i >0 \ {\lbrack}w \in L(A^{e_i})\setminus
L(A^{e_0}) \ {\Rightarrow} \ {\exists}0\,{\leq}\,j<i, w \in
L(T(A^{e_j})){\rbrack},
\end{equation}
then $L(A^{e_*}))\,\subseteq L(T^*(A)$.
\end{theorem}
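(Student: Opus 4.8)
The plan is to mirror the inductive argument used for Theorem~\ref{precis-extrapolation-transdu1}, adapting it from transducer composition to transducer image. Concretely, I would prove by induction on $i$ that $L(A^{e_i})\subseteq L(T^*(A))$ holds for every $i\geq 0$; since $L(A^{e_*})=\bigcup_{i\geq 0}L(A^{e_i})$, the desired inclusion $L(A^{e_*})\subseteq L(T^*(A))$ follows immediately.

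For the base case, I would exploit the hypothesis $A^{e_0}=T^k(A)$: because $T^k$ is one of the iterates whose union forms $T^*$, we have $L(A^{e_0})=L(T^k(A))\subseteq L(T^*(A))$. This plays exactly the role that ``$T^{e_0}$ is a power of $T$'' played in the base case of Theorem~\ref{precis-extrapolation-transdu1}.

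For the inductive step, I would fix $i>0$ and assume $L(A^{e_j})\subseteq L(T^*(A))$ for all $j<i$. Taking any $w\in L(A^{e_i})$, there are two cases. If $w\in L(A^{e_0})$, the base case already gives $w\in L(T^*(A))$. Otherwise $w\in L(A^{e_i})\setminus L(A^{e_0})$, so Condition~(\ref{cond-extra-auto1}) supplies some $0\leq j<i$ with $w\in L(T(A^{e_j}))$. By the induction hypothesis together with the monotonicity of the image operation under language inclusion, $L(T(A^{e_j}))\subseteq L(T(T^*(A)))$; and since the relation represented by $T(T^*(A))$ is $R\circ R^*=R^+\subseteq R^*$ (writing $R$ and $R^*$ for the relations of $T$ and $T^*$), we obtain $L(T(T^*(A)))\subseteq L(T^*(A))$. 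Hence $w\in L(T^*(A))$, which closes the induction.

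The only facts requiring care are the monotonicity of $T(\cdot)$ with respect to language inclusion and the closure property $L(T(T^*(A)))\subseteq L(T^*(A))$, both of which follow at once from the relational reading $R(R^*(S))=R^+(S)\subseteq R^*(S)$. The structural difference from the transducer case is that a single application of $T$ suffices here---rather than one composition of two extrapolated transducers---precisely because the object being grown is a reachable \emph{set} and not a \emph{relation}; this is why Condition~(\ref{cond-extra-auto1}) quantifies over a single index $j$ instead of a pair $j,j'$. I do not anticipate any genuine obstacle: the argument is a routine induction entirely parallel to Theorem~\ref{precis-extrapolation-transdu1}.
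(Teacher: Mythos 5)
Your proof is correct and follows essentially the same route as the paper's: induction on $i$, splitting on whether $w\in L(A^{e_0})$, and invoking Condition~(\ref{cond-extra-auto1}) together with the induction hypothesis in the remaining case. The only (harmless) difference is in the closing step: the paper extracts a finite power $n$ with $L(A^{e_j})\subseteq L(T^n(A))$ and concludes $w\in L(T^{n+1}(A))$, whereas you argue directly via monotonicity of $T(\cdot)$ and $L(T(T^*(A)))\subseteq L(T^*(A))$, which is if anything slightly cleaner.
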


\begin{proof}
The proof is by induction: we show that for each $i\,{\geq}\,0$,
$L(A^{e_i})\subseteq L(T^*(A))$. The base case, i.e.,
$L(A^{e_0})\,\subseteq\,L(T^*(A))$, holds by hypothesis. Suppose now
that $i>0$ and that the result holds for any $j<i$. We show that
$L(A^{e_i})\,\subseteq\,L(T^*)$. Consider a word $w\in L(A^{e_i})$. If
$w\in L(A^{e_0})$, then the result holds. Assume now that $w\notin
L(A^{e_0})$. By Condition (\ref{cond-extra-auto1}), there exists $j<i$
such that $w\in L(T(A^{e_j}))$. Since, $T$ is reflexive and by
inductive hypothesis, there exists $n$ such that
$L(A^{e_j})\,\subseteq\,L(T^n(A))$. We thus have $w\in L(T^{n+1}(A))$.
\end{proof}

We now go one step further and reduce the verification of Condition
(\ref{cond-extra-auto1}) to simple automata-based manipulations.

\begin{lemma}
\label{precis-extrapolation-auto2}
Let $T$ be a reflexive transducer and $A$ be an automaton. Let
$A^{e_0}$ be the last automaton of an incrementally growing sampling
sequence $S_I$ of $A$, $T^{1}(A)$, $T^{2}(A)$, $T^{3}(A)$, and assume
that $A^{e_0}$ is the origin of an extrapolated sequence
$A^{e_0},A^{e_1},\dots$ and let $A_{c_1}^{e_*}, A_{c_2}^{e_*}$ be two
copies of the counter automaton $A_c^{e_*}$ that is obtained by
applying the construction of Proposition \ref{extra-finite-counter}
(respectively, Proposition \ref{extra-infinite-counter}) to
($A^{e_0}$,$\it{GROW}_{(S_I)}(A^{e_0})$). Let $A^{e_0}_c$ be the
counter-zero automaton corresponding to $A^{e_0}$. If
\begin{equation}
\label{cond-extra-auto2}
{\cal{L}}({\pi}_{(\not= c_2)}{\lbrack}(A_{c_1}^{e_*} \cap_c
T(A^{e_*}_{c_2}))^{c_1>c_2}){\rbrack})=
{\cal{L}}(A_{c}^{e_*})\setminus {\cal{L}}(A_{c}^{e_0}),
\end{equation} 
\noindent
then
\begin{equation*}
\forall w, {\forall} i >0 \ {\lbrack}w \in L(A^{e_i})\setminus
L(A^{e_0}) \ {\Rightarrow} \ {\exists}0\,{\leq}\,j<i, w \in
L(T(A^{e_j})){\rbrack}.
\end{equation*}
\end{lemma}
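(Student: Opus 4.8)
Lemma~\ref{precis-extrapolation-auto2} is the exact analogue for the reachable-set setting of Lemma~\ref{precis-extrapolation-transdu2}, which handled transitive closures. It asserts that the counter-language equation~(\ref{cond-extra-auto2}) is a sufficient condition guaranteeing that Condition~(\ref{cond-extra-auto1}) of Theorem~\ref{precis-extrapolation-auto1} holds. The whole point is to reduce the ``$\exists j<i$'' statement about set-membership in $L(T(A^{e_j}))$ to a purely automata-theoretic equality between counter languages that can be checked algorithmically.

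**Plan of the proof.** I would mirror the structure of the proof of Lemma~\ref{precis-extrapolation-transdu2} almost verbatim, the only difference being that a single transducer application $T(A^{e_j})$ replaces the composition $T^{e_j}\circ T^{e_{j'}}$, so only one copy of the counter automaton plus one transducer image is needed, rather than two copies and a counter-composition. First I would interpret the left-hand side of~(\ref{cond-extra-auto2}) semantically: the counter automaton ${\pi}_{(\not= c_2)}{\lbrack}(A_{c_1}^{e_*} \cap_c T(A^{e_*}_{c_2}))^{c_1>c_2}{\rbrack}$ accepts exactly those pairs $(w,i)$ in ${\cal{L}}(A_c^{e_*})$ for which there exists $(w',j)\in{\cal{L}}(A_c^{e_*})$ with $j<i$ (the constraint $c_1>c_2$) and $w\in L(T(A^{e_j}))$ (the counter-image construction applied to $A^{e_*}_{c_2}$, which records the index $j$ in $c_2$, intersected with $A_{c_1}^{e_*}$ which records the index $i$ in $c_1$). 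Thus the equation says precisely that this ``has a strictly-smaller predecessor under one application of $T$'' set coincides with ${\cal{L}}(A_c^{e_*})\setminus{\cal{L}}(A_c^{e_0})$.

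**Main argument.** Fix $w$ and $i>0$ with $w\in L(A^{e_i})\setminus L(A^{e_0})$. By Proposition~\ref{extra-finite-counter} (respectively, Proposition~\ref{extra-infinite-counter}) applied to $A^{e_0}$, and the remark following it that any word outside $L(A^{e_0})$ must use an added increment, there is some $k>0$ with $(w,k)\in{\cal{L}}(A_c^{e_*})\setminus{\cal{L}}(A_c^{e_0})$ and $k\leq i$. Since Condition~(\ref{cond-extra-auto2}) holds, $(w,k)$ lies in the left-hand counter language, so by the semantic reading above there exist $j<k$ and a word $w'$ with $(w',j)\in{\cal{L}}(A_c^{e_*})$ and $w\in L(T(A^{e_j}))$. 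Invoking Proposition~\ref{extra-finite-counter} (respectively, Proposition~\ref{extra-infinite-counter}) in the other direction, $(w',j)\in{\cal{L}}(A_c^{e_*})$ gives $w'\in L(A^{e_j})$, and since $j<k\leq i$ we obtain the desired witness $j<i$ with $w\in L(T(A^{e_j}))$, which is exactly Condition~(\ref{cond-extra-auto1}).

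**Anticipated obstacle.** The delicate point, as in the closure case, is justifying the semantic reading of the projected, counter-constrained, intersected automaton: one must check that the counter-image operation $T(A^{e_*}_{c_2})$ faithfully preserves the index bookkeeping of $c_2$ while correctly realizing the set-image under $T$, that counter-intersection $\cap_c$ pairs the $c_1$-index of $A_{c_1}^{e_*}$ with the $c_2$-index coming through $T$, and that the constraint $(\cdot)^{c_1>c_2}$ together with the projection $\pi_{(\not= c_2)}$ leaves precisely the pairs $(w,i)$ admitting a strictly-smaller $T$-predecessor. All three operations are defined in Section~\ref{counter-section}, so this is bookkeeping rather than a conceptual difficulty; the one subtlety worth flagging is the direction of the counter inequality (we need $j<i$, hence $c_1>c_2$ with $c_1$ carrying the larger index $i$), and the fact that reflexivity of $T$ is \emph{not} needed here—unlike in Theorem~\ref{precis-extrapolation-auto1} itself—because this lemma only establishes the combinatorial Condition~(\ref{cond-extra-auto1}), deferring the use of reflexivity to the inductive step of that theorem.
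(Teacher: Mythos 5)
Your proposal is correct and follows essentially the same route as the paper's own proof: interpret the left-hand side of Condition~(\ref{cond-extra-auto2}) as the set of pairs $(w,i)$ of ${\cal{L}}(A_c^{e_*})$ admitting a $T$-predecessor $(w',j)$ with $j<i$, use Proposition~\ref{extra-finite-counter} (respectively, Proposition~\ref{extra-infinite-counter}) and the remark following it to obtain $(w,k)\in{\cal{L}}(A_c^{e_*})\setminus{\cal{L}}(A_c^{e_0})$ with $0<k\leq i$, apply the hypothesis to extract $j<k$ and $w'$, and convert back via the same proposition to conclude $w\in L(T(A^{e_j}))$. Your added remarks on the direction of the counter inequality and on reflexivity not being needed at this stage are accurate but do not change the argument.
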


\begin{proof}
Observe that the counter language of ${\pi}_{(\not=
  c_2)}{\lbrack}(A_{c_1}^{e_*} \cap_c T(A^{e_*}_{c_2}))^{c_1>c_2}$ is
the counter language of $A_{c}^{e_*}$ from where one has removed all
the pairs $(w,i)$ for which there is no pair $(w',j<i)\in
{\cal{L}}(A_{c}^{e_*})$ with $w\in L(T(A_{w'}))$ (where $A_{w'}$ is an
automaton whose language is ${\lbrace}w'{\rbrace}$) have been
removed. For each $i$ and each word $w$, if $w\in L(A^{e_i})\setminus
L(A^{e_0})$ then, by Proposition \ref{extra-finite-counter}
(respectively, Proposition \ref{extra-infinite-counter}), there exists
$k>0\in \nats$ such that $(w,k{\leq}i)\in {\cal{L}}(A_c^{e_*})$. Since
Condition (\ref{cond-extra-auto2}) holds, there exists $j\in \nats$
with $j<k\,{\leq}\,i\in \nats$ and a word $w'$ such that $(w',j)\in
{\cal{L}}(T_c^{e_*})$ with $w=L(T(A_{w'}))$. By Proposition
\ref{extra-finite-counter} (respectively, Proposition
\ref{extra-infinite-counter}), $w'\in L(A^{e_{j}})$ and $w\in
L(T(A^{e_{j}}))$.
\end{proof}

Finally, we obtain our main result.

\begin{theorem}
\label{main-result-precis-auto}
Let $T$ be a reflexive transducer and $A$ be an automaton. Let
$A^{e_0}$ be the last automaton of an incrementally growing sampling
sequence $S_I$ of $A$, $T^{1}(A)$, $T^{2}(A)$, $T^{3}(A)$, and assume
that $A^{e_0}$ is the origin of an extrapolated sequence
$A^{e_0},A^{e_1},\dots$ Let $A^{e_*}$ be the automaton that has been
obtained by applying the construction of Proposition
\ref{extra-finite} (respectively, Proposition \ref{extra-infinite}) to
$A^{e_0}$, and let $A^{e_*}_{c_1}, A^{e_*}_{c_2}$ be two copies of the
counter automaton $A^{e_*}_c$ that is obtained by applying the
construction of Proposition \ref{extra-finite-counter} (respectively,
Proposition \ref{extra-infinite-counter}) to $A^{e_0}$. Let
$A^{e_0}_c$ be the counter-zero automaton corresponding to
$A^{e_0}$. If
\begin{equation*}
{\cal{L}}({\pi}_{(\not= c_2)}{\lbrack}(A_{c_1}^{e_*} \cap_c
T(A^{e_*}_{c_2}))^{c_1>c_2}){\rbrack})=
{\cal{L}}(A_{c}^{e_*})\setminus {\cal{L}}(A_{c}^{e_0}),
\end{equation*} 
then $L(A^{e_*})\,\subseteq\,L(T^*(A))$.
\end{theorem}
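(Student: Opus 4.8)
The plan is to follow exactly the pattern established for the transducer-closure case in the proof of Theorem \ref{main-result-precis-transdu}, chaining together the two auxiliary results developed for state sets. The argument splits into three steps, none of which requires new combinatorial work, since the substance has already been packaged into Lemma \ref{precis-extrapolation-auto2} and Theorem \ref{precis-extrapolation-auto1}. First I would recall, from Proposition \ref{extra-finite} (respectively, Proposition \ref{extra-infinite}), that the automaton $A^{e_*}$ produced by the extrapolation construction satisfies $L(A^{e_*})=\bigcup_{i=0}^{\infty}L(A^{e_i})$. This is exactly the hypothesis under which Theorem \ref{precis-extrapolation-auto1} operates, and its conclusion is the inclusion $L(A^{e_*})\subseteq L(T^*(A))$ that we want.

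Second, I would invoke Lemma \ref{precis-extrapolation-auto2} to translate the counter-language hypothesis into the combinatorial Condition (\ref{cond-extra-auto1}). Since $A_{c_1}^{e_*}$ and $A_{c_2}^{e_*}$ are copies of the counter automaton $A_c^{e_*}$ produced by Proposition \ref{extra-finite-counter} (respectively, Proposition \ref{extra-infinite-counter}), and $A^{e_0}_c$ is the counter-zero automaton of $A^{e_0}$, the assumed equality
\[
{\cal{L}}({\pi}_{(\not= c_2)}{\lbrack}(A_{c_1}^{e_*} \cap_c T(A^{e_*}_{c_2}))^{c_1>c_2}){\rbrack})= {\cal{L}}(A_{c}^{e_*})\setminus {\cal{L}}(A_{c}^{e_0})
\]
is verbatim the premise (\ref{cond-extra-auto2}) of that lemma. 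Lemma \ref{precis-extrapolation-auto2} then yields Condition (\ref{cond-extra-auto1}), namely that every $w\in L(A^{e_i})\setminus L(A^{e_0})$ lies in $L(T(A^{e_j}))$ for some $0\leq j<i$. Third, with Condition (\ref{cond-extra-auto1}) in hand together with the union form $L(A^{e_*})=\bigcup_i L(A^{e_i})$, Theorem \ref{precis-extrapolation-auto1} applies directly and delivers $L(A^{e_*})\subseteq L(T^*(A))$, which completes the proof.

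The main obstacle is not in this assembly but already resides in Lemma \ref{precis-extrapolation-auto2}, where one must argue that the counter $c_1$ faithfully records the least extrapolation index $i$ at which a word is accepted while $c_2$ records the index of its preimage automaton, so that the counter-language equality genuinely encodes the ``image under a single application of $T$ of some earlier $A^{e_j}$'' property. Compared with the transducer case, the simplification here is that only one factor $T(A^{e_*}_{c_2})$ is involved rather than a composition $T_{c_2}^{e_*}\circ_c T_{c_3}^{e_*}$ of two extrapolated transducers, so a single synchronized counter pair $(c_1,c_2)$ suffices; the reflexivity of $T$, which Theorem \ref{precis-extrapolation-auto1} uses to absorb the extra application of $T$ into a power $T^{n+1}(A)$, is the only additional hypothesis that must be carried through the induction.
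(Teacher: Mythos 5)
Your proof is correct and follows exactly the same route as the paper: establish $L(A^{e_*})=\bigcup_{i}L(A^{e_i})$ via Proposition \ref{extra-finite} (respectively, Proposition \ref{extra-infinite}), use Lemma \ref{precis-extrapolation-auto2} to convert the counter-language equality into Condition (\ref{cond-extra-auto1}), and conclude with Theorem \ref{precis-extrapolation-auto1}. Nothing further is needed.
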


\begin{proof}
By Proposition \ref{extra-finite} (respectively, Proposition
\ref{extra-infinite}), we have
$L(A^{e_*})=\bigcup_{i=0}^{\infty}L(A^{e_i})$.\\
\newline
\noindent
According to Lemma \ref{precis-extrapolation-auto2}, since
\begin{equation*}
{\cal{L}}({\pi}_{(\not= c_2)}{\lbrack}(A_{c_1}^{e_*} \cap_c
T(A^{e_*}_{c_2}))^{c_1>c_2}){\rbrack})=
{\cal{L}}(A_{c}^{e_*}),
\end{equation*} 
\noindent
we have
\begin{equation*}
 \forall w, {\forall} i >0 \ {\lbrack}w \in L(A^{e_i})\setminus
L(A^{e_0}) \ {\Rightarrow} \ {\exists}0\,{\leq}\,j<i, w \in
L(T(A^{e_j})){\rbrack}.
\end{equation*}

\noindent
It follows from Theorem \ref{precis-extrapolation-auto2} that
$L(A^{e_*})\,\subseteq\,L(T^*(A))$.
\end{proof}

\noindent
Theorem \ref{main-result-precis-auto} states a sufficient criterion to
check whether $A^{e_*}$ is a precise extrapolation of $T^*(A)$. This
criterion amounts to test whether Condition (\ref{cond-extra-auto1})
holds. For this, we proceed like for Condition
(\ref{cond-extra-transdu2}).\\
\newline

\noindent
Observe that, if $L(T^*(A))=L(A^{e_*})$, then the automata $A^{e_i}$
($i{\geq}0$) may constitute new elements in an extension of the
sampling sequence $S_I$, i.e., if $S_I=A^{s_0},A^{s_1},\dots,A^{s_k}$
with $A^{s_k}=A^{e_0}$, then the extension is
$A^{s_0},A^{s_1},\dots,A^{s_k},A^{s_{k+1}},$\\$A_{s_{k+2}},\dots$,
with $A^{s_{k+i}}=A^{e_i}$ for each $i{\geq}0$. Condition
(\ref{cond-extra-auto1}) is thus particularly designed to hold for
sampling sequences where each element can be obtained from the
previous one by a single application of the transducer $T$. Indeed,
the condition can be read as follows: {\em each automaton $A^{e_i}$ in
  the extended sampling sequence can be obtained by applying $T$ to an
  element that appears before in the sequence}. If more applications
of $T$ are needed, then we may have to adapt the condition. This is
illustrated with the following example.

\begin{example}
If each element in the sampling sequence is obtained by applying the
transducer $T$ $k>1$ times to the previous element in the sequence,
then one can test whether the following condition holds
\begin{equation}
\forall w, {\forall} i >0 \ {\lbrack}w \in L(A^{e_i})\setminus
L(A^{e_0}) \ {\Rightarrow} \ {\exists}0\,{\leq}\,j<i, w \in
L(T^k(A^{e_j})){\rbrack}.
\end{equation}
\noindent
rather than to check Condition (\ref{cond-extra-auto1}).
\end{example}

\noindent
This observation states for sampling sequences where the number of
applications of $T$ needed to build each element from the previous one
is constant. In \cite{Leg07}, we proposed another approach that
consists in associating to each state of the system an integer
variable that counts the number of applications of the reachability
relation needed to reach this state from the initial set of
states. Using this ``counter variable'', we can propose a preciseness
criterion whose induction is based on the number of applications of
the reachability relation rather than on the position in the sampling
sequence. Contrary to the techniques presented in this section, the
counters are no longer introduced during the extrapolation process,
but are present in all the steps of the computation. This is a ``key
point'' to ensure the preciseness when considering a nonlinear
sampling sequence, but this clearly influence the extrapolation
process and the increments detection. As observed in \cite{Leg07},
this approach is of particular interest when dealing with systems that
manipulate integer/real variables. However, the solution in
\cite{Leg07} is not a panacea. Indeed, as an example, it is known that
the transitive closure of the relation ${\lbrace}(x,2x){\rbrace}$ in
basis $2$ is regular, but the transitive closure of the relation
${\lbrace}((x,y),(2x,y+1)){\rbrace}$ is not regular.

\section{Implementation and Experiments}
\label{section-implementation}

This section briefly discusses an implementation of our results as
well as the experiments that have been conducted.

\subsection{Heuristics}
\label{heuristics}

Implementing the technique presented in this paper requires
potentially costly composition and determinization procedures. In
\cite{BLW03,BLW04a,Leg07}, we proposed two heuristics that, in some
situations, reduced to computation time from days to
seconds. Experimental results, which are presented in Chapter $7$ of
\cite{Leg07}, show that those heuristics are particularly useful when
working with arithmetic systems.

\subsection{The T(O)RMC Toolset}
\label{the-tool}

The results presented in this paper have been implemented in the {\em
  T(O)RMC} (states for {\em Tool for ($\omega$-)Regular Model
  Checking}) toolset\,\cite{Leg08}, which relies on the {\em LASH
  Toolset}\,\cite{LASH} for automata manipulations.

The LASH toolset is a tool for representing infinite sets and
exploring infinite state spaces. It is based on finite-state
representations, which rely on finite automata for representing and
manipulating infinite sets of values over various data domains. The
tool is composed of several C functions grouped into packages. The
LASH toolset implements several specific algorithms for solving the
($\omega$-)regular reachability problems of several classes of
infinite-state systems, which include FIFO-queue
systems\,\cite{BG96,BGWW97}, systems with integer
variables\,\cite{Boi03}, and linear hybrid
systems\,\cite{BHJ03,BH06}.\\
\newline
T(O)RMC extends the LASH toolset with the generic algorithm presented
in this paper. Contrary to the specific algorithms of LASH, the
algorithm of T(O)RMC is applicable to any system that can be
represented in the ($\omega$-)Regular Model Checking framework. This
makes it possible to handle classes of infinite-state systems that are
beyond the scope of specific algorithms, e.g., parametric
systems. T(O)RMC is divided into three packages, which are briefly
described hereafter.

\begin{enumerate}
\itemsep0cm
\item
{\em The transducer package} that provides data structures and
algorithms to manipulate transducers (composition, image computation,
$\dots$). The package also provides several heuristics to improve the
efficiency of the operations.
\item
{\em The extrapolation package} for detecting increments in a sequence
of automata, and extrapolating a finite sampling sequence. The tool
allows the user to precise (1) which sampling strategy has to be used,
and (2) how to build the successive elements in the infinite sequence.
\item
{\em The correctness package} that provides data structures and
algorithms to check the correctness of the extrapolation for several
classes of problems. The package also contains all the data structures
and algorithms to manipulate counter-word automata.
\end{enumerate}

T(O)RMC can be used to compute an extrapolation of a possibly infinite
sequence of automata $S=A^1, A^2,\dots$ . For this, the user has to
provide the following two functions:

\begin{itemize}
\item
A function named {\it SAMPLING} that takes as arguments two integers
$i$ and $j$. Each time T(O)RMC calls the function, it sets $i$ and $j$
to the indexes of two automata $A^i$ and $A^j$, such that $A^j$ is
incrementally larger than $A^i$. The function returns an automaton
$A^k$ which is assumed, by the user, to be the next automaton in a
sampling sequence whose two last elements are $A^i$ and $A^j$.
\item 
A function named {\it CHECK} that takes as argument an automaton
$A^{e_*}$. If the function returns yes, then T(O)RMC assumes that
$A^{e_*}$ is the extrapolation expected by the user. This is this
function that implements the checks for safety and preciseness.
\end{itemize}

\noindent
To extrapolate the infinite sequence of automata $S$,
T(O)RMC behaves as follows:

\begin{enumerate}
\item
T(O)RMC computes finite prefixes of $S$ until it finds two automata
$A^i$ and $A^{j}$ such that $A^{j}$ is incrementally larger than
$A^i$.
\item
T(O)RMC then tries to compute an incrementally growing sampling
sequence $S_I$, assuming that the two first elements of this sequence
are $A^i$ and $A^{j}$. The automata are added one by one to the
sampling sequence, using the function {\it{SAMPLING}}. Each time a new
automaton is added, the tool checks whether $S_I$ is still
incrementally growing. If no, then T(O)RMC goes back to point (1) and
consider a prefix of a longer size. If yes, then T(O)RMC extrapolates
$S_I$ and produces an automaton $A^{e_*}$. This extrapolation is
followed by a call to the function {\it CHECK} on $A^{e_*}$. If the
function returns yes, then the computation terminates, and $A^{e_*}$
is the automaton returned by the tool. If the function returns no,
then the tool tries to increase $S_I$ by adding one more automaton.
\end{enumerate}

\subsection{A brief Overview of the Experiments}

The T(O)RMC toolset has been applied to more than $100$ case
studies. This section only briefly recaps the classes of problems for
which T(O)RMC has been used so far. Details about the experiments
(including performances in terms of time and memory, which vary from
examples to examples) can be found in Chapters $7$ and $13$ of
\cite{Leg07}. 

We first used T(O)RMC to compute an automata-based representation of
the set of reachable states of several infinite-states systems,
including parametric systems, FIFO-queue systems, and systems
manipulating integer variables. Others experiments concerned the
computation of the transitive closure of several arithmetic
relations. It is worth mentioning that the disjunctive nature of some
relations sometimes prevents the direct use of specific domain-based
techniques~\cite{FL02,BH06}.  We also applied T(O)RMC to the
challenging problem of analyzing linear hybrid systems. One of the
case studies consisted of computing a precise representation of the
set of reachable states of several versions of the {\em leaking gas
  burner\/}. To the best of our knowledge, only the technique in
\cite{BH06} was able to handle the cases we considered. Among the
other experiments, we should also mention the computation of the set
of reachable states of an augmented version of the IEEE Root
Contention Protocol\,\cite{Leg07}, which has been point out to be a
hard problem\,\cite{SS01}.  The ability of T(O)RMC to compute the
limit of an infinite sequence of automata has other applications. As
an example, the tool has been used in a semi-algorithm to compute the
convex hull of a set of integer vectors \cite{CLW07,CLW08}. T(O)RMC
was also used to compute a symbolic representation of the simulation
relation between the states of several classes of infinite-state
systems\,\cite{BLW04b}.

The main goal of T(O)RMC is not performance improvement, but to allow
experimentation with automata sequence extrapolation in a variety of
context that goes beyond ($\omega$-)regular model checking
problems. As such T(O)RMC is slower than tools that are specific to
solving such model checking problems for the arithmetic domain
(e.g. FAST\,\cite{BLP06}, LIRA\,\cite{BDEK07}, LASH), but is perfectly
competitive when handling other regular model checking cases
(parametric systems, FIFO-queue systems, ...)\,\cite{RMC,VV06}.
T(O)RMC relies on LASH for automata manipulations. The LASH toolset is
oriented towards experimentation. It is thus less efficient for
manipulating automata representing sets of real/integer numbers than
LIRA and FAST that are oriented towards performances.

%%The performances of
%%T(O)RMC could be improved if one added an interface to replace the
%%LASH engine by LIRA, FAST, or any other tool depending of the nature
%%of the system being under consideration.

\section{A Brief Comparison with other Works}
\label{section-comparison}

In this section, we briefly compare our approach with other generic
techniques for solving the ($\omega$-)Regular Reachability Problems.\\
\newline
The Regular Model Checking framework has first been proposed in
\cite{KMMPS97} as a uniform paradigm for algorithmic verification of
parametric systems. The contributions in \cite{KMMPS97} are an
automata-based representation of parametric systems and an algorithm
to compute the transitive closure of the finite-word transducer
representing the reachability relation of such systems. One major
difference with our work is that the construction in \cite{KMMPS97}
can only be applied to a very specific class of finite-word
transducers.

In \cite{BJNT00,AJNd03}, Nilsson et al. proposed several
simulation-based techniques that, given a finite-word transducer $T$,
compute a finite-state representation for $T^+$. The core idea of
those techniques is to iterately compute the successive unions
$T^{{\leq}1},T^{{\leq}2},T^{{\leq}3},{\dots}$ (where
$T^{{\leq}i}={\bigcup_{n=1}^{i}}T^n$) and collapsing progressively
their states according to an equivalence relation, which is induced by
the simulation relations. The results of \cite{BJNT00,AJNd03} have
been implemented in a tool called the RMC toolset (states for Tool for
Regular Model Checking)\,\cite{RMC}, and tested on several parametric
and queue systems for which good results have been
obtained\,\cite{Nil05}. Unfortunately, it seems that the relations
used to merge the states of the successive unions have been designed
to handle parametric and queue systems only. To the best of our
knowledge, the RMC toolset cannot be used with other classes of
systems such as linear integer systems. In \cite{DLS02}, Dams,
Lakhnech, and Steffen proposed a non-implemented simulation-based
technique to compute $T^+$. This technique is similar to those
proposed in \cite{BJNT00,AJNd03}.

In \cite{Tou01,Tou03}, Touili proposed another extrapolation-based
technique to solve the Regular Reachability Problems. The results
presented in this paper share some notions with those in
\cite{Tou01,Tou03}. Indeed, the core idea in the work of Touili is to
compute an extrapolation of a finite-word transducer by comparing a
finite prefix of its successive powers, trying to detect increments
between them. One major drawback of Touili's work, which is not
implemented, is that no efficient method is provided to detect the
increments. There is no methodology to test whether the extrapolation
is precise or not. It is however easy to see that our preciseness
criterion directly adapts to Touili's extrapolation procedure.

In \cite{VSVA04,Vard06}, Vardhan et al. apply machine learning
techniques from \cite{Ang87,RS93} to learn a finite-word automaton
that represents the set of reachable states of a regular system. The
results in \cite{VSVA04,Vard06} have been implemented in a tool called
LEVER\,\cite{VV06}, which has been applied to FIFO-queue and linear
integer systems. A drawback with this approach is that it requires the
addition of witness variables that may break the regularity of the set
of reachable states. We also mention that in \cite{HV04}, Habermehl et
al. also proposed to use a learning-based approach to compute the set
of reachable states of several parametric systems.

Finally, even if they do not consider exactly the same problem as us,
it is relevant to mention a series of recent work~\cite{BHV04,BHMV05}
that combine abstraction-based techniques with automata-based
constructions to verifying reachability properties. Those works have
been shown to be particularly efficient for parametric and queue
systems\,\cite{BHV04} as well as for systems manipulating
pointers\,\cite{BHMV05}. On the other hand, one dedicated abstraction
is needed for each class of system, while our extrapolation-based
technique is designed to be applicable on any system that can be
represented by a ($\omega$-)regular system.

%%\subsection{Implementation}

%%We did not expend on t

%%As T(O)RMC, the implementations of
%%\cite{BJNT00,AJNd03,BHV04,BHMV05,VSVA04,Vard06,VV06} are generally
%%slower when being compared with specific tools. T(0)RMC can also be
%%slower than the implementations of
%%\cite{BJNT00,AJNd03,BHV04,BHMV05,VSVA04,Vard06,VV06}, but those
%%implementations are purely regular model checking based ones, while
%%T(O)RMC is designed to solve a broader class of problems.

\section{Conclusion and Future Work}
\label{section-conclu}

In this paper, we have introduced an extrapolation-based technique for
solving the ($\omega$-)Regular Reachability Problems. The approach
consists in computing the limit of an infinite sequence of minimal
finite-word (respectively, minimal weak B\"uchi) automata by
extrapolating a finite sampled prefix of this sequence, i.e., selected
automata from a prefix of the sequence. The technique does not
guarantee that a result will be obtained, and correctness of the
guessed extrapolation needs to be checked once it is obtained. Our
results have been implemented in a tool called T(O)RMC, which has been
applied to several case studies.

One possible direction for future work would be to extend the
increment detection procedure described in Section
\ref{section-increment}. Indeed, as it is illustrated with the
following example, the procedure is not able to detect all possible
forms of increment.

\begin{figure}[t]
\centering 
\subfigure[\mbox{$A_1$}]{
\label{auto1}
\includegraphics[width=4cm]{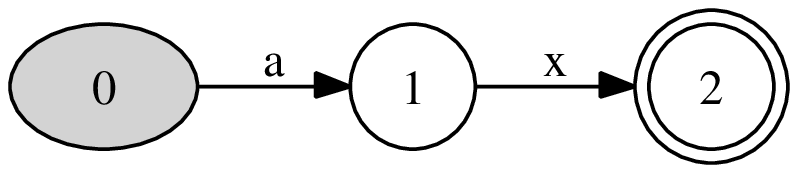}
}
\hspace{2em}
\subfigure[\mbox{$A_2$}]{
\label{auto2}
\includegraphics[width=4cm]{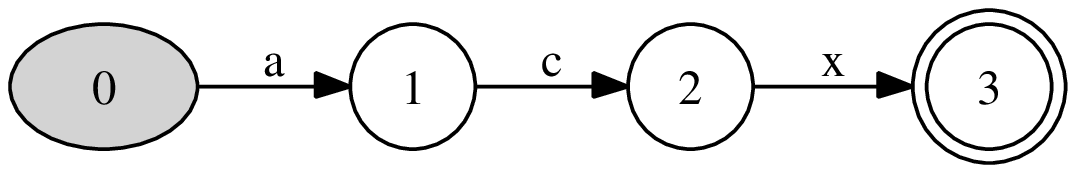}
}
\hspace{2em}
\subfigure[\mbox{$A_3$}]{
\label{auto3}
\includegraphics[width=4cm]{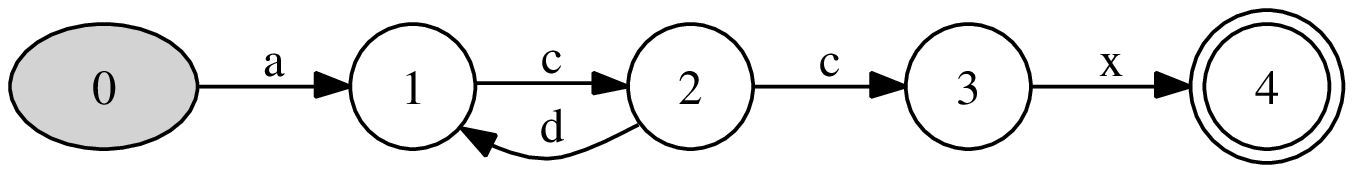}
}
\caption{Automata for Example \ref{example-observ2}.}
\label{observ2conclu-fig}
\end{figure}

\begin{example}
\label{example-observ2}
Consider the finite-word automata given in Figure
\ref{observ2conclu-fig}. The automaton $A_2$ differs from the
automaton $A_1$ by the addition of an increment, which is represented
by state $1$. If we compare $A_2$ and $A_3$, we see the addition of
one more increment. Clearly, $A_3$ differs from $A_1$ by the addition
of two increments represented by states $1$ and $2$. Unfortunately, in
$A_3$, the increment detected between $A_2$ and $A_3$ (state $2$ of
$A_3$) is the origin of a transition whose destination is the
increment detected between $A_1$ and $A_2$ (state $1$ of $A_3$). Such
a situation cannot be captured with the technique introduced in
Section \ref{section-increment}.
\end{example}

We could also investigate whether it is possible to detect the
repetition of different increment patterns in the same automaton. As
an example, the automata representing $ab$, $aabb$, $aaabbb$,
... differ by the repetitions of the symbols $a$ and $b$. If we
separately close those repetitions, we will obtain an automaton that
represents $a^+b^+$. This language, which is an over approximation of
the ``correct'' closure (i.e., $a^nb^n$ ($n\in \nats_0$)), may be
sufficient for practical applications. Another interesting direction
would be to extend our results to other classes of automata, which
includes tree and pushdown automata.

Another interesting direction would be to extend our results to other
classes of systems such as visibly pushdown systems\,\cite{AM04}. We
could isolate a class of systems for which we can always compute a
safe and precise extrapolation.

Finally, it would be of interest to extend ($\omega$-)Regular Model
Checking to the verification of {\em Open systems}. As opposed to
state-transition systems, open systems are systems whose behavior
depends on an external environment. In a series of fairly recent
papers, symbolic games\,\cite{ABd03,AHM01,BCFL05} have been proposed
as a general framework to specifying finite-state Open
systems\,\cite{AASFLRR06,AH01,ASFLRS05}. We believe that our work
could help to extending this approach to infinite-state open systems.

%%The extrapolation process can definitively be improved on the basis of
%%the result of preciseness check. As an example, if the preciseness
%ùcriterion fails because a word associated to a counter value $5$
%%cannot be generated by a word associated to a counter value $4$, then
%%we certainly need to compute the five first automata of the sequence
%%before starting the extrapolation process.

%% Another interesting direction would be to extend our results to other
%% classes of automata, which includes tree and pushdown automata.

%£As observed in \cite{FP01}, those extensions
%%could be useful to verify other parametric systems, such as the {\em
%%  Peterson mutual exclusion protocol}.

%%We did not characterize the classes of systems for which we can
%%always compute a safe and precise extrapolation.

%% Attempting to verify infinite-state systems while working exclusively
%% with au\-tomata-theoretic representations and algorithms can appear as
%% a somewhat qui\-xot\-ic endeavor. However, practical results clearly
%% shown their interest, and are thus a motivation for new developments.

\section*{Thanks}

We thank Bernard Boigelot for a fruitful collaboration on preliminary
versions of this work. We also thank Marcus Nilsson, Julien d'Orso,
Parosh Abdulla, Sébastien Jodogne, Elad Shahar, Martin Steffen,
Tayssir Touili, and Mahesh Viswanathan for answering many questions
regarding their works and case studies.

\bibliographystyle{acmtrans}
\bibliography{paper}
\end{document}